\documentclass[toc=listof,toc=bibliography,10pt,table]
{scrartcl}
\makeatletter
\DeclareOldFontCommand{\rm}{\normalfont\rmfamily}{\mathrm}
\DeclareOldFontCommand{\sf}{\normalfont\sffamily}{\mathsf}
\DeclareOldFontCommand{\tt}{\normalfont\ttfamily}{\mathtt}
\DeclareOldFontCommand{\bf}{\normalfont\bfseries}{\mathbf}
\DeclareOldFontCommand{\it}{\normalfont\itshape}{\mathit}
\DeclareOldFontCommand{\sl}{\normalfont\slshape}{\@nomath\sl}
\DeclareOldFontCommand{\sc}{\normalfont\scshape}{\@nomath\sc}
\makeatother


\usepackage{amssymb}
\usepackage{latexsym}
\usepackage{amsmath}
\usepackage{enumerate}
\usepackage{amsthm}
\usepackage{wasysym}
\usepackage{stmaryrd}
\usepackage{mathrsfs}
\usepackage{pifont}
\usepackage{colortbl}
\usepackage[usenames,table]{xcolor}

\usepackage{tikz}
\usepackage{cases}

\makeatletter
\def\thm@space@setup{%
  \thm@preskip=\parskip \thm@postskip=0pt
}
\makeatother


\newcommand{\qee} {\hspace*{2mm}\hfill \ding{109}}

\renewcommand{\iff}{\leftrightarrow}
\renewcommand{\phi}{\varphi}

\definecolor{uuxgreen}{cmyk}{1,0,0.75,0}
\definecolor{uuxred}{cmyk}{0.2,1,0.9,0.1}
\definecolor{uuyblue}  {cmyk}{0.9,0.55,0,0}


\newtheorem{theorem}{Theorem}[section]
\newtheorem{define}[theorem]{Definition}
\newenvironment{definition}{\begin{define} \rm}{\qee\end{define}}
\newtheorem{exa}[theorem]{Example}

\newenvironment{exampleno}{\begin{exa} \rm}{\end{exa}}
\newtheorem{exerc}[theorem]{Exercise}

\newtheorem{conj}[theorem]{Conjecture}

\newtheorem{ques}[theorem]{Open Question}
\newenvironment{question}{\begin{ques} \rm}{\qee\end{ques}}
\newtheorem{lem}[theorem]{Lemma}
\newenvironment{lemma}{\begin{lem} \it}{\end{lem}}
\newtheorem{cor}[theorem]{Corollary}
\newenvironment{corollary}{\begin{cor} \it}{\end{cor}}
\newtheorem{rem}[theorem]{Remark}
\newenvironment{remark}{\begin{rem} \rm}{\qee\end{rem}}

\DeclareMathOperator{\possible}{\text{\tikz[scale=.6ex/1cm,baseline=-.6ex,rotate=45,line width=.1ex]{
                            \draw (-1,-1) rectangle (1,1);}}}
\DeclareMathOperator{\necessary}{\text{\tikz[scale=.6ex/1cm,baseline=-.6ex,line width=.1ex]{
                            \draw (-1,-1) rectangle (1,1);}}}

\DeclareMathOperator{\dotnecessary}{\text{\tikz[scale=.6ex/1cm,baseline=-.6ex,line width=.1ex]{
                            \draw (-1,-1) rectangle (1,1);  \draw[fill=black] (0,0) circle (.25);}}}


 \newcommand{\tupel}[1]{{\langle #1 \rangle}}
\newcommand{\verz}[1]{\{ #1 \}}
\newcommand{\To}{\Rightarrow}

\newcommand{\hyph}{\mbox{-}}

\newcommand{\gnum}[1]{{\ulcorner #1 \urcorner}}

\newcommand{\apr}{\vartriangle}

\newcommand{\opr}{\necessary}
\newcommand{\dotbox}{\dotnecessary}

\newcommand{\oco}{\possible}



\usepackage{wasysym}

\usepackage{proof}
\usepackage{parskip}
\usepackage{diagrams}

\usepackage{url}

\newcommand{\qedright}{\belowdisplayskip=-12pt }

\DeclareSymbolFont{extraup}{U}{zavm}{m}{n}
\DeclareMathSymbol{\vardiamond}{\mathalpha}{extraup}{87}
 \DeclareSymbolFont{symbolsC}{U}{txsyc}{m}{n}
\DeclareMathSymbol{\strictif}{\mathrel}{symbolsC}{74}
\DeclareMathSymbol{\strictfi}{\mathrel}{symbolsC}{75}
\DeclareMathSymbol{\strictiff}{\mathrel}{symbolsC}{76}
\newcommand{\tto}{\strictif}
\newcommand{\ifff}{\strictiff}

\renewcommand{\phi}{\varphi}

\definecolor{uuxgreen}{cmyk}{1,0,0.75,0}
\definecolor{uuxred}{cmyk}{0.2,1,0.9,0.1}
\definecolor{uuyblue}  {cmyk}{0.9,0.55,0,0}
\definecolor{airforceblue}{rgb}{0.36, 0.54, 0.66}
\definecolor{brickred}{rgb}{0.8, 0.25, 0.33}
\definecolor{ao}{rgb}{0.0, 0.0, 1.0}
\definecolor{cobalt}{rgb}{0.0, 0.28, 0.67}

\definecolor{light-gray}{gray}{0.92}



\newcommand{\quatb}{\ensuremath{4_{\opr}}}
\newcommand{\loebb}{\ensuremath{{\sf L}_{\opr}}}

\newcommand{\di}{{\sf Di}}
\newcommand{\na}{\ensuremath{\sf N_a}}
\newcommand{\ka}{\ensuremath{\sf K_a}}

\newcommand{\quata}{\ensuremath{4_{\sf a}}}
\newcommand{\quatao}{\ensuremath{4^\circ_{\sf a}}}
\newcommand{\qquatao}{\ensuremath{44^\circ_{\sf a}}}
\newcommand{\loeba}{\ensuremath{{\sf L_a}}}
\newcommand{\loebao}{\ensuremath{{\sf L^\circ_a}}}
\newcommand{\sloebra}{\ensuremath{{\sf sLR_a}}}
\newcommand{\tr}{{\sf Tr}}
\newcommand{\jv}{{\sf JV}}
\newcommand{\js}{{\sf JS}}
\newcommand{\weak}{{\sf W}}
\newcommand{\weako}{\ensuremath{{\sf W}^\circ}}
\newcommand{\weakst}{\ensuremath{{\sf W}^\ast}}
\newcommand{\pers}{{\sf P}}

\newcommand{\ipc}{{\sf IPC}}
\newcommand{\cpc}{{\sf CPC}}
\newcommand{\iam}{\ensuremath{{\sf iA}^{-}}}
\newcommand{\ia}{\ensuremath{{\sf iA}}}
\newcommand{\iglam}{\ensuremath{{\sf iGL}_{\sf a}^{-}}}
\newcommand{\iglamo}{\ensuremath{{\sf iGL}_{\sf a}^{\circ-}}}
\newcommand{\iglbm}{\ensuremath{{\sf iGL}_{\opr}^{-}}}
\newcommand{\iglpm}{\ensuremath{{\sf iGLP}^{-}}}
\newcommand{\iglwm}{\ensuremath{{\sf iGLW}^{-}}}
\newcommand{\iglwmo}{\ensuremath{{\sf iGLW}^{\circ-}}}

\newcommand{\igla}{\ensuremath{{\sf iGL}_{\sf a}}}

\newcommand{\iglb}{\ensuremath{{\sf iGL}_{\opr}}}
\newcommand{\iglp}{\ensuremath{{\sf iGLP}}}
\newcommand{\iglw}{\ensuremath{{\sf iGLW}}}

\newcommand{\app}[3]{(\lambda #1.#2)#3}


\newcommand{\stex}[2]{#2^{\lceil #1 \rceil}}
\newcommand{\thad}{\oplus}


\newcommand{\lna}[1]{\ensuremath{\mathsf{#1}}}

\usepackage[final]{fixme} 
\FXRegisterAuthor{av}{aav}{AV}
\FXRegisterAuthor{tmil}{atmil}{TL minor}
\FXRegisterAuthor{tl}{atl}{TL major}
\definecolor{mydarkgreen}{rgb}{0,0.34,0}
\newcommand{\tlnt}[1]{\tmilnote[inline,marginclue]{\textcolor{mydarkgreen}{#1}}}
\newcommand{\alnt}[1]{\avnote[inline,marginclue]{\textcolor{cobalt}{#1}}}

\newcommand{\takeout}[1]{}
\newcommand{\rfsc}[1]{\S\,\ref{#1}}
\newcommand{\rfs}[1]{\S\,#1}
\newcommand{\rfse}[1]{\S\,\ref{sec:#1}}

\newcommand{\ma}[1]{\mathfrak{#1}}
\newcommand{\deq}{:=}
\newcommand{\la}{\langle}
\newcommand{\ra}{\rangle}

\newcommand{\strictp}{\ensuremath{\strictif}-p\,}


\newcommand{\comp}{\!\cdot\!}
\newcommand{\kmodels}{\Vdash}

\usepackage{hyperref}
\newcommand{\lnref}[1]{

\newcommand{\subl}{{\sf a}}
\newcommand{\lb}{\lnref{\lna{Box}}}
\newcommand{\biii}{\lnref{4_{\opr}}}
\newcommand{\lv}{\lnref{4_{\subl}}}
\newcommand{\biv}{\lnref{{\sf L}_{\opr}}}
\newcommand{\lvii}{\lnref{{\sf W}}}
\newcommand{\lS}{\lnref{\lna{S}}}
\newcommand{\bk}{\lnref{\lna{{\mathrm i}\hyph BoxA}}}%
\newcommand{\ws}{\lna{{\mathrm i}\hyph SA}}
\newcommand{\loglb}{\lnref{{\sf {\mathrm i}A}}}
\newcommand{\loglg}{\lnref{{\mathrm i}\hyph{\sf GL}_{\subl}}}
\newcommand{\loglh}{\lnref{{\mathrm i}\hyph{\sf GW}}}

\usepackage[all]{xy}
\xyoption{2cell}
\xyoption{curve}
\UseTwocells
\SelectTips{cm}{}

\newcommand{\bro}{\textup{(}}
\newcommand{\brc}{\textup{)\,}}
\newcommand{\tuc}[2]{\textup{\cite[#1]{#2}}}


\newcommand{\nei}{\mathfrak{N}}

\newcommand{\ups}[1]{\mathtt{Up}^{\preceq}(#1)}
\newcommand{\pow}[1]{\mathtt{Pow}(#1)}

\newcommand{\sembr}[2]{[\!\![#1]\!\!]_{\ma #2}}

\newcommand{\semb}[1]{[\!\![#1]\!\!]}

\newcommand{\nk}[1]{\, \nei #1 \,}

\newcommand{\sqk}[1]{\sqsubset_{#1}}

\newcommand{\imag}[2]{{#1}_{#2}}
\newcommand{\velt}[1]{{\ma #1}^{\mathsf{ve}}}



\newcommand{\gA}{\mathfrak{A}}
\newcommand{\gB}{\mathfrak{B}}
\newcommand{\gC}{\mathfrak{C}}

\newcommand{\gF}{\mathfrak{F}}
\newcommand{\gG}{\mathfrak{G}}

\newcommand{\bao}{\ensuremath{\textsc{bao}}}
\newcommand{\baos}{\ensuremath{\textsc{bao}}s}

\newcommand{\hae}{\ensuremath{\textsc{hae}}}
\newcommand{\haes}{\ensuremath{\textsc{hae}}\textup{s}}

\newcommand{\thaes}{\ensuremath{\tto\!\hyph\textsc{hae}}\textup{s}}
\newcommand{\dihaes}{\ensuremath{\di\hyph\textsc{hae}}\textup{s}}
\newcommand{\stp}[2]{#2^{[ #1 ]}}

\newcommand{\sfl}{\lna{S4Lew}} 
\newcommand{\wzbm}{\lna{BM}} 

\newcommand{\pf}[1]{PF(#1)}
\newcommand{\adm}[1]{P_{#1}}

\newcommand{\eqc}[1]{[#1]}

\newcommand{\ha}{\ensuremath{\mathsf{HA}}}

\newcommand{\logt}[1]{\Lambda_F(#1)}
\newcommand{\logpt}[1]{\Lambda^{\circ}(#1)}
\newcommand{\logptd}[1]{\Lambda^{\circ}_{\Delta}(#1)}

\newcommand{\lpl}{\mathbf{\oplus}}

\newcommand{\lgax}[1]{\Lambda^-(#1)}

\newcommand{\refeq}[1]{\ensuremath{\textup{(\ref{eq:#1})}}}
\newcommand{\refe}[1]{\ensuremath{\textup{(\ref{#1})}}}

\newcommand{\Mod}[1]{\mathsf{Mod}(#1)}
\newcommand{\Th}[1]{\mathsf{Th}(#1)}


\usepackage{catchfilebetweentags}  

\makeatletter
\newif\if@conf
 \@conffalse   
 \ProcessOptions*\relax
\let\ifconf\if@conf

\newcommand{\confbl}{See appendix/technical report.}
\newcommand{\neskip}{\ifconf\else\newline\fi}


\usepackage{listings}

\lstdefinelanguage{mace}{morekeywords={formulas,assumptions,goals,end_of_list,
  PROOF, end, THEOREM, PROVED, of,statistics,length,level,search},
  moredelim=**[s][\rmfamily\itshape]{[}{]},
  basicstyle=\ttfamily,
  morekeywords= [2]{para,
  rewrite, flip, back_rewrite, copy, deny,resolve,assumption,goal,label,non_clause}, 
  sensitive=false}
\lstnewenvironment{macecode}
  {\lstset{keywordstyle={\color{black}\sffamily\bfseries},keywordstyle={[2]\itshape\sffamily},keywordstyle={[3]\color{green}\sffamily},language={mace},breaklines=true,columns=fullflexible}}
  {}


\usepackage{multirow}
\usepackage{tabularx}
\newcolumntype{L}[1]{>{\raggedright\let\newline\\\arraybackslash\hspace{0pt}}m{#1}}
\newcolumntype{C}[1]{>{\centering\let\newline\\\arraybackslash\hspace{0pt}}m{#1}}
\newcolumntype{R}[1]{>{\raggedleft\let\newline\\\arraybackslash\hspace{0pt}}m{#1}}

\title{Lewisian Fixed Points I:  \\ Two Incomparable Constructions}
  \author{Tadeusz Litak \and Albert Visser}
 \titlehead{Informatik 8, FAU Erlangen-N\"{u}rnberg,  Germany \url{tadeusz.litak@fau.de} \newline
 
     Philosophy,  Faculty of Humanities, 
                Utrecht University,  \newline
               Janskerkhof 13,
                3512BL~~Utrecht,  The Netherlands  \url{a.visser@uu.nl}}

\date{}

\begin{document}
\maketitle


\begin{abstract}
Our paper is the first study of what one might call ``reverse mathematics of explicit fixpoints''. We study two methods of constructing such fixpoints for formulas whose principal connective is the intuitionistic 
Lewis arrow $\tto$. Our main motivation comes from metatheory of constructive arithmetic, but the systems in question allows several natural semantics. 
  The first of these methods, inspired by de Jongh and Visser, 
turns out to yield a well-understood modal system \iglam.  The second one by de Jongh and Sambin, seemingly simpler, leads to a 
modal theory  $\iam\thad\js$, which proves harder to axiomatize in an elegant way. Apart from showing that both theories 
are incomparable, we axiomatize their join and investigate several subtheories, whose axioms are obtained as fixpoints of simple formulas. 
 We also show that both \iglam and  $\iam\thad\js$ are \emph{extension stable}, that is, their validity in the corresponding preservativity  logic of a given arithmetical theory transfer to its finite extensions. 
\end{abstract} 

\tableofcontents

\section{Introduction}

\emph{Provability logic} studies propositional and algebraic aspects of arithmetical theories,  their provability predicates and reflection principles. 
 Thanks to Solovay's arithmetical completeness result \cite{solo:prov76}, we know that the provability predicate of Peano Arithmetic
 \cite{smor:self85,bool:emer91,Boolos1993,lind:prov96,japa:logi98,svej:prov00,arte:prov04,halb:henk14}    yields precisely the famous system \lna{GL}, also known as the (\emph{G\"odel}-)\emph{L\"ob} logic, obtained from the minimal unimodal normal logic by adding the principle $\opr(\opr\phi \to \phi) \to \opr\phi$.  One of the most important facts about \lna{GL}\ is that it allows definability of explicit fixpoints. That is, given any polynomial $\phi(p)$ where all occurrences of $p$ are guarded by $\opr\,$, one can use de Jongh-Sambin algorithm to compute a formula $\chi$ not involving $p$ and $\lna{GL} \vdash \chi \leftrightarrow \phi(\chi)$; \tlnt{we should actually state what $\vdash$ stands for!}
 furthermore, $\chi$ thus computed is unique up to propositional equivalence (de Jongh, Sambin  \cite{samb:effe76}, Bernardi \cite{bern:uniq76}).  Actually, \lna{GL}\ is obtained precisely as the smallest extension of \lna{K4} (i.e.,
 the logic of the transitivity axiom $\opr \phi \to \opr\opr \phi$) in which guarded fixpoints are definable. 
 This follows immediately from the fact that, by L\"ob's argument, L\"ob's principle is entailed by the presence of
 guarded fixed points in combination with the de Jongh-Sambin result.

This result encodes the algebraic content of the L\"ob Theorem and G\"odel's Second Incompleteness Theorem. The modal analysis gives us the conceptual resources 
 to say that the consistency statement
 is the explicit form of the G\"odel sentence.  
  More mundanely, it can be seen as elimination of fixpoint operator.  The original statement is restricted to guarded fixpoints, but one can indeed extend this classical result to  elimination of positive  fixpoints of ordinary $\mu$-calculus  \cite{bent:moda06,viss:lobs05} and further beyond (see \rfsc{sec:conclusions}).  Given the Kripkean meaning of \lna{GL}\ as the logic of Noetherian (conversely well-founded) transitive frames, such results in turn have found applications, e.g., in characterizing expressivity of XPath fragments \cite[\rfs{3.1}]{CateFL10:jancl}. The Sambin-de Jongh result has inspired Nakano's seminal work on \emph{modality for recursion} \cite[\rfs{7}]{Nakano00:lics}, \cite[\rfs{5}]{Nakano01:tacs}. Last, but definitely not the least, it can be used to prove other metaresults about \lna{GL}, such as the Beth definability property, as observed first by Maximova \cite{Maksimova1989,ArecesHJ98,hoog:defi01,iemh:prop05}   (cf. \rfsc{sec:beth}).


What happens when we broaden the investigation beyond  the classical base and unary provability $\opr$\,? Regarding the former restriction, already Sambin's 1976 paper \cite{samb:effe76} noted that the fixpoint theorem works over intuitionistic propositional calculus (\ipc). Unfortunately, despite decades of efforts  \cite{viss:prop94,Iemhoff01:phd,iemh:moda01,viss:close08,arde:sigm14}, there is no known axiomatization of the provability logic of Heyting Arithmetic (\lna{HA}) and related systems; it is a system much stronger than intuitionistic \lna{GL}, including principles such as \[\opr(\neg\neg\, \opr \phi \to \opr\phi)\to \opr\opr\phi \] 
underivable even in classical \lna{GL}, as classically it implies 
 $\opr\opr\bot$ (cf. \cite[\S 5.3]{LitakV18:im} for more examples). 
 The algebraic core of a weak theory can include powerful schemes refutable in a stronger theory enjoying a different provability predicate. This phenomenon is often caused, e.g., by the fact 
   that the weaker theory is closed under some translation method, whereas the stronger one is not.  

Allowing non-unary connectives opens up vast new landscapes, especially in the constructive setting. In our paper \cite{LitakV18:im}, apart from providing a general framework of \emph{schematic logics} \cite[\rfs{5.1}]{LitakV18:im}, we have made the case for the \emph{constructive strict implication} $\tto$, also called the \emph{Lewis arrow}.  It allows defining $\opr\phi$  as $\top \tto \phi$. We list its arithmetical interpretations in \rfsc{sec:arint}. The most important one  in the study of metatheory of {\sf HA} is provided by $\Delta$-preservativity for a theory $T$, where $\Delta$ is a class of 
 sentences, most commonly taken to be $\Sigma^0_1$  \cite{viss:eval85,viss:prop94,viss:subs02,iemh:pres03,iemh:prop05}.  Other ones include contraposed conservativity/interpretability \cite{bera:inte90,shav:rela88,japa:logi98,viss:over98,arte:prov04} \cite[\rfs{C.3}]{LitakV18:im},  the logic of 
admissible schemes or the $\boxdot$-provability interpretation. See \rfsc{sec:arint} for details. 
Almost needless to say, constructive arithmetic can be replaced by any other foundational theory rich enough for standard encodings of syntactic notions. There are also computational interpretations originating elsewhere, such as Hughes ``classical'' arrows in functional programming  \cite{Hughes00:scp,LindleyWY08:msfp} (cf. \cite[\rfs{7.1}]{LitakV18:im}).  All these interpretations extend the base system $\iam$ introduced in \rfsc{sec:logics}. Interestingly enough, not all of them validate principles like $\di$ (cf. \rfsc{sec:logics}) which hold in the standard Kripke semantics of $\tto$ (cf. \cite[\rfs{3}]{LitakV18:im} and \rfsc{sec:kripke}). \takeout{ hence one often resorts to 
 purely syntactic derivations or more liberal semantics.} 
   But what axioms do we need to ensure fixpoint results, and how can we compute these explicit fixpoints?

\subsection{Our Contributions} 


\tlnt{We still need to credit IDZ properly.}


The classical construction of explicit guarded fixed points proceeds in two stages. One first proves the result for formulas where the main
connective is the modal operator and then one shows how to extend the result to all modalized formulas (and possibly beyond that).
Our results directly concern the first step. See \rfsc{sec:conclusions}  for a brief discussion of the second one. 

We have two known paradigms for such a construction. 
 First, there is the original de Jongh-Sambin construction (see \rfse{js}) as generalized by 
  Smory\'nski \cite[Ch. 4]{smor:self85}. Secondly, there is the construction given by de Jongh and Visser \cite{dejo:expl91}
for the interpretability logic {\sf IL}  (see \rfsc{iglamjv}). 
  As it simplifies to the de Jongh-Sambin
construction when one adds the principle {\sf W} (see Figure 
\ref{tab:compl}), it seemed the \emph{master} construction.
 Our results show that de Jongh-Visser construction and the de Jongh-Sambin construction are
mutually incomparable. The incomparability result also holds in the classical case. 

Our paper is the first study of what one might call ``reverse mathematic of explicit fixpoints''. After discussing  algebraic and Kripke semantics for extensions of $\iam$ (\rfse{semantics}), we investigate the effect of adding explicit schemes stating that a given method (de Jongh-Visser or de Jongh-Sambin) indeed yields fixpoints of formulas whose principal connective is $\tto$. 
 This, however, requires a significant prerequisite: we noted above that the validity of a scheme in the logic (algebraic core) 
 of a given arithmetical theory (\rfse{arint}) does not need to transfer  to the logic of some given finite extension. As we show in a companion paper \cite{tlav19subf}, this holds if the base logic enjoys the property of \emph{extension stability}. 
After recalling this information (\rfsc{sec:exsta}), we show that the minimal theory in which the de Jongh-Visser construction works is the theory \iglam (\rfsc{iglamjv}), which is extension stable. Thus,
for the de Jongh-Visser construction we have a precise analogue of L\"ob's Logic.  
In \rfsc{sec:js} we show that the case of the de Jongh-Sambin
construction is more complex 
 and show the incomparability of its theory ($\iam\thad\js$, which also turns out to be extension stable) with \iglam. In \rfse{join} we axiomatize the join of both theories.  In \rfsc{sec:jscloser} we investigate several subtheories of $\iam\thad\js$, whose axioms are obtained as de Jongh-Sambin fixpoints of simple formulas. A large part of our results on axiomatizing explicit fixpoints is concisely summarized  by Figure \ref{fig:logics} therein. 
 In \rfse{corresp} we present Kripke semantics for some principles investigated in earlier sections   and uncover a simple nonconservativity phenomenon.

\section{Basics}  \label{sec:basics}  \label{sec:logics}
%

Our basic system is \iam in  the language of intuitionistic propositional calculus ({\ipc})  extended with a binary 
connective $\tto$. 
We write $\opr\phi$ for $\top\tto \phi$.
 The system  is given by the following axioms:
\begin{description}
\item[prop]
axioms and rules for {\ipc}
\item[\tr]
$((\phi \tto \psi) \wedge (\psi \tto \chi)) \to (\phi \tto \chi)$
\item[\ka]
$((\phi \tto \psi) \wedge (\phi \tto \chi)) \to (\phi \tto (\psi\wedge\chi))$
\item[\na]
$\vdash \phi \to \psi \;\;\; \To \;\;\; \vdash \phi \tto \psi$ 
\end{description}

We take $\opr\phi := \top \tto \phi$ and $\dotbox\phi$ for $\phi\wedge \opr\phi$.
One can easily derive the  intuitionistic version  of the classical system {\sf K} (without $\Diamond$) 
 for the $\opr$-language
from \iam. 
The system \ia\ extends \iam\ with  
\begin{description}
\item[\di]
$((\phi \tto \chi) \wedge (\psi\tto\chi)) \to ((\phi\vee\psi)\tto\chi)$
\end{description}


A (\iam-)\emph{logic} $\Lambda$ is an extension of \iam\ that is closed under modus ponens, necessitation and substitution.
Let $X$ be a set of formulas. We write $\Lambda \thad X$ for the closure of $\Lambda \cup X$ under modus ponens and necessitation.
Note that $\Lambda \thad X$ is not automatically a logic. 
 On the other hand, if $X$ is closed under substitution, then so
is $\Lambda \thad X$. 

\begin{remark}
All  theorems we claim for \iam\ also hold  when  we omit disjunction from the language, in the sense that
we still have all schemes, where 
 the interpretations of the schematic letters are restricted to disjunction-free formulas. Our proofs also work in the disjunction-free setting. 

Theorems~\ref{ququatao} and \ref{iamququatao} illustrate that $\di$ is sometimes needed to derive principles not involving $\vee$. 
A similar example is provided by the trivialization of $\tto$ in $\ia\thad\cpc$ 
 \cite[Lemma 4.6]{LitakV18:im}.
 In the latter case, we know we need \di\ to make the argument work since the classical interpretability logic {\sf IL} does not trivialize.
\end{remark}

At some points,  we will use a convenient notation for substitution.
Suppose a variable, say $r$, of substitution is given in the context. We will write
 $\phi\psi$ for $\phi [r:=\psi]$. We note that $(\phi\psi)\chi$ is equal to $\phi(\psi\chi)$.
 So we may write $\phi\psi\chi$.


\begin{lemma}\label{sub1sub2}
Let a designated variable of substitution $r$ be given.
We have:
\begin{description}
\item[\sf Sub1]
$\iam\thad \quatb \vdash \dotbox(\phi \iff \psi) \to (\chi\phi \iff \chi\psi)$.
\item[\sf Sub2]
Suppose every occurrence of $r$ is in the scope of an occurrence $\tto$ in $\chi$. We have:\\
$\iam\thad \quatb \vdash \opr(\phi \iff \psi) \to (\chi\phi \iff \chi\psi)$.
\end{description}
\end{lemma}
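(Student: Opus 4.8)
The plan is to prove both claims by induction on the construction of $\chi$, with all the real work in the case where $\chi$ is an arrow $\chi_1 \tto \chi_2$. Before the induction I would record a few derived principles of $\iam$ showing that $\tto$ is congruential under $\opr$; crucially, \emph{none of these needs $\quatb$}. Using only $\tr$, $\ka$ and $\na$ one gets: (i) $\opr\theta \to (\gamma \tto \theta)$, from $\gamma \tto \top$ (by $\na$) and $\top \tto \theta = \opr\theta$ via $\tr$; (ii) the box-to-arrow principle $\opr(\alpha\to\beta)\to(\alpha\tto\beta)$, from $\alpha\tto\alpha$ and $\alpha\tto(\alpha\to\beta)$ (the latter by (i)), then $\ka$, then $\na$ on $\vdash(\alpha\wedge(\alpha\to\beta))\to\beta$ and $\tr$; (iii) internalized monotonicity of $\tto$, antitone on the left $\opr(\alpha'\to\alpha)\to((\alpha\tto\beta)\to(\alpha'\tto\beta))$ (via (ii) and $\tr$) and monotone on the right $\opr(\beta\to\beta')\to((\alpha\tto\beta)\to(\alpha\tto\beta'))$ (via (i), $\ka$, $\na$, $\tr$); and (iv) a flat congruence $\opr(\alpha\iff\alpha')\wedge\opr(\beta\iff\beta')\to((\alpha\tto\beta)\iff(\alpha'\tto\beta'))$, obtained by chaining the two monotonicities in each direction, extracting the one-directional boxed implications with the $\opr$-distribution (\lna{K}) laws that, as noted in the excerpt, are already available in $\iam$.

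With (i)--(iv) in hand I would run the induction for \textbf{Sub1}. For $\chi=r$ the claim is just the first conjunct of $\dotbox(\phi\iff\psi)$; for an atom, $\top$ or $\bot$ other than $r$ it is trivial; and for $\chi=\chi_1\wedge\chi_2$, $\chi_1\vee\chi_2$, $\chi_1\to\chi_2$ it follows from the two induction hypotheses by pure $\ipc$. The arrow case $\chi=\chi_1\tto\chi_2$ is where $\quatb$ enters. One cannot feed the flat induction hypotheses $\dotbox(\phi\iff\psi)\to(\chi_i\phi\iff\chi_i\psi)$ directly into (iv), since (iv) demands the \emph{boxed} equivalences $\opr(\chi_i\phi\iff\chi_i\psi)$. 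To produce these I would necessitate each induction hypothesis (legitimate, as it is a theorem) via $\na$ and push the box inside by \lna{K}, obtaining $\opr\dotbox(\phi\iff\psi)\to\opr(\chi_i\phi\iff\chi_i\psi)$; expanding $\opr\dotbox(\phi\iff\psi)$ as $\opr(\phi\iff\psi)\wedge\opr\opr(\phi\iff\psi)$ and invoking $\quatb$ in the form $\opr(\phi\iff\psi)\to\opr\opr(\phi\iff\psi)$ collapses the antecedent to $\opr(\phi\iff\psi)$, a conjunct of $\dotbox(\phi\iff\psi)$. Now (iv) closes the case.

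Reading off that same boxing step once and for all yields, for \emph{arbitrary} $\chi$, the auxiliary scheme $(\ast)$: $\opr(\phi\iff\psi)\to\opr(\chi\phi\iff\chi\psi)$ (necessitate \textbf{Sub1}, then apply $\quatb$). Claim \textbf{Sub2} is then a second induction over $\chi$ under the modalization hypothesis. The base case $\chi=r$ cannot occur (a bare $r$ lies under no $\tto$), and the propositional cases go as before, since the subformulas stay modalized in $r$. In the arrow case $\chi=\chi_1\tto\chi_2$ the subformulas need not be modalized, but this is precisely where $(\ast)$ applies: it converts $\opr(\phi\iff\psi)$ into $\opr(\chi_i\phi\iff\chi_i\psi)$, after which (iv) again delivers $(\chi_1\phi\tto\chi_2\phi)\iff(\chi_1\psi\tto\chi_2\psi)$.

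The main obstacle is the arrow case and, more precisely, the mismatch of ``levels'': substitution beneath $\tto$ is governed by congruence under $\opr$ rather than by plain $\ipc$-equivalence, so each flat induction hypothesis must be lifted to a boxed one. This lifting succeeds only because $\quatb$ lets a single box be duplicated; without it, nested arrows would force ever-deeper iterations of $\opr$ in the antecedent that $\dotbox(\phi\iff\psi)$ simply does not provide. The only other point requiring care is assembling the congruence package (i)--(iv) for $\tto$ inside $\iam$, but each item reduces to the short combination of $\tr$, $\ka$ and $\na$ indicated above.
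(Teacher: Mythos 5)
Your proof is correct and follows essentially the same route as the paper's: induction on $\chi$ with the arrow case as the crux, lifting the flat induction hypotheses to boxed equivalences via necessitation, \lna{K}-distribution and $\quatb$ (equivalently, $\opr(\phi\iff\psi)\to\opr\dotbox(\phi\iff\psi)$), and then invoking congruence of $\tto$ under boxed equivalences; your scheme $(\ast)$ is exactly the paper's use of $\opr\phi\to\opr\dotbox\phi$ in the \textbf{Sub2} arrow case. The only difference is presentational: you spell out the congruence package (i)--(iv) that the paper leaves implicit behind ``the desired result is now immediate.''
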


\begin{proof}
\ifconf\confbl\else
\ExecuteMetaData[apptwocalc.tex]{sub1sub2}
\fi
\end{proof}

\begin{lemma}\textup{(Uniqueness of fixpoints)}\label{brilsmurf}
Fix a designated variable of substitution $r$. 
 Suppose every occurrence of $r$ is in the scope of an occurrence $\tto$ in $\chi$. Suppose $q$ does not occur in $\chi$. We have:
\begin{enumerate}[a.]
\item
$\iam\thad \loebb \vdash (\dotbox (r\iff \chi r) \wedge \dotbox (q\iff \chi q)) \to (r \iff q)$.
\item
$\iam\thad \loebb \vdash (\dotbox (r\iff \chi r) \wedge \dotbox (q\iff \chi q)) \to \dotbox(r \iff q)$.
\item
$\iam\thad \loebb \vdash (\opr (r\iff \chi r) \wedge \opr (q\iff \chi q)) \to \opr(r \iff q)$.
\end{enumerate}
\end{lemma}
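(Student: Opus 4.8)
The plan is to reduce all three parts to a single application of L\"ob's principle $\loebb$, on top of the substitution Lemma~\ref{sub1sub2}. Throughout, write $\theta := (r \iff q)$. Since $\loebb$ entails $\quatb$ over $\iam$, the tools of Lemma~\ref{sub1sub2} are available; and because every occurrence of $r$ in $\chi$ is guarded by $\tto$, part \textsf{Sub2} instantiated with $\phi := r$ and $\psi := q$ gives the key fact
\[
\iam\thad\loebb \vdash \opr\theta \to (\chi r \iff \chi q),
\]
which I will call $(\ast)$.

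First I would reason ``below the box''. Combining $(\ast)$ with the two plain fixpoint equations $r\iff\chi r$ and $q\iff\chi q$ yields, by pure {\ipc} propositional bookkeeping,
\[
\iam\thad\loebb \vdash (r\iff\chi r)\wedge(q\iff\chi q) \to (\opr\theta\to\theta),
\]
since $\opr\theta$ forces $\chi r\iff\chi q$ through $(\ast)$, and the two equations then chain $r$, $\chi r$, $\chi q$, $q$ into $r\iff q$. Applying necessitation to this \emph{theorem}, and distributing $\opr$ over the implication and over the conjunction in the antecedent (the intuitionistic $\lna{K}$-fragment for $\opr$ derivable in $\iam$, together with $\ka$), I obtain
\[
\iam\thad\loebb \vdash \opr(r\iff\chi r)\wedge\opr(q\iff\chi q) \to \opr(\opr\theta\to\theta),
\]
call this $(\dagger)$.

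Now L\"ob's principle $\loebb$ applied with $\theta$ in place of its schematic letter rewrites the consequent $\opr(\opr\theta\to\theta)$ as $\opr\theta$, so $(\dagger)$ becomes
\[
\iam\thad\loebb \vdash \opr(r\iff\chi r)\wedge\opr(q\iff\chi q) \to \opr\theta,
\]
which is already part~(c). For part~(a) I would note that under the hypothesis $\dotbox(r\iff\chi r)\wedge\dotbox(q\iff\chi q)$ the boxed conjuncts are present, so this last implication delivers $\opr\theta$; feeding $\opr\theta$ back into $(\ast)$ recovers $\chi r\iff\chi q$, and the plain conjuncts $r\iff\chi r$ and $q\iff\chi q$ then give $\theta$. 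Part~(b) is immediate, since $\dotbox\theta = \theta\wedge\opr\theta$ and we have just established its first conjunct (part~a) and its second conjunct ($\opr\theta$, from the displayed implication).

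The only genuinely delicate point is the box-manipulation step producing $(\dagger)$: one must check that necessitation legitimately applies --- it does, because the preceding line is a theorem with no live assumptions --- and that $\opr$ distributes over $\to$ and over $\wedge$ intuitionistically, which is exactly the derivable intuitionistic $\lna{K}$-content for $\opr$ recorded just after the definition of $\iam$. Everything else is either Lemma~\ref{sub1sub2} or routine {\ipc} reasoning; the reason the hypotheses are stated with $\dotbox$ rather than a bare $\opr$ is precisely that the argument needs both the plain and the boxed form of each fixpoint equation, at different stages.
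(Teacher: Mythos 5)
Your proof is correct and follows essentially the same route as the paper's: both arguments rest on \textsf{Sub2} (available because $\loebb$ yields $\quatb$ over $\iam$) to turn $\opr(r\iff q)$ into $\chi r\iff\chi q$, and then discharge the assumption $\opr(r\iff q)$ by a L\"ob-style argument. The only difference is packaging: the paper proves (a) first, wrapping the discharge into a ``strengthened L\"ob's Rule'' with $\dotbox$ side assumptions and letting (b) and (c) follow, whereas you apply necessitation and the axiom $\loebb$ directly to obtain (c) first and then read off (a) and (b) --- both orderings are sound and the underlying derivation is the same.
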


\begin{proof}
\ifconf\confbl\else
\ExecuteMetaData[apptwocalc.tex]{brilsmurf}
\fi
\end{proof}

\begin{remark}
We find a closely related development in Craig Smory\'nski's \cite[Chapter 4]{smor:self85}.
We note that our development is constructive and studies a binary operator, where Smory\'nski works classically and studies a unary
operator. 
Moreover, since Smory\'nski aims at a version of the de Jongh-Sambin Theorem, he imposes an additional   axiom written (in an adjusted notation) as
$\vdash {\apr}\phi \to \opr{\apr}\phi$.
\end{remark}

\section{Semantics} \label{sec:semantics} 

Most proofs in this paper are of purely syntactic nature. Nevertheless, we occasionally  still need semantics, e.g., for non-derivability results and for better understanding of syntactic systems and notions studied below. For our purposes, it is enough to consider algebraic semantics (\rfsc{sec:algebras}) 
 and Kripke semantics  (\rfsc{sec:kripke}). 

\subsection{Algebraic Semantics} \label{sec:algebras}

We briefly recapitulate algebraic semantics as discussed in our companion paper \cite{tlav19subf}. A generic  algebraic completeness result after the manner of Lindenbaum and Tarski is obtainable for almost any ``natural'' logic and extensions of $\iam$ are no exception. We can in fact put it in a general setting: with their 
The ``global consequence relation'' of any extension of $\iam$ including both  Modus Ponens and $\na$ is easily seen to be  \emph{algebraizable} \cite{BlokP89:ams,FontJP03a:sl},  in fact an instance of what Rasiowa calls an \emph{implicative logic}  \cite{Rasiowa74:aatnl,Font06:sl}. 
More explicitly, the algebraic semantics looks as follows:

\begin{definition}
A \emph{$\tto$-algebra}\tlnt{a good name?} or \emph{\iam-algebra} is a tuple  
$\gA \deq \la A, \wedge, \vee, \tto, \to, \bot, \top \ra$, 
where 
\begin{itemize}
\item $\la A, \wedge, \vee, \to, \bot, \top \ra$ (the \emph{intuitionistic/Heyting reduct} of $\gA$) is a Heyting algebra and
\item $\la A, \wedge, \vee, \tto, \bot, \top \ra$ (the \emph{strict reduct} of $\gA$) satisfies
\begin{description}
\item[\lna{CK}] $(a \tto b) \wedge (a \tto c) = a \tto (b \wedge c)$,
\item[\lna{CT}] $(a \tto b) \wedge (b \tto c) \leq a \tto c$,
\item[\lna{CI}] $a \tto a = \top$.
\end{description}
\end{itemize}
Moreover, $\gA$ is a \emph{normalized} \tlnt{other names? ``fully normal''? ``additive''? ``$\tto$-Kripke''?}  or \emph{\ia-algebra} if its strict reduct is a weakly Heyting algebra \cite{CelaniJ05:mlq}, i.e., it satisfies in addition
\begin{description}
\item[\lna{CD}] $(a \tto c) \wedge (b \tto c) = (a \vee b) \tto c$.
\end{description}
\end{definition}

  \lna{CK}, \lna{CD}, \lna{CT},  and \lna{CI} are called by Celani and Jansana \cite{CelaniJ05:mlq} \lna{C1}\,--\,\lna{C4}, respectively. Denote the equational class of $\tto$-algebras by $\thaes$ (\haes\ standing for ``Heyting Algebra Expansions'') and the class of normalized ones by \dihaes. 

\tlnt{Also note that normalized ones are instances of setting by Palmigiano et al?}

A valuation $v$ in $\gA$ as usual maps propositional atoms to elements of $A$ and is inductively extended to $\hat{v}$ defined on all formulas in the obvious way. Write $\gA, v \Vdash  \phi$ if $\hat{v}(\phi) = \top$, $\gA \Vdash \phi$ if $\gA, v \Vdash \phi$ for all $\phi$, $\gA \Vdash \Lambda$ if $\gA \Vdash \phi$ for every $\phi \in \Lambda$, and $K \Vdash \phi$ if $\gA \Vdash \phi$ for every $\gA \in K$. Given any $K \subseteq \thaes$ and any set of formulas $\Lambda$ define
\begin{align*}
\Mod{\Lambda} & \deq \{ \gA \mid \forall \phi \in \Lambda. \gA \Vdash \phi\}, \\
\Th{K} & \deq \{ \phi \mid \forall \gA \in K. \gA \Vdash \phi \}.
\end{align*}

\begin{theorem}[Algebraic Completeness]\ \label{algsound}
\begin{itemize}
\item For any $K \subseteq \thaes$, $\Th{K}$ is a logic.
\item For any logic $\Lambda$, $\Lambda = \Th{\Mod{\Lambda}}$.
\end{itemize}
\end{theorem}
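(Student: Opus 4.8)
The plan is to prove both clauses by the classical Lindenbaum--Tarski method, which is available here because $\iam$ (and hence every extension) is algebraizable in the sense alluded to above; I nonetheless spell out the construction rather than citing the general theory. The first clause is essentially a soundness bookkeeping exercise, while the second combines a formal soundness inclusion with a genuine term-model construction.

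For the first clause I would begin with the semantic substitution lemma: for every $\gA$, valuation $v$ and substitution $\sigma$ one has $\hat v(\sigma\phi)=\hat{v^\sigma}(\phi)$, where $v^\sigma(p)\deq\hat v(\sigma p)$, by an immediate induction on $\phi$. Next, by induction on $\iam$-derivations, every $\iam$-theorem lies in $\Th{K}$: the \textbf{prop} axioms are valid since each Heyting reduct is a Heyting algebra; $\ka$ and $\tr$ hold by \lna{CK} and \lna{CT}; and the rule $\na$ is sound because $a\le b$ forces $a\tto b=(a\tto a)\wedge(a\tto b)=a\tto(a\wedge b)=a\tto a=\top$ using \lna{CK} and \lna{CI}. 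Closure of $\Th{K}$ under modus ponens and necessitation is then immediate ($\hat v(\phi)=\top$ together with $\hat v(\phi)\le\hat v(\psi)$ gives $\hat v(\psi)=\top$; and once $\hat v(\phi)=\top$ we get $\hat v(\opr\phi)=\top\tto\top=\top$ by \lna{CI}), while closure under substitution is exactly the substitution lemma read off at every $v$. Hence $\Th{K}$ is a logic.

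For the second clause, the inclusion $\Lambda\subseteq\Th{\Mod{\Lambda}}$ is purely formal: if $\phi\in\Lambda$ then by the very definition of $\Mod{\Lambda}$ every $\gA\in\Mod{\Lambda}$ validates $\phi$, so $\phi\in\Th{\Mod{\Lambda}}$ --- one half of the Galois connection induced by $\Vdash$. The converse $\Th{\Mod{\Lambda}}\subseteq\Lambda$ carries the content. I would build the Lindenbaum algebra $\gA_\Lambda$ whose carrier is the set of $\sim$-classes of formulas, where $\phi\sim\psi$ iff $(\phi\iff\psi)\in\Lambda$, with operations induced by the connectives and designated top $[\top]$; note that $[\phi]=[\top]$ iff $\phi\in\Lambda$.

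The main obstacle is to show that $\sim$ is a \emph{congruence}, in particular for $\tto$, and that $\gA_\Lambda$ really lies in $\thaes$. Congruence for the \ipc-connectives is routine; for $\tto$, from $(\phi\iff\phi'),(\psi\iff\psi')\in\Lambda$ I would first obtain $\phi'\tto\phi$ and $\psi\tto\psi'$ by $\na$ and then chain twice through $\tr$ to derive $(\phi\tto\psi)\to(\phi'\tto\psi')$, and symmetrically, so $\tto$ descends to the quotient. That $\gA_\Lambda$ satisfies \lna{CK}, \lna{CT}, \lna{CI} follows because the corresponding formula-schemes are $\iam$-theorems (the remaining $\ge$-half of \lna{CK} coming from monotonicity of $\tto$, itself a consequence of $\na$ and $\tr$), so $\gA_\Lambda\in\thaes$. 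Finally I would check $\gA_\Lambda\in\Mod{\Lambda}$: the canonical valuation $v_0(p)=[p]$ satisfies $\hat{v_0}(\chi)=[\chi]$ by induction, and --- here closure of $\Lambda$ under substitution is essential --- for an arbitrary valuation $v$ realized by a substitution $\sigma$ one gets $\hat v(\psi)=[\sigma\psi]=[\top]$ whenever $\psi\in\Lambda$, so every $\psi\in\Lambda$ is valid in $\gA_\Lambda$ under all valuations. Then $\phi\notin\Lambda$ forces $\hat{v_0}(\phi)=[\phi]\neq[\top]$, so $\gA_\Lambda\in\Mod{\Lambda}$ refutes $\phi$ and $\phi\notin\Th{\Mod{\Lambda}}$; contraposing yields $\Th{\Mod{\Lambda}}\subseteq\Lambda$ and completes the proof.
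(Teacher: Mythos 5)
Your proposal is correct and follows essentially the same route as the paper's own argument: soundness by checking the $\iam$ axioms and rules against \lna{CK}, \lna{CT}, \lna{CI} (with the same trick for $\na$), and completeness via the Lindenbaum--Tarski quotient, where \ka\ gives one half of \lna{CK} and $\tr$ plus $\na$ give the other. You merely spell out details the paper leaves implicit (the congruence property of $\sim$ and the use of substitution-closure to handle arbitrary valuations), which is fine.
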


\begin{proof}
See our companion paper \cite{tlav19subf} or apply techniques of abstract algebraic logic (AAL) as discussed in standard references \cite{BlokP89:ams,FontJP03a:sl,Rasiowa74:aatnl,Font06:sl}.
\end{proof}

%


\subsection{Kripke Semantics} \label{sec:kripke}

 We briefly recapitulate basic information from our overview \cite{LitakV18:im}. 
 A (\emph{Lewisian}) \emph{Kripke frame} is a triple  $\ma F \deq \la W, \preceq, \sqsubset \ra$, where $\preceq$ is a partial order on $W$, $\sqsubset \; \subseteq \, W \times W$ and furthermore
\begin{description}
\item[\strictp] if $k \preceq \ell \sqsubset m$, then $k \sqsubset m$ \qquad (i.e., $\preceq \comp \;\sqsubset\; \subseteq\;  \sqsubset$).   
\end{description}
Admissible valuations then map propositional atoms to $\preceq$-closed sets and intuitionistic connectives are interpreted as usual using $\preceq$. The clause for $\tto$ in a model is 
\begin{equation} \label{eq:forctto}
 k \kmodels \phi \tto \psi \text{ if, for all } \ell \sqsupset k \text{, if } \ell \kmodels \phi \text{, then } \ell \kmodels \psi.
\end{equation}

Given such a Kripke frame, it is an easy exercise to define its dual algebra whose Heyting reduct is given as usual by $\ups{W}$, i.e., upsets of $W$, and the strict reduct (interpretation of $\tto$) is induced by \refeq{forctto}. The dual algebra is normalized, i.e.,  the equality \lna{CD} corresponding to \di\ holds. In other words, extensions of $\iam$ which are not extensions of $\ia$ can only be Kripke sound, but not Kripke complete. A fuller discussion of \di\ and other principles in the Kripke setting (written without employing explicitly algebraic language) can be found in our paper \cite{LitakV18:im}. Definitions of notions such as the  \emph{finite model property} (fmp), i.e., completeness wrt a finite class of frames are standard.

Some modal axioms that we need together with their correspondence conditions are given in Figure \ref{tab:compl}. Its fuller version can be found in \cite[\rfs{6}]{LitakV18:im}, along with an extended version of the following summary of existing results.

\begin{theorem}\ \label{th:minco} 
\begin{enumerate}[a.]
\item 
$\ia$ 
 has the finite model property  \tuc{Prop. 4.1.1}{Iemhoff01:phd}, \tuc{Prop. 7}{iemh:pres03}, \tuc{Th. 2.1.10}{Zhou03}.  
\item $\bk = \ia + \lb$ \bro$\ws  = \loglb + \lS$, $\loglb + \biii$\brc correspond to the class of brilliant frames (strong frames, semi-transitive frames)  and enjoy the fmp. 
\item $\loglb + \lv$ corresponds to the class of gathering frames and has the fmp 
\tuc{Prop. 4.2.1}{Iemhoff01:phd}, \tuc{Prop. 8}{iemh:pres03}.
\item $\loglb + \biv$ corresponds to the class of Noetherian semi-transitive frames and has the fmp 
 \tuc{Prop. 4.3.2}{Iemhoff01:phd}, \tuc{Th. 2.2.7}{Zhou03}.
\item \label{compgla} $\loglg  = \loglb + \biv + \lv$  corresponds to the class of Noetherian gathering frames \tuc{Lem. 9}{iemh:pres03}, \tuc{Lem. 3.10}{iemh:prop05} and has the fmp. 
\item \label{compgwa} $\loglh = \loglb + \lvii$ corresponds to the ``supergathering'' property of Figure \ref{tab:compl} 
 on the class of finite frames \tuc{Lem. 3.5.1}{Zhou03}, \tuc{Th. 3.31}{iemh:prop05}.
\end{enumerate}
\end{theorem}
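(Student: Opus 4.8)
The plan is to read Theorem~\ref{th:minco} as a compilation of correspondence-plus-completeness facts, each already available in the cited sources, and to establish every clause by the same two-step recipe: first a frame-correspondence argument matching the modal axiom with a structural condition on $\la W, \preceq, \sqsubset\ra$, and then a finite-countermodel construction delivering the fmp. Because the Lewisian setting carries two relations, the uniform point to keep in mind throughout is that admissible valuations are $\preceq$-persistent and that $\sqsubset$ interacts with $\preceq$ through \strictp; beyond this the arguments track the classical unimodal ones closely.

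First I would dispatch correspondence axiom by axiom. Soundness of an axiom on a frame satisfying the proposed condition follows by unfolding the forcing clause~\refeq{forctto} together with \strictp; for the converse one refutes the axiom at an offending pair of points by planting a suitable $\preceq$-upward-closed valuation. Concretely, $\biii$ forces semi-transitivity of $\sqsubset$ relative to $\preceq$, while the Box principle $\lb$ and the principle $\lS$ correspond to the stronger brilliant and strong conditions, together giving clause~b; $\lv$ yields the gathering property of clause~c; and $\biv$, exactly as classical L\"ob entails transitivity, yields semi-transitivity \emph{and} Noetherianity of $\sqsubset$ at once, giving clause~d, with clause~e obtained by conjoining the gathering and Noetherian conditions. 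Clause~f is the one genuinely non-standard correspondence: the $\lvii$ axiom captures supergathering only over \emph{finite} frames, so that statement must be proved relative to the finite setting rather than over arbitrary frames.

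For the fmp I would use filtration, taking algebraic completeness (Theorem~\ref{algsound}) as the starting point: realise a non-theorem in a model and quotient by the finite closure of its subformulas, closed under the operations generated by $\tto$ and the derived $\opr$ and $\dotbox$. The resulting frame is finite and still refutes the formula; the substantive task is to show it re-satisfies the relevant condition. Preserving \strictp and the brilliancy/gathering conditions is routine once one filters $\sqsubset$ by its minimal (or transitive) filtration. The delicate cases are the Noetherian clauses~d and~e, where an infinite $\sqsubset$-descending chain might collapse into a $\sqsubset$-cycle under quotienting.

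That collapse is the main obstacle, and I expect to resolve it exactly as in the classical \lna{GL} fmp proof: after filtration pass to the irreflexive transitive closure of the quotiented $\sqsubset$ and exploit finiteness to convert ``no proper $\sqsubset$-cycles'' into conversely well-founded, using the semi-transitivity already secured by the $\biii$-correspondence. For clause~f the obstacle resurfaces differently, since supergathering is intrinsically a finite-frame condition: there completeness must be read directly off the finite filtrated models, following the constructions of Zhou and Iemhoff verbatim. The remaining bookkeeping --- persistence of valuations, the $\preceq$/$\sqsubset$ interaction, and verifying that the filtration respects~\refeq{forctto} --- is routine and carried out in full in the cited references.
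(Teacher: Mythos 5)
Your proposal misreads what clause (f) actually asserts, and this is the one place where the difference is fatal. For $\loglh$ the theorem claims \emph{only} a correspondence statement relativized to finite frames: a finite frame validates $\lvii$ if and only if it is supergathering. It deliberately does not claim the fmp or any completeness --- compare the wording with clauses (a)--(e), each of which explicitly adds ``has the fmp'', and note the open question posed immediately after the theorem, which asks whether $\loglh$ and related systems involving $\lvii$ are Kripke complete at all. Your plan to have completeness ``read directly off the finite filtrated models, following the constructions of Zhou and Iemhoff verbatim'' therefore purports to prove something that is not in the statement and is in fact open; the cited results \tuc{Lem. 3.5.1}{Zhou03}, \tuc{Th. 3.31}{iemh:prop05} are finite-frame correspondence results only, and no filtration argument for $\lvii$ exists in those sources. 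What clause (f) needs is just the two-directional frame argument over finite frames, where finiteness is used to extract the required witness $x$ with $\ell \prec x \preceq m$ in the supergathering diagram.

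Two further points. First, be aware that the paper supplies no proof of this theorem at all: it is a compilation of results from the literature, so the real task is to attach each clause to the correct citation and to delineate precisely what each clause claims --- which is exactly where your sketch slips. Second, your fmp recipe is shakier than you present it: you cannot launch a filtration from algebraic completeness (Theorem \ref{algsound}), because Heyting algebras are not locally finite, so ``quotienting by the finite closure of the subformulas'' of a non-theorem realized in the Lindenbaum--Tarski algebra does not yield a finite structure. The constructions in \tuc{Prop. 4.1.1}{Iemhoff01:phd} and \tuc{Prop. 7}{iemh:pres03} build finite Kripke countermodels directly from adequate sets of formulas, and for the Noetherian clauses (d) and (e) the L\"ob-style argument is a genuinely separate construction, not a ``routine'' irreflexive-transitive-closure repair of a filtration; presenting it as such conceals precisely the step that makes those clauses nontrivial.
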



\begin{question}
 Are $\loglh$ and related systems involving $\lvii$ Kripke complete?
 \end{question}

\tlnt{Mention SL too?} 
\newcommand{\sepo}{\vspace{\tbskip}}
\def\tbskip{2mm}

\begin{figure}
\hrule
\vspace{0.2cm}
\footnotesize

\begin{tabular}{lL{3.5cm}L{4.5cm}c} \sepo
\lb &  $\phi  \strictif \psi \to 
 \opr(\phi  \to \psi)$  \newline brilliant &  $k \sqsubset \ell \preceq m$ $\To$ $k \sqsubset m$ & 
$   \vcenter{
    \xymatrix@-1pc{
            & m\\
           k \ar@{~~>}[ur] 
           \ar@{~>}[r] & \ell \ar[u]      
      } }$ 
\\ [\tbskip]\sepo 
 \biii & $ \opr \phi \to \opr\opr\phi$  \newline semi-transitive & $k \sqsubset \ell \sqsubset m$ $\To$ \newline
 $\exists x. k \sqsubset x \preceq m$ 
&
$    \vcenter{
        \xymatrix@-0.5pc{
           x \ar@{-->}[r] & m\\
           k  \ar@{~~>}[u]
           \ar@{~>}[r] & \ell  \ar@{~>}[u]      
        }
      } $ 
 \\ [\tbskip]\sepo 
\lv & $ \phi \tto \opr \phi$ \newline gathering & $k \sqsubset \ell \sqsubset m$ $\To$ $\ell \preceq m$ 
&
$    \vcenter{
        \xymatrix{
          k  \ar@{~>}[r]  & \ell \ar@/_/@{-->}[r] \ar@/^/@{~>}[r]  & m \\
        }
      } $ 
\\ [\tbskip]\sepo 
\biv & $\opr(\opr\phi \to \phi) \to 
 \opr\phi$ & \multicolumn{2}{L{7.5cm}}
 {L\"ob: $\sqsubset$-Noetherian (conversely well-founded) and semi-transitive} \\ [\tbskip]\sepo 
\lvii & $ ((\phi \wedge \opr \psi) \tto \psi) \to 
  \phi \tto \psi$ \newline supergathering & on finite frames: \newline
$k\sqsubset \ell \sqsubset m \To 
 \exists x \sqsupset k. (\ell \prec x\preceq m)$  &
$    \vcenter{
        \xymatrix@-1pc{
           & x \ar@{-->}[r] & m\\
           k \ar@{~~>}[ur] 
           \ar@{~>}[rr] & & \ell \ar@{-->}|{\neq}[ul]  \ar@{~>}[u]      
        }
      } $ 
 \\ [\tbskip]\sepo      
\lS &  $\phi \to \opr\phi$  \newline strong & $k \sqsubset \ell$ $\To$ $k \preceq \ell$ &
$      \vcenter{
        \xymatrix{
          k  \ar@/_/@{-->}[r] \ar@/^/@{~>}[r] & \ell
        }
      } $
\end{tabular}
\caption{\label{tab:compl}Basic axioms and their correspondence conditions \cite[\rfs{6}]{LitakV18:im}. Further correspondence conditions are provided in \rfse{corresp}.}
\vspace{0.1cm}
\hrule
\end{figure}

\begin{remark}
There are other semantics for $\iam$ and $\ia$, which are intermediate between algebraic and Kripke ones. We do not treat them here, as our paper is primarily syntactic in nature, but in subsequent work we are going to discuss, e.g., suitable generalizations of so-called Veltman semantics, or semantics merging Kripkean interpretation of intuitionistic connectives with nieghbourhood interpretation of $\tto$. 
\end{remark}

\takeout{
****************
\begin{definition}[Stone-J\'{o}nsson-Tarski dual]
Given a normalized  $\tto$-algebra ($\ia$-algebra, i.e., algebra modelling \lna{Di}) $\gA$, define its \emph{dual descriptive frame} as $\gA_* \deq (\pf{\gA}, \preceq_\gA, \sqsubset_\gA, \adm{\gA})$, where 
\begin{itemize}
\item $\pf{\gA}$ is the collection of prime filters of $\gA$,
\item $\preceq_\gA$ is the inclusion relation between the prime filters,
\item $U \sqsubset_\gA V$ holds if for any $a, b \in A$, we have that $b \in V$ whenever $a \tto b \in U$ and $a \in V$,
\item and $\adm{\gA} \deq \{ a_* \mid a \in A\}$, where $a_* \deq \{U \in \pf{\gA} \mid a \in U\}$. 
\end{itemize}
\end{definition}

\begin{itemize}
\item 
For any Kripke frame $\gF$, $\gF^+$ is a normalized  $\tto$-algebra.
\item For any general frame $\gF$, $\gF^*$ is a normalized  $\tto$-algebra.
\item For any normalized  $\tto$-algebra $\gA$, $\gA_*$ is a general frame. Moreover, $\gA$ is isomorphic to $(\gA_*)^*$. 
\end{itemize}
\end{theorem}
}

\section{Arithmetical Interpretations} \label{sec:arint}


Our paper \cite{LitakV18:im} proposed a general framework of \emph{schematic logics} \cite[\rfs{5.1}]{LitakV18:im} for arithmetical interpretations of logical systems, in particular extensions of $\iam$. It can be also seen as semantics of propositional modal logics and this aspect is
 our main focus here. Let us present the framework restricted to the $\tto$-signature. 
 
 We restrict ourselves in our presentation to  theories extending the intuitionistic version of Elementary Arithmetic in the same language.
 Intuitionistic Elementary Arithmetic consists of the basic axioms for successor, addition and muliplication plus
 $\Delta_0$-induction, plus an axiom expressing the exponentiation is total over intuitionistic predicate logic.  
 We assume that a $\Sigma^0_1$-formula $\sigma_T$ representing the given axiom set of $T$ is part of the data for $T$. NB: we thus treat
 arithmetical theories differently from propositional logics that are given as sets of sentences.  
 
 Let a function $F$ that assigns to $T$ an arithmetical formula $A_T(v_0,v_1)$ as interpretation of $\tto$. 
 We note that $F$ operates intensionally on $\sigma_T$.
 We write $B_0 \tto_{F,T} B_1$ for $F(T)(\gnum{B_0},\gnum{B_1})$. Here $\gnum{C}$ is the numeral of the
G\"odel number of $C$.
Suppose $f$ is a mapping from the propositional atoms to
arithmetical sentences. 
 We define $(\phi)_{F,T}^f$ as the translation that uses $f$ to interpret atoms, $\tto_{F,T}$ to interpret $\tto$, and that commutes 
 with the propositional connectives.

We say that $\phi$ is \emph{$T$-valid} w.r.t. $F$ if, for all assignments $f$ of arithmetical sentences to
the propositional atoms, we have $T \vdash (\phi)_{F,T}^f$. We write $\Lambda_{F}(T)$ for the set of modal formulas that are $T$-valid w.r.t. $F$. Here are several examples of interpretations fitting in this framework, each of which yields a logic extending $\iam$. 

\begin{itemize}
\item  The most important one is provided by $\logptd{T}$, i.e.,  \emph{$\Delta$-preservativity}  for a theory $T$, where $\Delta$ is a class of 
 sentences. We assume that $\Delta$ is given by an elementary formula, say $\delta$, that arithmetically represents it. We define:
\begin{itemize}
\item
 $A \tto_{\Delta,T} B$ if, for all $\Delta$-sentences $S$, if
$T \vdash S \to A$, then $T \vdash S \to B$.

{\footnotesize We note that $\tto_{\Delta,T}$ is a rather uniform interpretation. There is an arithmetical formula $P(v_0,v_1,X,Y)$ with two free second order variables,
such that $B_0\tto_{\Delta,T}B_1 = P(\gnum{B_0},\gnum{B_1},\sigma_T,\delta)$.}
\end{itemize}
A minimal assumption on $\Delta$ is that it includes a sentence (equivalent to) $\top$. Let us call such $\Delta$ \emph{preservation-suitable}. 
  This ensures, in particular, that $\opr\phi$  defined as $\top \tto \phi$ still encodes provability. It is most common to fix $\Delta$ as $\Sigma^0_1$ \cite{viss:eval85,viss:prop94,viss:subs02,iemh:pres03,iemh:prop05}, in which case we  drop $\Delta$  (cf. \cite[\rfs{5}]{LitakV18:im} for the notation). 
  The relation of
$\Sigma^0_1$-preservativity is an important tool in the study of metatheory of {\sf HA}. In particular, constructive preservativity logic appears easier to analyze than its provability fragment. See  \cite{LitakV18:im} for further motivation.
\item  Classically, preservativity is seen as contraposed form of \emph{conservativity}, which in turn  is  classically equivalent to  \emph{interpretability}  \cite{bera:inte90,shav:rela88,japa:logi98,viss:over98,arte:prov04} \cite[C.3]{LitakV18:im}. It is also possible to investigate constructive interpretability logic, although it appears somewhat less well-motivated \cite[C.4]{LitakV18:im}.   
\item
Yet another example of a schematic interpretation is provided by the $\dotbox$-translation. That is, $\tto$ is interpreted as 
$\opr ((\phi \wedge \opr\phi) \to (\psi \wedge \opr\psi))$, where $\opr$ is the ordinary provability modality.
\item An example of an arithmetical interpretation not fitting well into the schematic framework sketched above is provided by the logic of 
admissible schemes: i.e., $\phi \tto \psi$ is valid if whenever an arithmetical instance of $\phi$ is provable, 
so is the corresponding instance of $\psi$. We postpone a detailed discussion to a later work. 
\end{itemize}

\section{Extension Stability} \label{sec:exsta}
The main focus of our paper is axiomatizing principles yielding explicit fixpoints in the logic  $\logt{T}$ of an arithmetical theory $T$ (relative to chosen $F$). Nevertheless, we have already mentioned that regardless of the interpretation of $\tto$, $\logt{\cdot}$ is not necessarily monotone: $T \subseteq T'$ does not entail that $\logt{T} \subseteq \logt{T'}$.  We thus would like to know that when our fixpoint principles hold in $\logt{T}$   for a base theory $T$, they also hold in $\logt{T'}$ whenever $T'$ is, say, a finite extension of $T$. 
 Otherwise, they would appear rather fickle. Thus, we are led to the notion of \emph{extension stability}. Apart from being central for our arithmetical (reverse) correspondence theory, it seems to be of interest for ordinary modal model theory; it turns out to be an overlooked generalization of the notion of \emph{subframe logic} \cite{Fine85:jsl,Wolter1993}. Recall that L\"ob-like logics, while being in general non-elementary, tend to be subframe, which above classical \lna{K4} ensures nice completeness properties. 
The details are discussed in a companion paper \cite{tlav19subf}; we just recapitulate basics here.

Given $\phi$ and  a fresh propositional variable $p$, define translation $\stex{p}{\phi}$ inductively as   
  commuting with the propositional variables 
  and the connectives of \ipc, with the $\tto$ clause being 
 \[\stex{p}{(\psi \tto \chi)} := ((p\to \stex{p}{\psi}) \tto (p\to \stex{p}{\chi})).\]
As $\opr\phi$ is $\top \tto \phi$, we get $\iam \vdash \stex{p}{(\opr\phi)} \iff \opr(p\to \stex{p}{\phi})$. 

A logic $\Lambda$ is \emph{extension stable} if, whenever $\Lambda \vdash \phi$ and $p$ not in $\phi$, we have
$\Lambda \vdash p \to \stex{p}{\phi}$. 

\begin{theorem}[\cite{tlav19subf}]
Suppose the logic $\Lambda$ is axiomatized over $\iam$ by $\Gamma$ \textup(w.r.t. the $\iam$-rules Modus Ponens and Necessitation\textup) 
and suppose that, for every $\phi \in \Gamma$ and $p$ not in $\phi$,
we have $\Lambda \vdash p \to \stex{p}{\phi}$. 
 Then  $\Lambda$ is extension stable.
\end{theorem}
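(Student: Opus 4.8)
The plan is to argue by induction on derivations, using that the translation $\stex{p}{(\cdot)}$ commutes with all \ipc-connectives and is governed on boxes by the identity $\iam\vdash\stex{p}{(\opr\phi)}\iff\opr(p\to\stex{p}{\phi})$ recorded above. Since $\Lambda=\iam\thad\Gamma$, a derivation of a theorem is built from lines that are either theorems of $\iam$, members of $\Gamma$ (and their substitution instances), or consequences by Modus Ponens and Necessitation; so the argument naturally splits into a \emph{main induction} over $\iam\thad\Gamma$-derivations and a \emph{sub-lemma} that $\iam$ is itself extension stable, discharging the $\iam$-base. First I would normalize: given a derivation of $\phi$ with $p\notin\phi$, substitute a genuinely fresh variable for $p$ throughout. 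As $\Lambda$ is closed under substitution this is again a derivation, its conclusion is unchanged (because $p\notin\phi$), and now $p$ occurs in no line; this is what allows the base cases to quote hypotheses stated only for $p$-free formulas. The invariant to maintain is that every line $\psi$ satisfies $\Lambda\vdash p\to\stex{p}{\psi}$.

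Two families of steps recur at either level and are routine. Because $\stex{p}{(\cdot)}$ commutes with the \ipc-connectives, it sends each propositional axiom to another instance of the same \ipc-scheme, so such lines yield $\iam\vdash\stex{p}{\psi}$ and hence $\Lambda\vdash p\to\stex{p}{\psi}$ by weakening. For Modus Ponens from $\psi\to\chi$ and $\psi$, commutation gives $\stex{p}{(\psi\to\chi)}=\stex{p}{\psi}\to\stex{p}{\chi}$, so the hypotheses $\Lambda\vdash p\to(\stex{p}{\psi}\to\stex{p}{\chi})$ and $\Lambda\vdash p\to\stex{p}{\psi}$ combine intuitionistically to $\Lambda\vdash p\to\stex{p}{\chi}$. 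For Necessitation of $\psi$, the hypothesis $\Lambda\vdash p\to\stex{p}{\psi}$ is itself a theorem, so necessitating it gives $\Lambda\vdash\opr(p\to\stex{p}{\psi})$, and the boxed identity then delivers $\Lambda\vdash p\to\stex{p}{(\opr\psi)}$.

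The real work is the sub-lemma, which I would prove by the same induction over $\iam$-derivations. The axioms \tr\ and \ka\ are again mapped by $\stex{p}{(\cdot)}$ to instances of \tr\ and \ka\ (for \ka\ one additionally rewrites the consequent using that $(p\to\stex{p}{\psi})\wedge(p\to\stex{p}{\chi})$ and $p\to(\stex{p}{\psi}\wedge\stex{p}{\chi})$ are \ipc-equivalent, hence $\tto$-interchangeable via \na\ and \tr), so these lines are outright $\iam$-theorems. The decisive case is the rule \na: from $\iam\vdash\phi\to\psi$ the induction hypothesis gives $\iam\vdash p\to(\stex{p}{\phi}\to\stex{p}{\psi})$, equivalently $\iam\vdash(p\to\stex{p}{\phi})\to(p\to\stex{p}{\psi})$; a single application of \na\ then yields $\iam\vdash(p\to\stex{p}{\phi})\tto(p\to\stex{p}{\psi})$, which is exactly $\stex{p}{(\phi\tto\psi)}$, and weakening finishes the case. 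This is the heart of the argument and the reason the translation guards each side of a $\tto$ by $p\to(\cdot)$: that is precisely the shape from which \na\ can reconstruct a strict implication out of the relativized material implication.

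It remains to treat the $\Gamma$-lines, which I would handle via the theorem's hypothesis together with a substitution-compatibility property of $\stex{p}{(\cdot)}$: a routine induction shows $\stex{p}{(\phi\sigma)}$ equals $\stex{p}{\phi}$ with each atom $q$ replaced by $\stex{p}{(\sigma(q))}$ and $p$ itself left untouched. Hence if $\Lambda\vdash p\to\stex{p}{\phi}$ for a representative $\phi\in\Gamma$, applying this $p$-fixing substitution and invoking closure of $\Lambda$ under substitution gives $\Lambda\vdash p\to\stex{p}{(\phi\sigma)}$ for every instance. Combining all base cases with the Modus Ponens and Necessitation steps closes the main induction, so $\Lambda\vdash p\to\stex{p}{\phi}$ whenever $\Lambda\vdash\phi$ and $p\notin\phi$, i.e.\ $\Lambda$ is extension stable. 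I expect the \na\ step of the sub-lemma to be the only genuine obstacle; everything else is bookkeeping once the $p$-guarding of the $\tto$-clause is recognized as tailored exactly to that rule.
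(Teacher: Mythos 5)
Correct, and essentially the paper's own argument: the paper (via the companion-paper lemmas whose proofs appear in the appendix) packages your inductive invariant as the set $\mathfrak{S}(\Lambda)=\{\phi \mid \Lambda\vdash p\to\stex{p}{\phi}\}$ and establishes precisely your case analysis --- \tr\ and \ka\ instances translating to instances (the latter modulo \iam-equivalence), Modus Ponens via commutation with the propositional connectives, the rule \na\ via the $p\to(\cdot)$ guards, and substitution via the identity $\stex{p}{(\nu[q:=\psi])}=(\stex{p}{\nu})[q:=\stex{p}{\psi}]$. Your induction on derivations and the paper's closure-operator formulation are the same proof under different bookkeeping.
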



\alnt{Perhaps also treat the case of closure under the substitution rule?}

\begin{definition}
We say that $F$ is an \emph{elegant interpretation} if for any recursively axiomatizable arithmetical theory $U$ and any arithmetical sentences $A,B,C$, we have that 
\[
U \vdash B \tto_{F,U+A} C \iff (A\to B) \tto_{F,U} (A \to C).
\]
\end{definition}

\begin{theorem}[\cite{tlav19subf}]
For any preservation-suitable $\Delta$, $\Delta$-preservativity is an elegant interpretation.
\end{theorem}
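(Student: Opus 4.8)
The statement unfolds as follows. Reading $\iff$ as the object-level biconditional within the scope of $U \vdash$, and writing $\mathrm{Prov}_{\sigma_T}$ for the canonical provability predicate extracted from the $\Sigma^0_1$-representation $\sigma_T$, the internalized formula $B \tto_{\Delta,U+A} C$ is
\[
\forall x\,\big(\delta(x) \to (\mathrm{Prov}_{\sigma_{U+A}}(\mathrm{imp}(x,\gnum{B})) \to \mathrm{Prov}_{\sigma_{U+A}}(\mathrm{imp}(x,\gnum{C})))\big),
\]
where $\mathrm{imp}(x,y)$ is the elementary term returning the code of the implication with antecedent and consequent coded by $x$ and $y$, and $\delta(x)$ guards the $\Delta$-sentences. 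Likewise $(A\to B)\tto_{\Delta,U}(A\to C)$ is
\[
\forall x\,\big(\delta(x) \to (\mathrm{Prov}_{\sigma_{U}}(\mathrm{imp}(x,\gnum{A\to B})) \to \mathrm{Prov}_{\sigma_{U}}(\mathrm{imp}(x,\gnum{A\to C})))\big).
\]
So I must show that $U$ proves the biconditional of these two formulas. I will assume, as is implicit in the very formation of $U+A$, that $\sigma_{U+A}$ is the natural representation $\sigma_U(x)\vee x=\gnum{A}$; this is the point where the intensional behaviour of $F$ on $\sigma_T$ enters.

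The motivating observation is the ordinary deduction theorem: for a $\Delta$-sentence $S$, one has $U+A \vdash S \to B$ iff $U \vdash A \to (S\to B)$ iff $U \vdash S \to (A \to B)$, the last step using the intuitionistic equivalence of $A \to (S \to B)$ and $S \to (A \to B)$. Hence at the meta-level $B \tto_{\Delta,U+A} C$ holds exactly when $(A\to B)\tto_{\Delta,U}(A\to C)$ holds. This already shows that the biconditional is \emph{true}; the content of the theorem is that $U$ \emph{proves} it, uniformly over all $\Delta$-sentences.

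The plan is therefore to formalize the two ingredients of this argument. First, the formalized deduction theorem for extension by a single axiom: provably in the base theory $\mathrm{iEA}$ (hence in $U$), $\forall y\,(\mathrm{Prov}_{\sigma_{U+A}}(y) \leftrightarrow \mathrm{Prov}_{\sigma_U}(\mathrm{imp}(\gnum{A},y)))$. This is standard for the natural disjunctive representation and uses only elementary transformation of proof codes. Second, the uniform formalized closure of provability under provable equivalence: since $A \to (S \to B)$ and $S \to (A \to B)$ are intuitionistically equivalent uniformly in $S$, an elementary function produces from $x$ a proof of the equivalence of $\mathrm{imp}(\gnum{A},\mathrm{imp}(x,\gnum{B}))$ and $\mathrm{imp}(x,\gnum{A\to B})$, whence
\[
\mathrm{iEA} \vdash \forall x\,\big(\mathrm{Prov}_{\sigma_U}(\mathrm{imp}(\gnum{A},\mathrm{imp}(x,\gnum{B}))) \leftrightarrow \mathrm{Prov}_{\sigma_U}(\mathrm{imp}(x,\gnum{A\to B}))\big).
\]
Chaining these, by instantiating the first ingredient at $y=\mathrm{imp}(x,\gnum{B})$, gives, provably in $\mathrm{iEA}$ and uniformly in $x$, the equivalence $\mathrm{Prov}_{\sigma_{U+A}}(\mathrm{imp}(x,\gnum{B})) \leftrightarrow \mathrm{Prov}_{\sigma_U}(\mathrm{imp}(x,\gnum{A\to B}))$, and the same with $C$ for $B$. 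Substituting both equivalences into the two guarded universal formulas and reasoning under $\forall x$ yields the desired biconditional inside $U$.

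The main obstacle is not the real-world equivalence, which is immediate, but securing everything uniformly and provably inside $U$: both the formalized deduction theorem and the closure under provable equivalence must be established in their schematic $\forall x$ forms, which in turn relies on choosing $\sigma_{U+A}$ to be the natural disjunctive representation, so that $U+A$-proofs translate elementarily into $U$-proofs with one extra use of $A$. Getting the bookkeeping of the code-manipulating functions ($\mathrm{imp}$ and the proof-transforming functions) correct within $\mathrm{iEA}$, and verifying that this is precisely where the intensionality of $F$ on $\sigma_T$ is invoked, is the only delicate point; the underlying logical skeleton is just the deduction theorem.
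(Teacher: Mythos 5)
Your proof is correct: the paper itself defers this theorem to the companion paper \cite{tlav19subf}, but the argument you give -- reading the elegance condition as a $U$-provable object-level biconditional, unfolding both sides via the natural disjunctive representation $\sigma_U(x)\vee x=\gnum{A}$, and then chaining the formalized deduction theorem with the uniformly formalized intuitionistic permutation $A\to(S\to B)\;\leftrightarrow\;S\to(A\to B)$, all verifiable in the base arithmetic and hence in $U$ -- is the standard and essentially unavoidable route to this result. You also correctly identify the one genuinely delicate point, namely that the intensionality of the interpretation enters exactly through the choice of $\sigma_{U+A}$, without which the schematic $\forall x$ forms of both ingredients would not even be well-posed.
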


\begin{theorem}[\cite{tlav19subf}]\label{blauwsmurf}
Whenever $F$ is an elegant interpretation and $U$ is a recursively axiomatizable theory, $\mathfrak S(\Lambda_{F}(U))$ is of the form $\bigcap\limits_A\Lambda_{F}(U+A)$, where $A$ is ranging over all arithmetical sentences. Consequently, $\Lambda(U)$ is extension stable iff for all $A$ we have $\Lambda_F(U) \subseteq \Lambda_F(U + A)$.
\end{theorem}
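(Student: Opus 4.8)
The statement has two parts, and the ``consequently'' reduces to the first. Unwinding the definition of extension stability, $\Lambda_F(U)$ is extension stable exactly when $\Lambda_F(U) \subseteq \mathfrak{S}(\Lambda_F(U))$, where the extension-stable core is $\mathfrak{S}(\Lambda) \deq \{\phi : \Lambda \vdash p \to \stex{p}{\phi},\ p \text{ fresh for } \phi\}$; note $\mathfrak{S}$ is deflationary, since substituting $p := \top$ collapses $p \to \stex{p}{\phi}$ to a formula $\iam$-equivalent to $\phi$ (using the congruence of $\tto$ under $\iam$-provable equivalence), so $\mathfrak{S}(\Lambda) \subseteq \Lambda$ for every logic $\Lambda$. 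Granting the first part $\mathfrak{S}(\Lambda_F(U)) = \bigcap_A \Lambda_F(U+A)$, extension stability of $\Lambda_F(U)$ is thus equivalent to $\Lambda_F(U) \subseteq \bigcap_A \Lambda_F(U+A)$, i.e. to $\Lambda_F(U) \subseteq \Lambda_F(U+A)$ for every $A$. So the plan is to concentrate all the work on the identity $\mathfrak{S}(\Lambda_F(U)) = \bigcap_A \Lambda_F(U+A)$.

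The bridge between the two sides is a \emph{transfer lemma} relating the purely syntactic translation $\stex{p}{\cdot}$ to the passage from $U$ to the finite extension $U+A$. Concretely, I would prove by induction on the modal formula $\phi$ (with $p$ fresh) that, for every arithmetical sentence $A$ and every assignment $f$ of the atoms of $\phi$, writing $f^{+}$ for the extension of $f$ by $f^{+}(p) \deq A$,
\[ U \vdash A \to \big( (\stex{p}{\phi})_{F,U}^{f^{+}} \iff (\phi)_{F,U+A}^{f} \big). \]
The atomic and propositional cases are immediate, because $\stex{p}{\cdot}$ commutes with variables and the \ipc-connectives, and these connectives are interpreted identically on both sides. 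The only genuine case is $\phi = \psi \tto \chi$: here the clause $\stex{p}{(\psi\tto\chi)} = (p \to \stex{p}{\psi}) \tto (p \to \stex{p}{\chi})$ turns the left-hand interpretation into $(A \to D) \tto_{F,U} (A \to E)$, where $D,E$ are the $U$-interpretations of $\stex{p}{\psi}, \stex{p}{\chi}$, whereas the defining property of an \emph{elegant} interpretation rewrites the right-hand interpretation $B \tto_{F,U+A} C$ as $(A \to B) \tto_{F,U} (A \to C)$ up to $U$-provable equivalence. The induction hypotheses $U \vdash A \to (D \iff B)$ and $U \vdash A \to (E \iff C)$ then close the case.

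Granting the transfer lemma, the identity follows by a direct chain of equivalences. Membership $\phi \in \bigcap_A \Lambda_F(U+A)$ unwinds to: for all $A$ and all $f$, $U+A \vdash (\phi)_{F,U+A}^{f}$; by the deduction theorem (for the fixed sentence $(\phi)^f_{F,U+A}$) this is equivalent to $U \vdash A \to (\phi)_{F,U+A}^{f}$ for all $A,f$; by the transfer lemma, together with $(p)^{f^{+}}_{F,U} = A$ and the fact that the interpretation commutes with $\to$, this is equivalent to $U \vdash (p \to \stex{p}{\phi})_{F,U}^{f^{+}}$ for all $A,f$. As $A$ and $f$ vary, $f^{+}$ ranges over all assignments relevant to $p \to \stex{p}{\phi}$, so this last condition says exactly $p \to \stex{p}{\phi} \in \Lambda_F(U)$, i.e. $\phi \in \mathfrak{S}(\Lambda_F(U))$.

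The hard part will be the $\tto$-step of the transfer lemma, and specifically a requirement hidden there: that $\tto_{F,U}$ be a \emph{congruence} with respect to $U$-provable equivalence of its arguments, so that the induction hypotheses — which control $D,E$ only up to provable equivalence under $A$ — may legitimately be pushed inside $\tto_{F,U}$. This is not literally part of the bare definition of elegance, so I would either fold it into the standing hypotheses on admissible interpretations $F$, or derive it from the soundness of the $\iam$-rules $\na$, $\tr$, $\ka$ under $F$ over $U$, which yield the usual replacement-of-equivalents for $\tto$. For the concrete interpretations of \rfse{arint} — $\Delta$-preservativity, interpretability, the $\dotbox$-translation — this congruence is a routine standard fact, so the theorem applies to them unconditionally. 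A secondary point to treat with care is the intensionality of $F$ (it acts on the presentation $\sigma_T$): ``$U+A$'' must be read via a fixed $\Sigma^0_1$-presentation, and the deduction-theorem step is then applied to the resulting fixed arithmetical sentences.
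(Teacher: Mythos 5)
The paper itself contains no proof of Theorem~\ref{blauwsmurf}: it is imported from the companion paper \cite{tlav19subf}, and neither the main text nor the appendix argues for it, so there is no in-document proof to compare yours against.

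On its merits, your proposal is correct and is essentially the only natural argument: reduce the ``consequently'' part to the identity $\mathfrak S(\Lambda_F(U)) = \bigcap_A \Lambda_F(U+A)$ via the deflationarity of $\mathfrak S$ (the $p:=\top$ substitution, which is exactly how the appendix proof of Lemma~\ref{stabilinte} argues); prove the transfer lemma by induction on $\phi$, with elegance discharging the $\tto$ case; then conclude by the deduction theorem and the observation that the assignments $f^{+}$ with $f^{+}(p)=A$ exhaust all assignments relevant to $p \to \stex{p}{\phi}$. The one point that needs sharpening is the congruence issue you yourself flag, and there your second fallback is not quite right as stated: soundness of the $\iam$-rules under $F$ only says that $\Lambda_F(U)$ is \emph{closed} under the rule \na, i.e., if $\phi\to\psi$ is $U$-valid under \emph{all} assignments then so is $\phi\tto\psi$; it does not yield the sentence-level rule ``$U \vdash X \to Y$ implies $U \vdash X \tto_{F,U} Y$'' for particular arithmetical sentences $X,Y$, which is what replacement-of-equivalents inside $\tto_{F,U}$ requires. (The sentence-level instances of \tr\ and \ka\ \emph{are} available, since those are axiom schemes and every assignment instance of them is $U$-provable; the problem is solely the rule \na.) So the congruence of $\tto_{F,U}$ with respect to $U$-provable equivalence must either be taken as a standing hypothesis on admissible $F$ --- which is your first fallback, and the right reading of the schematic framework --- or verified per interpretation; for $\Delta$-preservativity it is the routine fact that if $U \vdash X \iff Y$, then $U$ proves the formalized provability of $X \iff Y$ (by $\Sigma^0_1$-completeness) and hence proves $(X \tto_{\Delta,U} Z) \iff (Y \tto_{\Delta,U} Z)$ and symmetrically in the second argument. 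With that hypothesis made explicit, your proof goes through.
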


\begin{remark} \tlnt{Do we keep this in this paper?}
We could also build the inverse: consider the $\stex{p}{\phi}$ for $\Lambda \vdash \phi$, where $p\not\in {\sf var}(\phi)$, and close off to make
it a logic. This gives an extension of $\Lambda$. 

Interestingly, this construction yields the well-known Kripke proof
of L\"ob's Theorem from G\"odel's theorem in the classical case: if we start with $\iam$ plus the excluded middle plus $\oco\top$,
we get $\loebb$. Similarly, we get $(\opr\phi \to \phi) \tto \phi$ from $\oco\top\tto \bot$. Regrettably, we need the excluded middle
here to get rid of negations that weaken the result constructively.
\end{remark}

\begin{exampleno} \tlnt{Do we keep this in this paper?}
The following examples are discussed in the companion paper \cite{tlav19subf}.
\begin{itemize}
\item[+] Positive examples of extension stability: the  interpretability logic {\sf ILM} of all classical sequential essentially reflexive theories  (\cite{shav:rela88}, \cite{bera:inte90}, \cite{japa:logi98}, or \cite{arte:prov04}) and
   the logic {\sf ILP} of all finitely axiomatized classical arithmetical theories  extending $\mathrm I\Delta_0+{\sf supexp}$  \cite{viss:inte90a}. 
   \item[-] Failure of extension stability: the interpretability logic of Elementary Arithmetic {\sf EA} \cite{kals:towa91}, the $\Sigma_1$-preservativity logic of \ha\ \cite{tlav19subf} \tlnt{do we claim this as a new result?}, closure under the Friedman translation 
 \cite{viss:prop94}.
 \end{itemize}
\end{exampleno}

\begin{question}
Axiomatize  the provability logic and the preservativity logic of all finite extensions of {\sf HA}. 
\end{question}


\section{{\iglam} Meets {\jv}}\label{iglamjv}
\tlnt{Again mention IDZ}

In this section we study a fixed point calculation that is due to de Jongh and Visser \cite{dejo:expl91},
 originally in the context of the classical interpretability logic {\sf IL}.
 It transfers naturally to the context of the Lewis arrow in the presence of axiom $\loeba$, as observed first by Iemhoff et al. \cite{iemh:prop05}. We will call the scheme that corresponds
to the de Jongh-Visser calculation \jv, show that it allows deriving \loeba\ (hence \loeba\ is precisely the logic of this construction) and that the resulting logic is extension stable.

We define {\iglam} as $\iam \thad \loeba$, where:
\begin{description}
\item[\loeba]
$(\opr\phi\to\phi)\tto \phi$.
\end{description}


 $\iam \thad \loebb \thad \quata$ coincides with \iglam \cite[Lemma 4.11]{LitakV18:im}. 

\begin{remark} \label{rem:presdi}
If we add classical logic to $\iglam$ and replace $\tto$ by the contraposed operator $\rhd$, we obtain the
interpretability logic {\sf IL}. We note that
 \di\ is not valid in contraposed {\sf IL}. See \cite{viss:inte90} and \cite{dejo:expl91}.
\end{remark}

We have the symmetric L\"ob rule:

\begin{description}
\item[\sloebra]
${\opr\phi \vdash \phi \iff \psi} \;\;\;\To \;\;\;  {\vdash \phi \strictiff \psi}$.
\end{description}

\begin{lemma}
The rule {\sloebra} is admissible for \iglam.
\end{lemma}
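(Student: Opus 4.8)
The plan is to read the hypothesis $\opr\phi\vdash\phi\iff\psi$ as the claim that $\iglam\vdash\opr\phi\to(\phi\iff\psi)$: necessitation is applied only to theorems, so the deduction theorem for $\to$ applies, and in any case, since $\iglam$ proves transitivity, the local and global consequence relations generated by the single premise $\opr\phi$ coincide. The goal $\phi\strictiff\psi$ abbreviates $(\phi\tto\psi)\wedge(\psi\tto\phi)$, so it suffices to derive each strict implication separately. In both cases the move is the same: factor the arrow through a well-chosen intermediate formula and close with \tr.

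For $\psi\tto\phi$ I would proceed directly from the L\"ob axiom. The hypothesis gives $\iglam\vdash\opr\phi\to(\psi\to\phi)$, equivalently $\vdash\psi\to(\opr\phi\to\phi)$ by currying; applying \na\ yields $\vdash\psi\tto(\opr\phi\to\phi)$. Since $\loeba$ is exactly $(\opr\phi\to\phi)\tto\phi$, a single use of \tr\ gives $\vdash\psi\tto\phi$.

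For $\phi\tto\psi$ the hypothesis is boxed at $\phi$ rather than at $\psi$, so I route through $\dotbox\phi=\phi\wedge\opr\phi$ using gathering. From $\vdash\phi\tto\phi$ (by \na) and the instance $\phi\tto\opr\phi$ of \quata, axiom \ka\ gives $\vdash\phi\tto\dotbox\phi$. On the other hand $\dotbox\phi$ entails $\phi$ and, through the hypothesis $\opr\phi\to(\phi\to\psi)$, also $\phi\to\psi$, so $\vdash\dotbox\phi\to\psi$ and hence $\vdash\dotbox\phi\tto\psi$ by \na; transitivity \tr\ then yields $\vdash\phi\tto\psi$, and combining the two halves gives $\vdash\phi\strictiff\psi$. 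Alternatively, one can avoid \quata\ here altogether: from $(\psi\wedge\opr\phi)\to\phi$ necessitation and normality give $\vdash\opr\psi\to\opr(\opr\phi\to\phi)$, and L\"ob's principle \loebb\ (available since $\iglam=\iam\thad\loebb\thad\quata$) yields $\vdash\opr\psi\to\opr\phi$; then $(\phi\wedge\opr\psi)\to\psi$, so $\phi\tto(\opr\psi\to\psi)$ by \na, and \tr\ with $\loeba$ finishes.

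The main obstacle, and the only place where genuine L\"ob strength is needed rather than mere gathering, is the direction $\psi\tto\phi$. Gathering gives $\psi\tto\opr\psi$, not $\psi\tto\opr\phi$, so the symmetric trick fails and one is forced to use $\loeba$ (or, equivalently, to derive $\opr\psi\to\opr\phi$ through \loebb). That this is essential is already visible at $\phi=\bot$: the hypothesis becomes $\opr\bot\to\neg\psi$ and the conclusion $\psi\tto\bot$, which encodes the consistency/L\"ob content and is unprovable from gathering alone.
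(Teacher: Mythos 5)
Your proof is correct and takes essentially the same route as the paper's: the direction $\psi \tto \phi$ by currying the hypothesis, applying \na, and composing with the axiom \loeba\ via \tr; and the direction $\phi \tto \psi$ by factoring through $\dotbox\phi$, using \quata\ and \ka\ to get $\phi \tto \dotbox\phi$ and \na\ plus \tr\ to finish. Your alternative derivation via \loebb\ and the closing remarks on where L\"ob strength is essential are sound additions, but the core argument coincides with the paper's.
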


\begin{proof}
Suppose ${\opr\phi \vdash_{\iglam} \phi \iff \psi}$. It follows that $\psi \vdash_{\iglam} \opr\phi \to \phi$.
Hence, by {\na}, we find $\vdash_{\iglam} \psi \strictif (\opr\phi \to \phi)$ and, thus, by {\loeba},
$\vdash_{\iglam} \psi \strictif \phi$. 

For the converse direction, we have $\dotbox\phi \vdash_{\iglam} \psi$.
Hence, $\vdash_{\iglam} \dotbox\phi \tto \psi$.
Since, by {\quata}, we have $\vdash_{\iglam} \phi \tto \dotbox\phi$, by {\tr}, we
find $\vdash_{\iglam} \phi \tto \psi$.
\end{proof}


It has been noted  by Iemhoff et al. \cite{iemh:prop05} that the following result, first obtained in the classical setting \cite{dejo:expl91}, holds constructively as well.

\begin{theorem}\label{grotesmurf}
Let $\phi \deq \psi\tto\chi$ and $\theta := (\psi\opr\chi\top \tto \chi\top)$. Then, we have $\iglam \vdash \theta \iff \phi\theta$.
\end{theorem}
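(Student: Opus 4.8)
Write $u\deq\chi\top$ and $b\deq\opr\chi\top$, so that $b=\opr u$, the candidate fixpoint is $\theta=\psi b\tto u$, and the right-hand side is $\phi\theta=\psi\theta\tto\chi\theta$; note that $\theta$ is obtained from $\phi=\psi\tto\chi$ by sending the fixpoint variable to $\opr\chi\top$ in the antecedent position and to $\chi\top$ in the consequent position. I would prove the two implications of $\theta\iff\phi\theta$ separately but by the same mechanism. Before that I record two points. First, the cheap inclusion $b\to\theta$, hence $\opr(b\to\theta)$ by necessitation: from $\psi b\to\top$ and \na\ we get $\psi b\tto\top$, and \tr\ with $\opr u=\top\tto u$ yields $\psi b\tto u=\theta$. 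Second, a warning: in $\iglam$ there is no brilliancy, so $\tto$ is genuinely stronger than $\opr(\cdot\to\cdot)$ and a $\tto$-formula such as $\theta$ need not satisfy $\theta\to\opr\theta$; thus the only machinery available for manipulating $\theta$ under a box is $\biii$ together with the substitution lemmas.

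The substitutions are the engine. Since $r$ is guarded by the principal arrow of $\phi$, \textsf{Sub2} of Lemma~\ref{sub1sub2} gives $\opr(\theta\iff c)\to(\phi\theta\iff\phi c)$ for any $c$; but no single $c$ matches the construction, because the antecedent wants $c=b$ while the consequent wants $c=\top$. I would therefore peel the two sides apart and use \textsf{Sub1}: under $\dotbox\theta=\dotbox(\theta\iff\top)$ it rewrites the consequent as $\chi\theta\iff\chi\top=u$, while under $\dotbox(\theta\iff b)$ it rewrites the antecedent as $\psi\theta\iff\psi b$. Feeding these two rewrites through \tr\ and \ka\ turns $\phi\theta=\psi\theta\tto\chi\theta$ into $\psi b\tto u=\theta$, which is the desired equivalence---provided the two boxed hypotheses $\dotbox\theta$ and $\dotbox(\theta\iff b)$ can be supplied.

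Supplying them is the self-referential heart, and this is where \loeba\ enters. Neither hypothesis is an outright theorem: $\dotbox\theta$ would make $\theta$ provable, and $\theta\iff b$ fails outright (it breaks when $\psi b$ can be false). The plan is to obtain them only inside the scope of the principal arrows, as the fixed point of a L\"ob induction: proving $\phi\theta$ means proving the strict implication $\psi\theta\tto\chi\theta$, and it is the antecedent $\psi\theta$---via $\theta=\psi b\tto u$ and gathering \quata---that activates $u$, hence $\opr u=b$, at the worlds that matter, closing the loop. Concretely I would package the two directions as a single conditional biconditional and discharge the box using the admissible symmetric L\"ob rule \sloebra\ (equivalently, the derived L\"ob rule of $\iglam$), reading off the material $\theta\iff\phi\theta$ at the end.

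The main obstacle is exactly this last step. Because there is no reflection ($\opr X\to X$) and no positive introspection for $\tto$-formulas, the self-reference cannot be resolved by simply boxing $\theta$ and substituting; every replacement of $\theta$ by $\top$ or by $\opr\chi\top$ has to be earned inside a strict implication and justified by \loeba, and the strict ($\strictiff$) and material ($\iff$) biconditionals must be tracked separately so that the constructive content is not lost. Getting the $\dotbox$-versus-$\opr$ bookkeeping right in the antecedent rewrite---the one that needs $\theta\iff b$---is where essentially all the difficulty lies; uniqueness of the fixpoint so produced is a separate matter, already guaranteed by Lemma~\ref{brilsmurf}.
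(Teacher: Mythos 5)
You isolate exactly the right two rewrites---the consequent under $\theta\iff\top$, the antecedent under $\theta\iff b$---and that ``peeling apart'' is indeed the combinatorial heart of the paper's proof. The genuine gap is in the discharge step, which you delegate to a single ``packaged'' application of \sloebra; that cannot work in the form you describe. The rule \sloebra\ reads $\opr\alpha\vdash\alpha\iff\beta\;\To\;\vdash\alpha\ifff\beta$: the hypothesis it removes must be the box of one \emph{side} of the very biconditional derived under it. Packaging the goal would mean establishing $\opr(\theta\iff\phi\theta)\vdash\theta\iff\phi\theta$, but the hypothesis $\opr(\theta\iff\phi\theta)$ supplies neither of your two boxed premises: it does not give $\dotbox\theta$, since that entails $\theta$ itself and \iglam\ has no reflection (with $\psi:=\top$, $\chi:=p$, $\theta$ is just $\opr p$, not a theorem); and it does not give $\theta\iff b$, which can likewise fail to be derivable at all (with $\psi:=\bot$, $\chi:=p$, $\theta$ is provable while $\theta\iff b$ amounts to $\opr p$)---and adding a hypothesis that is itself a theorem cannot help. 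Worse, even if the rule did apply, its output is the \emph{strict} $\theta\ifff\phi\theta$, and in \iglam\ strict biconditionals do not entail material ones (on a one-point frame with empty $\sqsubset$, which is Noetherian and gathering, every $\tto$-statement holds vacuously); so ``reading off the material $\theta\iff\phi\theta$ at the end'' is itself an invalid step. Your own warning about keeping $\ifff$ and $\iff$ apart is correct, but it cuts against your plan rather than for it.

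The repair---which is the paper's proof---is to run \sloebra\ \emph{twice}, once per component, choosing as hypothesis in each case literally the box of the left-hand side of the component equivalence. First, $\opr\chi\top\vdash_{\iglam}\theta$ (your cheap inclusion); boxing this theorem and using \quatb\ gives $\opr\chi\top\vdash\dotbox(\top\iff\theta)$, so {\sf Sub1} yields $\opr\chi\top\vdash\chi\top\iff\chi\theta$, and \sloebra\ (with $\alpha:=\chi\top$) delivers the theorem $\chi\top\ifff\chi\theta$. Second, $\opr\psi\opr\chi\top\vdash_{\iam}\opr\chi\top\iff\theta$ (one direction is the cheap inclusion, the other is \tr\ composing $\top\tto\psi\opr\chi\top$ with $\theta$); boxing as before gives $\opr\psi\opr\chi\top\vdash\dotbox(\opr\chi\top\iff\theta)$, so {\sf Sub1} yields $\opr\psi\opr\chi\top\vdash\psi\opr\chi\top\iff\psi\theta$, and \sloebra\ (with $\alpha:=\psi\opr\chi\top$) delivers the theorem $\psi\opr\chi\top\ifff\psi\theta$. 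These two strict equivalences are now hypothesis-free, and chaining them through \tr\ against $\theta=\psi\opr\chi\top\tto\chi\top$, in both directions, produces the material $\theta\iff\phi\theta$: the material biconditional arises from the chaining of strict equivalences, never from converting a strict biconditional into a material one.
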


\begin{proof}
\ifconf\confbl\else
\ExecuteMetaData[apptwocalc.tex]{grotesmurf}
\fi
\end{proof}

We note that $\theta$ can also be written as $(\psi\opr p \tto p)\chi \top$ or as $(\psi\opr \chi \tto \chi)\top$. 

We define the principle {\jv} as follows:
\begin{description}
\item[\jv]
$(\psi\opr\chi\top \tto \chi\top) \;\; \iff 
(\psi \tto \chi)(\psi\opr\chi\top \tto \chi\top)$
\end{description}

We have seen that $\iglam \vdash \jv$. We now show that, conversely, $\iam\thad\jv \vdash \loeba$,
and, hence, $\iam\thad\jv$ coincides with 
{\iglam} .

\begin{theorem}\label{olijkesmurf}
$\iam\thad\jv \vdash \loeba$.
\end{theorem}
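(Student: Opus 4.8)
The plan is to exploit that \jv\ is a substitution scheme and feed it the single instance $\psi := r\to\alpha$, $\chi := \alpha$, where $\alpha$ is an arbitrary formula not containing the designated substitution variable $r$. Then $\chi\top = \alpha$, $\opr\chi\top = \opr\alpha$ and $\psi\opr\chi\top = \opr\alpha\to\alpha$, so the de Jongh--Visser fixpoint computed in Theorem~\ref{grotesmurf} is $\theta = (\opr\alpha\to\alpha)\tto\alpha$, which is exactly the instance of \loeba\ we must derive. For this instance \jv\ reads $\theta \iff ((\theta\to\alpha)\tto\alpha)$; write $B := (\theta\to\alpha)\tto\alpha$ for its right-hand side. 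It therefore suffices to prove $\vdash\theta$, and by the backward half $\vdash B\to\theta$ of the fixpoint equation it is even enough to prove $\vdash B$.

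First the routine reductions. From the target $(\theta\to\alpha)\tto\theta$ one gets, via \na\ applied to the tautology $\vdash(\theta\to\alpha)\to(\theta\to\alpha)$, the arrow $(\theta\to\alpha)\tto(\theta\to\alpha)$; then \ka\ yields $(\theta\to\alpha)\tto(\theta\wedge(\theta\to\alpha))$, and since $\vdash\theta\wedge(\theta\to\alpha)\to\alpha$, \na\ together with \tr\ gives $B = (\theta\to\alpha)\tto\alpha$, whence $\theta$ by the backward direction. So everything reduces to establishing $(\theta\to\alpha)\tto\theta$. Here I would use the elementary $\iam$-lemma $\vdash\opr\alpha\to(X\tto\alpha)$ for any $X$, immediate from $\vdash X\to\top$, \na\ and \tr\ since $\opr\alpha=\top\tto\alpha$. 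Taking $X := \opr\alpha\to\alpha$ gives $\vdash\opr\alpha\to\theta$, hence $\vdash(\theta\to\alpha)\to(\opr\alpha\to\alpha)$, and by \na\ both $\opr\alpha\tto\theta$ and $(\theta\to\alpha)\tto(\opr\alpha\to\alpha)$. Consequently, by \tr, the target $(\theta\to\alpha)\tto\theta$ follows once we have the single arrow $(\opr\alpha\to\alpha)\tto\opr\alpha$.

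This last formula is the genuine L\"obian content and is the main obstacle: it cannot be proved in \iam\ alone (it fails on non-gathering, non-Noetherian Lewis frames, which are $\iam$-sound), so it must be squeezed out of further instances of \jv. The transitive/Noetherian half is readily available: the instance $\psi := \top$, $\chi := r\to\alpha$ has fixpoint $\theta'$ with $\vdash\theta' \iff \opr\alpha$ and makes \jv\ collapse to $\vdash\opr\alpha \iff \opr(\opr\alpha\to\alpha)$, whose backward direction is precisely the $\opr$-L\"ob axiom $\biv = \opr(\opr\alpha\to\alpha)\to\opr\alpha$. Granting in addition the gathering principle \quata, so that $(\opr\alpha\to\alpha)\tto\opr(\opr\alpha\to\alpha)$, and combining it with $\biv$ under \na\ and \tr, delivers exactly $(\opr\alpha\to\alpha)\tto\opr\alpha$, closing the argument.

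The delicate point I expect to fight with is therefore the interplay of the two required ingredients, the semi-transitive strength \biv\ and the gathering strength \quata, which are frame-theoretically independent and must both be extracted from the scheme \jv\ (consistently with the fact that \loeba\ already entails both, via the coincidence $\iglam = \iam\thad\loebb\thad\quata$). I would either harvest \quata\ from a third, suitably chosen instance of \jv, or, more robustly, replay the classical de Jongh--Sambin derivation of L\"ob from the existence of fixpoints directly for the arrow: necessitate the fixpoint equation to obtain $\theta\tto B$ and $B\tto\theta$, and break the apparent circularity between $(\theta\to\alpha)\tto\theta$ and $(\theta\to\alpha)\tto B$ using the transitivity step mined above. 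Verifying that this knot really unties using \emph{only} material provable from \jv\ \textemdash\ rather than silently assuming the very \loeba\ we are after \textemdash\ is where the care is needed.
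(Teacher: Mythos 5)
Your opening moves are sound and in fact coincide with the paper's first step: the instance $\psi:=\top$, $\chi:=r\to\alpha$ of \jv\ does yield \loebb, and your \iam-manipulations are correct (deriving $B$ from $(\theta\to\alpha)\tto\theta$ via \na, \ka, \tr, and the chain reducing that target to the single arrow $(\opr\alpha\to\alpha)\tto\opr\alpha$). But the proposal has a genuine gap exactly at the point you flag as the ``delicate point'': you never derive the gathering strength \quata\ (equivalently, the arrow $(\opr\alpha\to\alpha)\tto\opr\alpha$) from \jv\ --- you only \emph{grant} it, and neither of your two sketched exit strategies is carried out. This is not a routine omission that more of the same reasoning would fill. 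The formula $(\opr\alpha\to\alpha)\tto\opr\alpha$ is not derivable in $\iam\thad\loebb$: consider the frame with $a\sqsubset b\sqsubset c$, $a\sqsubset c$, and $\preceq$ discrete. It is brilliant, semi-transitive and Noetherian, hence validates $\iam\thad\loebb$; but taking $\alpha$ false everywhere, $b\Vdash\opr\alpha\to\alpha$ while $b\nVdash\opr\alpha$, so $a\nVdash(\opr\alpha\to\alpha)\tto\opr\alpha$. So everything you have actually secured (\loebb\ plus \iam-reasoning) provably cannot close your reduction; the missing step requires extracting genuinely new content from \jv, and that extraction is the whole difficulty of the theorem.

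The paper cuts this knot in a different way, and you already possess every ingredient needed to do the same: do not try to prove the fixpoint $\theta$ outright by arrow calculations. Once \loebb\ is available, $\quatb$ (the \emph{box} version of transitivity, unlike \quata) is derivable from it, and with it Lemma~\ref{brilsmurf} gives \emph{uniqueness of fixpoints} over $\iam\thad\loebb$: any two fixpoints of a formula whose substitution variable occurs only under $\tto$ are provably equivalent. Now observe that $\top$ is a \emph{second} fixpoint of your very formula $(r\to\alpha)\tto\alpha$, already over \iam, since $((\top\to\alpha)\tto\alpha)\iff\top$ holds by \na. As \jv\ says that $\theta=(\opr\alpha\to\alpha)\tto\alpha$ is also a fixpoint, uniqueness yields $\theta\iff\top$, i.e.\ $\vdash\theta$, which is the desired instance of \loeba. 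This route needs no gathering principle at all, which is precisely why it succeeds where your reduction --- whose target demands strength strictly beyond $\iam\thad\loebb$ --- stalls.
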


\begin{proof}
The {\jv}-fixed point of 
$\opr(p \to \phi)$, where $p$ does not occur in $\phi$, is  $\opr(\top \to \phi)$.
This tells us that we, $\iam\thad\jv$-verifiably, have \loebb. 
It follows, by Lemma~\ref{brilsmurf}, that we  have uniqueness of fixed points in $\iam\thad\jv$. 
Now consider the formula $(p \to \phi) \tto \phi$, where $p$ does not occur in $\phi$. On the one hand,
$\top$ is a fixed point of this formula even over \iam. On the other hand, {\jv} gives us
$(\opr \phi \to \phi) \tto \phi$ as a fixed point. By uniqueness, we find that  $(\opr \phi \to \phi) \tto \phi$ is $\iam\thad\jv$-provable.
\end{proof}

We note that, in the proof of Theorem~\ref{olijkesmurf}, we used that we are allowed to choose $p$ locally. 
We summarize Theorems~\ref{grotesmurf} and \ref{olijkesmurf} in one statement.

\begin{corollary}\label{samenopwegsmurf}
$\iglam$ coincides with $\iam\thad\jv$. 
\end{corollary}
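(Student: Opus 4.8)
The plan is to prove the two set-theoretic inclusions separately, since the corollary is exactly the conjunction of the two preceding theorems once one accounts for the bookkeeping of the $\thad$ operator. Before either inclusion I would record two facts. First, both $\loeba$ and $\jv$ are schemes, hence sets of formulas closed under substitution; by the observation following the definition of $\thad$ in \rfsc{sec:logics}, this guarantees that $\iglam = \iam\thad\loeba$ and $\iam\thad\jv$ are genuine \iam-logics, i.e., closed under modus ponens, necessitation \emph{and} substitution. Second, I would invoke the minimality of $\thad$-closures: if $\Lambda$ is any \iam-logic with $\Lambda\vdash X$, then $\iam\thad X \subseteq \Lambda$, because $\iam\thad X$ is by definition the smallest superset of $\iam\cup X$ closed under modus ponens and necessitation, and $\Lambda$ is such a superset.

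For $\iam\thad\jv \subseteq \iglam$ I would use Theorem~\ref{grotesmurf}: with $\phi = \psi\tto\chi$ and $\theta = (\psi\opr\chi\top\tto\chi\top)$ it gives $\iglam \vdash \theta \iff \phi\theta$ for arbitrary $\psi,\chi$, and this is precisely an arbitrary instance of the scheme $\jv$. Hence $\iglam\vdash\jv$, and since $\iglam$ is an \iam-logic the minimality fact yields $\iam\thad\jv \subseteq \iglam$. For the reverse inclusion $\iglam \subseteq \iam\thad\jv$ I would use Theorem~\ref{olijkesmurf}, which states $\iam\thad\jv \vdash \loeba$; applying the same minimality fact with $X = \loeba$ to the \iam-logic $\iam\thad\jv$ gives $\iglam = \iam\thad\loeba \subseteq \iam\thad\jv$.

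Combining the two inclusions yields $\iglam = \iam\thad\jv$. There is no genuine obstacle here: the mathematical content lives entirely in Theorems~\ref{grotesmurf} and~\ref{olijkesmurf}. The only point that needs care is keeping the distinction between mere $\thad$-closure (under modus ponens and necessitation) and being a logic (which additionally requires closure under substitution), and checking that both $\loeba$ and $\jv$, being substitution-closed schemes, make their $\thad$-closures into logics, so that the minimality argument applies symmetrically in both directions.
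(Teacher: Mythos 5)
Your proof is correct and follows exactly the paper's route: the corollary is stated there simply as a summary of Theorems~\ref{grotesmurf} and~\ref{olijkesmurf}, which give the two inclusions $\iam\thad\jv \subseteq \iglam$ and $\iglam \subseteq \iam\thad\jv$. Your bookkeeping about substitution closure is harmless but not actually needed, since the minimality of a $\thad$-closure only requires the containing theory to be closed under modus ponens and necessitation (which both sides are by definition); indeed the paper derives the substitution-closure of $\iam\thad\jv$ as a \emph{consequence} of this corollary rather than using it as an ingredient.
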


We see that it follows from  that $\iam\thad\jv$ is indeed a logic in that it is closed under substitution. This could have been proven, of course,
without the detour over the characterization.

\begin{theorem} \label{th:jvstable}
$\iam \thad \jv$ is extension stable.
\end{theorem}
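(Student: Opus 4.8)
The plan is to exploit the characterization already obtained. By Corollary~\ref{samenopwegsmurf} we have $\iam\thad\jv = \iglam = \iam\thad\loeba$, so it suffices to prove extension stability of $\iglam$, and for this we are free to use whichever axiomatization over $\iam$ is most convenient. The clean choice is $\Gamma = \{\loeba\}$, i.e.\ the scheme $(\opr\phi\to\phi)\tto\phi$: by the sufficient condition for extension stability recorded in \rfsc{sec:exsta} (the axiomatic criterion from \cite{tlav19subf}), it is enough to check that for each instance $\phi = (\opr\alpha\to\alpha)\tto\alpha$ of $\loeba$ and each fresh variable $p$ we have $\iglam\vdash p\to\stex{p}{\phi}$. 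In fact I expect to prove the stronger $\iglam\vdash\stex{p}{\phi}$, whence $p\to\stex{p}{\phi}$ follows at once.

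The crux is a short computation showing that the $p$-translation of a $\loeba$-instance is again, up to $\iam$-provable equivalence, a $\loeba$-instance. Write $\beta$ for $p\to\stex{p}{\alpha}$. Unfolding the $\tto$-clause of the translation and using the identity $\iam\vdash\stex{p}{(\opr\alpha)}\iff\opr(p\to\stex{p}{\alpha}) = \opr\beta$ together with the fact that $\stex{p}{\cdot}$ commutes with $\to$, one gets
\[
\stex{p}{\phi}\;\iff_{\iam}\;\bigl(p\to(\opr\beta\to\stex{p}{\alpha})\bigr)\tto\beta,
\]
where the inner equivalence is moved under $\tto$ using that $\tto$ is congruential in its antecedent over $\iam$. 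Now apply the intuitionistic exchange law $A\to(B\to C)\iff B\to(A\to C)$ with $A := p$, $B := \opr\beta$, $C := \stex{p}{\alpha}$ to rewrite the antecedent:
\[
p\to(\opr\beta\to\stex{p}{\alpha})\;\iff\;\opr\beta\to(p\to\stex{p}{\alpha})\;=\;\opr\beta\to\beta.
\]
Hence $\stex{p}{\phi}$ is $\iglam$-equivalent to $(\opr\beta\to\beta)\tto\beta$, which is literally the instance of $\loeba$ at $\beta$ and so is provable in $\iglam$. Therefore $\iglam\vdash\stex{p}{\phi}$, as required, and by the criterion $\iam\thad\jv$ is extension stable.

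The only real content is the identification in the previous paragraph; the rest is bookkeeping. The main thing to get right is that the currying/exchange step lands exactly on a $\loeba$-instance rather than on some weaker formula — this is what makes $\loeba$ (as opposed to $\jv$) the convenient axiom to feed into the criterion, and it is why identifying $\iam\thad\jv$ with $\iglam$ beforehand pays off. A minor point to double-check is the congruence of $\tto$ in its first argument, needed to push the $\iam$-equivalence for $\stex{p}{(\opr\alpha)}$ inside the arrow; this follows from $\na$ (to derive the required $\tto$-implications between $\iam$-equivalent formulas) and $\tr$ (to compose them), and requires nothing beyond $\iam$.
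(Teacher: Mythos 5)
Your proof is correct and matches the paper's own argument: both reduce the claim to extension stability of $\iglam$ (via Corollary~\ref{samenopwegsmurf}) and then show, using the intuitionistic currying/exchange law over $\iam$, that the translation $\stex{p}{((\opr\alpha\to\alpha)\tto\alpha)}$ is $\iam$-equivalent to the $\loeba$-instance $(\opr\beta\to\beta)\tto\beta$ where $\beta := p\to\stex{p}{\alpha}$. The only cosmetic difference is that you spell out the antecedent-congruence of $\tto$ (via $\na$ and $\tr$) needed to push the equivalence $\stex{p}{(\opr\alpha)}\iff\opr\beta$ under the arrow, a step the paper leaves implicit.
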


\begin{proof}
It is sufficient to show that {\iglam} is extension stable. We leave the verification that {\iam} is extension stable to the reader.
We have: 
\[
 \stex{p}{((\opr \phi \to \phi) \tto \phi)} \quad = \quad  (p \to (\opr (p \to \stex{p}{\phi}) \to \stex{p}{\phi})) \tto (p\to \stex{p}{\phi}).
 \]
Clearly, 
\[
 (\dag) \qquad \iam\vdash \quad \stex{p}{((\opr \phi \to \phi) \tto \phi)} \quad \iff \quad (\opr (p \to \stex{p}{\phi}) \to (p \to \stex{p}{\phi})) \tto (p\to \stex{p}{\phi}).
 \]
The right-hand-side of (\dag) is an instance of \loeba.
\end{proof}

We end this section by comparing {\iglam} to the stronger theory \iglwm.
We remind the reader that the principle {\weak} can be equivalently written as 

\begin{description}
\item[\weak]
$(\phi \tto \psi) \to ((\opr \psi \to \phi)\tto \psi)$.
\end{description}

 We define $\iglwm := \iam\thad\weak$. It is known that this is a stronger logic than $\iam$ \cite{iemh:prop05}. To separate {\iglam} from \iglwm, we use here a lemma that is an adaptation
 of an example from de Jongh and Visser  \cite[p. 48]{dejo:expl91}.
 
 \begin{lemma}\label{handigesmurf}
 $\iglam \nvdash (\opr\bot\tto\bot) \to \opr\bot$.
 \end{lemma}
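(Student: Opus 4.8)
The statement is a non-derivability claim, so the plan is to exhibit a single model of $\iglam$ (indeed of $\igla$) on which $(\opr\bot\tto\bot)\to\opr\bot$ fails at some world. The cleanest tool is Kripke semantics: every Lewisian frame is sound for $\iam$, and a finite \emph{Noetherian gathering} frame validates $\loebb$ and $\quata$, hence $\loeba$ (recall that $\iam\thad\loebb\thad\quata$ coincides with $\iglam$), and in fact validates $\igla=\loglg$ by Theorem~\ref{th:minco}. So it suffices to build one such frame refuting the formula. First I would unwind the two sides: since $\opr\bot=\top\tto\bot$, a world $w$ satisfies $w\kmodels\opr\bot$ exactly when $w$ is $\sqsubset$-maximal, and $w\kmodels\opr\bot\tto\bot$ exactly when every $\sqsubset$-successor of $w$ fails $\opr\bot$, i.e.\ is itself non-$\sqsubset$-maximal. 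Thus I need a world $k$ that has a $\sqsubset$-successor (so $k\not\kmodels\opr\bot$) while every $\sqsubset$-successor of $k$ has a further $\sqsubset$-successor (so $k\kmodels\opr\bot\tto\bot$).

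The second step is to realize this on a Noetherian frame, which at first looks impossible, since iterating ``every successor has a successor'' seems to produce an infinite $\sqsubset$-chain. The key observation, and this is the adaptation of de Jongh and Visser's example, is that \emph{gathering} relocates the second-level successor along $\preceq$ rather than along $\sqsubset$. Concretely, take $\ma F=\la W,\preceq,\sqsubset\ra$ with $W=\{k,\ell_0,\ell_1\}$, with $\sqsubset=\{(k,\ell_0),(\ell_0,\ell_1)\}$, and with $\preceq$ the reflexive closure of $\{(\ell_0,\ell_1)\}$. Here $\ell_1$ is the unique $\sqsubset$-maximal point, $k$'s only $\sqsubset$-successor is $\ell_0$, and $\ell_0$ is non-maximal because $\ell_0\sqsubset\ell_1$; hence $k\kmodels\opr\bot\tto\bot$ while $k\not\kmodels\opr\bot$, so $(\opr\bot\tto\bot)\to\opr\bot$ fails at $k$. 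It then remains to check that $\ma F$ is a legitimate frame validating $\igla$: that $\preceq$ is a partial order and $\preceq\comp\sqsubset\subseteq\sqsubset$ (\strictp) holds trivially; that $\sqsubset$ is Noetherian (its only chain is $k\sqsubset\ell_0\sqsubset\ell_1$); that gathering holds for the single triple $k\sqsubset\ell_0\sqsubset\ell_1$, which is exactly why we put $\ell_0\preceq\ell_1$; and that semi-transitivity holds, witnessed by $x=\ell_0$ for that same triple.

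The main subtlety to get right is precisely the interplay that lets the Noetherian constraint survive. The construction works only because $\iglam$ does \emph{not} contain the brilliancy principle \lb\ ($k\sqsubset\ell\preceq m\To k\sqsubset m$): with \lb, the pair $\ell_0\sqsubset\ell_1$ together with $\ell_0\preceq\ell_1$ would force $k\sqsubset\ell_1$, reinstating the forbidden $\sqsubset$-chain and validating the formula. I would flag this explicitly, since it both explains why the counterexample lives in $\iglam$ but not in stronger brilliant systems and shows where the de Jongh--Visser idea is doing its work. (For the classical strengthening alluded to in the introduction one cannot use a Kripke frame at all: collapsing $\preceq$ to the identity turns gathering into $\ell_0=\ell_1$ and makes the formula valid, so there one must instead pass to a non-normalized $\tto$-algebra in which $\di$ fails. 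But for the intuitionistic statement as stated, the three-world frame above already suffices.)
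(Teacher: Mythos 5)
Your proof is correct and takes essentially the same approach as the paper: the paper's own proof is exactly a three-point Kripke countermodel, which it does not spell out but simply cites as Example~6.7 of \cite{LitakV18:im}, and your frame $k\sqsubset\ell_0\sqsubset\ell_1$ with $\ell_0\preceq\ell_1$ (Noetherian, gathering, deliberately non-brilliant) is precisely that model. Your verification---soundness of $\iglam$ on Noetherian gathering frames via Theorem~\ref{th:minco}, and the failure of $(\opr\bot\tto\bot)\to\opr\bot$ at $k$---supplies the details the paper delegates to the citation.
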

 
 \begin{proof}
  We can show that $\iglam \nvdash (\opr\bot\tto\bot) \to \opr\bot$ on a three point Kripke model. See \cite[Example 6.7]{LitakV18:im}.
 \end{proof}
 
 \begin{theorem}\label{babysmurf}
 {\iglwm} strictly extends {\iglam}.
 \end{theorem}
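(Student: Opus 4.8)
The plan is to establish two things: that $\iglwm$ contains $\iglam$, and that the containment is proper. Since $\iglwm = \iam \thad \weak$ and $\iglam = \iam \thad \loeba$, the inclusion $\iglam \subseteq \iglwm$ reduces to showing $\iglwm \vdash \loeba$. First I would instantiate the scheme $\weak$, namely $(\phi \tto \psi) \to ((\opr\psi \to \phi) \tto \psi)$, by taking both schematic letters to be the same formula $\phi$. This yields $(\phi \tto \phi) \to ((\opr\phi \to \phi) \tto \phi)$. Since $\vdash \phi \tto \phi$ holds already in $\iam$ (by \na\ applied to $\vdash \phi \to \phi$, or equivalently by the law \lna{CI}), modus ponens delivers $(\opr\phi \to \phi) \tto \phi$, which is exactly $\loeba$. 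Hence $\iglam \subseteq \iglwm$.

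For strictness, I would exhibit a formula provable in $\iglwm$ but, by Lemma~\ref{handigesmurf}, not in $\iglam$; the natural candidate is $(\opr\bot \tto \bot) \to \opr\bot$. To derive it in $\iglwm$, instantiate $\weak$ with $\phi := \opr\bot$ and $\psi := \bot$, obtaining $(\opr\bot \tto \bot) \to ((\opr\bot \to \opr\bot) \tto \bot)$. It then remains to replace the antecedent $\opr\bot \to \opr\bot$ of the inner arrow by $\top$, i.e.\ to show $((\opr\bot \to \opr\bot) \tto \bot) \to \opr\bot$. Since $\opr\bot \to \opr\bot$ is an intuitionistic theorem, \na\ gives $\vdash \top \tto (\opr\bot \to \opr\bot)$, and composing this with $(\opr\bot \to \opr\bot) \tto \bot$ via \tr\ yields $\top \tto \bot$, which is $\opr\bot$ by definition. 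Chaining the two implications produces $(\opr\bot \tto \bot) \to \opr\bot$.

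Finally, invoking Lemma~\ref{handigesmurf} --- which shows $\iglam \nvdash (\opr\bot \tto \bot) \to \opr\bot$ via a three-point Kripke countermodel --- together with the derivation just given completes the separation, so $\iglwm$ strictly extends $\iglam$.

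I would expect the only mildly delicate point to be the congruence step in the second paragraph: one must pass from $(\opr\bot \to \opr\bot) \tto \bot$ to $\top \tto \bot$ using only the available structural principles \na\ and \tr, rather than tacitly assuming that $\tto$ is a congruence for propositional equivalence in its left argument. This is routine but is the one place where the minimal axioms of $\iam$ are actually used in an essential way.
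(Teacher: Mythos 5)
Your proposal is correct and follows exactly the paper's own argument: the inclusion via specializing $\weak$ with $\psi := \phi$, and strictness via the instance $\phi := \opr\bot$, $\psi := \bot$ combined with Lemma~\ref{handigesmurf}. The only difference is that you spell out the routine step of replacing $\opr\bot \to \opr\bot$ by $\top$ using \na\ and \tr, which the paper leaves implicit.
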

 
 \begin{proof}
 To see that  {\iglwm} extends {\iglam}, we note that specializing {\weak} by taking $\psi:= \phi$
 gives us {\loeba} over \iam.
 
We take $\phi := \opr\bot$ and $\psi := \bot$ in \iglwm. This gives us $\iglwm \vdash (\opr\bot\tto\bot) \to \opr\bot$.
By Lemma~\ref{handigesmurf}, $\iglam \nvdash (\opr\bot\tto\bot) \to \opr\bot$.
 \end{proof}


\section{\js} \label{sec:js}

We proceed with the study of explicit fixed points in the style of de Jongh and Sambin.
We define {\js} as follows.
\begin{description}
\item[\js]
$(\psi \tto \chi)\top \iff (\psi \tto \chi)(\psi \tto \chi)\top$.\\
Written differently:\\
$(\psi\top \tto \chi\top) \iff (\psi \tto \chi)(\psi\top \tto \chi\top)$
\end{description}

The principle {\js} is \emph{prima facie} simpler than {\jv}, however, the impression is somewhat misleading.
As we will see, over $\iam$, neither principle implies the other.

We begin with a natural preliminary result:

\begin{theorem}
$\iam \thad \js$ is a logic.
\end{theorem}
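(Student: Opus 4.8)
The plan is to reduce everything to a closure property of the scheme \js\ itself. By definition $\iam \thad \js$ is the closure of $\iam \cup \js$ under Modus Ponens and Necessitation, so it automatically contains \iam\ and is closed under those two rules. Hence the only thing to check for it to be a logic is closure under substitution, and by the observation recorded right after the definition of a logic (``if $X$ is closed under substitution, then so is $\Lambda \thad X$''), it suffices to prove that the set of instances of \js\ is closed under substitution.

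So I would fix an instance of \js. Writing $\theta := \psi \tto \chi$ for the template (a formula in which the designated substitution variable $r$ may occur), the instance is, using the associativity $(\phi\psi)\chi=\phi(\psi\chi)$ noted earlier,
\[
\theta\top \;\iff\; \theta\theta\top,
\qquad\text{i.e.}\qquad
\theta[r:=\top] \;\iff\; \theta[r:=\theta[r:=\top]].
\]
First I would treat a substitution $s$ that is \emph{harmless} for $r$, meaning $s(r)=r$ and $r$ does not occur in $s(a)$ for any atom $a\neq r$. For such $s$ the substitution commutes with the $r$-slot, since $s(\top)=\top$ and no occurrences of $r$ are created or destroyed: $s(\theta\top)=(s\theta)\top$ and $s(\theta\theta\top)=(s\theta)(s\theta)\top$. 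Consequently $s$ sends the instance to
\[
(s\theta)\top \;\iff\; (s\theta)(s\theta)\top,
\]
which is again an instance of \js, now with template $s\theta = s\psi \tto s\chi$ (still of the form $\cdot \tto \cdot$, as required).

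For an arbitrary substitution $s$ the only difficulty is a possible clash with the slot variable $r$ --- either $s$ moves $r$ or it introduces fresh occurrences of $r$ --- and I would remove this by renaming the slot. Pick a variable $r'$ fresh for both $\theta$ and $s$. Since $r'$ does not occur in $\theta$, replacing the slot $r$ by $r'$ leaves the instance literally unchanged: $(\theta[r:=r'])[r':=\top]=\theta[r:=\top]$, and likewise for the nested substitution, so the given instance is equally the $r'$-instance of \js\ with template $\theta[r:=r']$. With respect to $r'$ the substitution $s$ is now harmless by the choice of $r'$, and the previous paragraph applies verbatim, exhibiting $s$ of the instance as a genuine \js-instance. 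This establishes closure of \js\ under substitution and hence the theorem.

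The main (indeed only) point requiring care is exactly this interaction between an external substitution and the designated substitution variable $r$; everything else is bookkeeping with the identity $(\phi\psi)\chi=\phi(\psi\chi)$. The fresh-renaming step is what guarantees that no variable capture occurs and that the image is a $\js$-instance on the nose, rather than merely a formula provably equivalent to one.
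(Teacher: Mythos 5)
Your proposal is correct and takes essentially the same approach as the paper: both reduce the claim to showing that the set of \js-instances is literally closed under substitution, handle the clash with the designated slot variable by renaming (the paper renames the substitution variable $p$ so that it does not occur in the substituted formula, exactly your fresh-$r'$ step), and then commute the substitution through the slot to exhibit the image as a \js-instance with template $s\psi \tto s\chi$. The only cosmetic difference is that you treat an arbitrary simultaneous substitution directly via your ``harmless'' lemma, whereas the paper displays the computation for a single-variable substitution $[q:=\nu]$.
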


We remind the reader that the results of applying $\thad$ are not automatically closed under substitution.

\begin{proof}
Suppose that $q$ is distinct from the variables in $\psi$ and $\chi$ and that $q$ is distinct from the substitution variable $p$.
Let $\nu$ be given. By possibly renaming the substitution variable $p$ we can arrange that $p$ does not occur in $\nu$.
We have:
\begin{multline*}
\app{q}{((\psi \tto \chi)\top \iff (\psi \tto \chi)(\psi \tto \chi)\top)}{\nu}  = \\ 
(\app{q}{\psi}{\nu} \tto \app{q}{\chi}{\nu})\top \iff \\ ((\app{q}{\psi}{\nu} \tto \app{q}{\chi}{\nu})(\app{q}{\psi}{\nu} \tto \app{q}{\chi}{\nu})\top).
\end{multline*}
The result of substitution is again an instance of \js.
\end{proof}

\begin{theorem} \label{th:jsstable}
$\iam\thad\js$ is extension stable.
\end{theorem}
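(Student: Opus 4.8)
The plan is to apply the extension-stability criterion stated earlier in the excerpt: since $\iam \thad \js$ is axiomatized over $\iam$ by the single scheme $\js$ (together with whatever axiomatizes $\iam$ itself), it suffices, by the Theorem of \cite{tlav19subf} quoted just after the definition of extension stability, to verify that each axiom $\chi$ of the theory satisfies $\iam\thad\js \vdash p \to \stex{p}{\chi}$ for $p$ not occurring in $\chi$. We are told in the proof of Theorem~\ref{th:jvstable} that $\iam$ itself is extension stable, so the only work is to check the scheme $\js$.

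First I would compute the $\stex{p}{\cdot}$-translation of a generic instance of $\js$. Writing $\alpha \deq \stex{p}{\psi}$ and $\beta \deq \stex{p}{\chi}$, and using the defining clause $\stex{p}{(\sigma \tto \tau)} = ((p \to \stex{p}{\sigma}) \tto (p \to \stex{p}{\tau}))$ together with the derived identity $\iam \vdash \stex{p}{(\opr\phi)} \iff \opr(p \to \stex{p}{\phi})$, I would expand both sides of
\[
(\psi\top \tto \chi\top) \iff (\psi \tto \chi)(\psi\top \tto \chi\top).
\]
Here the key point is bookkeeping the substitution variable: the outer substitution $(\psi\tto\chi)(\cdots)$ feeds the fixpoint formula into the designated variable $r$, and since $\stex{p}{\cdot}$ commutes with substitution in the non-$r$ positions while inserting the guard $p \to (\cdot)$ at each $\tto$, the translated right-hand side should again exhibit the shape of a $\js$-substitution, but now with the translated outer formula $((p\to\alpha)\tto(p\to\beta))$ playing the role of $\psi\tto\chi$, and with the translated inner fixpoint $((p\to\alpha)\top \tto (p\to\beta)\top)$ fed into it. I expect that modulo the $\iam$-provable rewriting of $\stex{p}{\opr(\cdots)}$, the translated instance is $\iam$-equivalent to an instance of $\js$ outright --- just as in Theorem~\ref{th:jvstable} the $(\dag)$-step reduced the translation of $\loeba$ to a genuine instance of $\loeba$.

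Having reduced $\stex{p}{(\text{instance of }\js)}$ to an instance of $\js$ itself (up to $\iam$-provable equivalence), I would conclude $\iam\thad\js \vdash \stex{p}{(\text{instance of }\js)}$, whence trivially $\iam\thad\js \vdash p \to \stex{p}{(\text{instance of }\js)}$; the criterion then delivers extension stability. \textbf{The main obstacle} I anticipate is purely the careful propagation of the substitution variable $r$ through the $\stex{p}{\cdot}$ operator: one must confirm that the guards $p \to (\cdot)$ inserted by the translation line up correctly so that the translated outer operator and the translated inner fixpoint still match as an instance of the $\js$ scheme, rather than producing a formula that merely resembles one. This is a syntactic verification rather than a conceptual difficulty, and once the shape of $\stex{p}{(\psi\top \tto \chi\top)}$ is pinned down it should close immediately, in direct parallel with the $(\dag)$-computation for $\jv$.
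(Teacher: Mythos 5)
Your proposal is correct and follows essentially the same route as the paper: translate a generic instance of $\js$ under $\stex{p}{\cdot}$ and observe that, with $(p\to\stex{p}{\psi}) \tto (p\to\stex{p}{\chi})$ playing the role of $\psi\tto\chi$, the result is again an instance of $\js$, so the stability criterion applies. The only (harmless) overcaution is your appeal to the $\opr$-rewriting identity and ``up to $\iam$-provable equivalence'': unlike $\loeba$/$\jv$, the scheme $\js$ contains no occurrence of $\opr$, so once one uses the commutation of $\stex{p}{\cdot}$ with substitution, the translated formula is \emph{syntactically} an instance of $\js$, exactly as in the paper's proof.
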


\begin{proof}
Let $q$ be distinct from the variables in $\psi$ and $\chi$ and from the substitution variable $p$.

We have: 
{\footnotesize
\begin{multline*}
\stex{q}{((\psi \tto \chi)\top \iff (\psi \tto \chi)(\psi \tto \chi)\top)}  = \\ 
(((q \to \stex{q}{\psi}) \tto (q \to\stex{q}{\chi}))\top \iff ((q\to \stex{q}{\psi}) \tto \\
(q\to \stex{q}{\chi}))((q\to\stex{q}{\psi}) \tto (q\to \stex{q}{\chi}))\top).
\end{multline*}
}
The resulting formula is again an instance of \js.
\end{proof}

The proof of the following theorem is essentially due to de Jongh and Visser \cite{dejo:expl91}. In the intuitionistic setting, it has been also noted by Iemhoff et al. \cite{iemh:prop05}.

\begin{theorem}\label{knutselsmurf}
$\iglwm \vdash \js$.
\end{theorem}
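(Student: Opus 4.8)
The plan is to obtain \js\ not directly, but by showing that over \iglwm\ the de Jongh--Sambin fixpoint of $\phi\deq\psi\tto\chi$, namely $D\deq\phi\top=\psi\top\tto\chi\top$, coincides with the de Jongh--Visser fixpoint $\theta\deq\psi\opr\chi\top\tto\chi\top$ produced by Theorem~\ref{grotesmurf}. Writing $C\deq\chi\top$ (a $p$-free formula), we have $D=\psi\top\tto C$ and $\theta=\psi(\opr C)\tto C$, so the two candidates differ only in what is substituted for $p$ in the antecedent. Once $\iglwm\vdash\theta\iff D$ is in hand, the theorem follows cheaply: Theorem~\ref{grotesmurf} gives $\iglam\vdash\theta\iff\phi\theta$, hence $\iglwm\vdash\theta\iff\phi\theta$; necessitating $\theta\iff D$ and feeding $\dotbox(\theta\iff D)$ into \textbf{Sub1} of Lemma~\ref{sub1sub2} (with context $\phi$) yields $\phi\theta\iff\phi D$; chaining $D\iff\theta\iff\phi\theta\iff\phi D$ then delivers $D\iff\phi D$, which is exactly \js.

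Before that I would record the preliminaries that license the substitution lemmas in \iglwm. By Theorem~\ref{babysmurf} (specialising \weak\ at $\psi:=\phi$) we have $\iglwm\vdash\loeba$, so \iglwm\ extends \iglam; in particular it proves \loebb\ and \lv, and hence $\quatb$ (from $\phi\tto\opr\phi$ and \tr\ one gets $\opr\phi\to\opr\opr\phi$). Thus Lemma~\ref{sub1sub2} is available over \iglwm.

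The heart of the argument --- and the only place where \weak\ is genuinely used --- is establishing $\iglwm\vdash\theta\iff D$, i.e. $\psi(\opr C)\tto C\iff\psi\top\tto C$. First, \textbf{Sub1} applied to the context $\psi$ gives $\dotbox(\top\iff\opr C)\to(\psi\top\iff\psi(\opr C))$; since $(\top\iff\opr C)$ is \ipc-equivalent to $\opr C$ and $\quatb$ yields $\opr C\to\dotbox\opr C$, this simplifies to $\opr C\to(\psi\top\iff\psi(\opr C))$, whence $(\psi\top\wedge\opr C)\iff(\psi(\opr C)\wedge\opr C)$. Antitonicity of $\tto$ in its antecedent (from \na\ and \tr) turns this into $(\psi\top\wedge\opr C)\tto C\iff(\psi(\opr C)\wedge\opr C)\tto C$. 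Finally \weak, in the supergathering form $((\alpha\wedge\opr\beta)\tto\beta)\iff(\alpha\tto\beta)$ (the nontrivial direction being precisely the table version of \weak, the other being antitonicity), lets me absorb the conjunct $\opr C$ on both sides, giving $\psi\top\tto C\iff\psi(\opr C)\tto C$, that is $D\iff\theta$.

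The main obstacle is this coincidence step. One must justify replacing $\opr\chi\top$ by $\top$ inside the antecedent, which is exactly what supergathering buys; and one must be careful with the box bookkeeping in the substitution lemma, invoking \textbf{Sub1} rather than \textbf{Sub2} (since $p$ need not be $\tto$-guarded in $\psi$), which is why the detour through $\quatb$ and $\dotbox$ is needed to strip $\opr C$ down from $\dotbox\opr C$. Everything else is formal transport of the already-known de Jongh--Visser fixpoint property along the equivalence $\theta\iff D$.
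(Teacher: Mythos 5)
Your proposal is correct and takes essentially the same route as the paper: both proofs inherit \jv\ from \iglam\ (since specializing \weak\ gives \loeba) and then establish \js\ by showing that the de Jongh--Sambin fixpoint $\psi\top\tto\chi\top$ coincides over \iglwm\ with the de Jongh--Visser fixpoint $\psi\opr\chi\top\tto\chi\top$, using \weak\ twice and the substitution principle \textsf{Sub1} once. The only cosmetic differences are that you work with the conjunctive (supergathering) form $((\phi\wedge\opr\psi)\tto\psi)\to(\phi\tto\psi)$ of \weak\ where the paper uses the equivalent implicational form $(\phi\tto\psi)\to((\opr\psi\to\phi)\tto\psi)$, and that you spell out the final transport step (necessitating $\theta\iff D$ and applying \textsf{Sub1} to the context $\psi\tto\chi$) which the paper leaves implicit.
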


\begin{proof}
Since {\iglwm} extends {\iglam}, we have {\jv} in  {\iglwm}. Moreover:
\begin{eqnarray*}
\iglwm \vdash (\psi \opr \chi\top \tto \chi \top) & \iff &
((\opr \chi \top \to \psi \opr \chi\top) \tto \chi \top) \\
& \iff & ((\opr \chi \top \to \psi \top) \tto \chi \top) \\
& \iff & ( \psi \top \tto \chi \top) 
\end{eqnarray*}
Here the first and third equivalences are instances of \weak\ and the second equivalence is an instance
of the substitution principle {\sf Su1}, noting that $\opr\chi\top$ is equivalent to $\dotbox(\opr\chi\top \iff \top)$.
\end{proof}

Now consider the principle 
\begin{description}
\item[\pers]
$(\phi\tto\psi) \to \opr(\phi\tto\psi)$.
\end{description}
We define $\iglpm:=\iam\thad\loebb\thad\pers$.

We prove {\js} in {\iglpm}.
The development here is essentially the one from Smory\'nski's \cite[Chapter 4]{smor:self85}.

\begin{theorem}\label{smurfin}
$\iglpm \vdash \js$.
\end{theorem}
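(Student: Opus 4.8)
The plan is to set $\phi := \psi \tto \chi$ and $\theta := \phi\top$ (substitution with respect to the designated variable $r$), so that the right-hand side $(\psi \tto \chi)(\psi \tto \chi)\top$ is exactly $\phi\theta$ and $\js$ reduces to the single biconditional $\theta \iff \phi\theta$. I would prove the two implications separately: $\theta \to \phi\theta$ using only $\pers$ and substitution, and the harder $\phi\theta \to \theta$ using $\loebb$.

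First I would record the substitution fact. Since the only connective binding $r$ in $\phi = \psi \tto \chi$ is the displayed $\tto$, every occurrence of $r$ in $\phi$ is guarded, so the principle \textbf{Sub2} of Lemma~\ref{sub1sub2} applies to $\phi$ (recall $\iglpm \vdash \quatb$, the transitivity principle being the usual positive consequence of $\loebb$ over $\iam$ that \textbf{Sub2} needs). Instantiating with $\top$ and $\theta$ and using $\iam \vdash (\top \iff \theta) \iff \theta$ together with $\phi\top = \theta$ gives
\[
(\ast)\qquad \iglpm \vdash \opr\theta \to (\theta \iff \phi\theta).
\]
For $\theta \to \phi\theta$: the formula $\theta = \psi\top \tto \chi\top$ is a strict implication, so $\pers$ yields $\theta \to \opr\theta$; feeding $\opr\theta$ into $(\ast)$ produces $\theta \iff \phi\theta$, whence $\phi\theta$.

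The main work is $\phi\theta \to \theta$, and the key idea is to apply L\"ob's rule to this very implication. I would first show
\[
\iglpm \vdash \opr(\phi\theta \to \theta) \to (\phi\theta \to \theta).
\]
Assuming $\opr(\phi\theta \to \theta)$ and $\phi\theta$, note that $\phi\theta = \psi\theta \tto \chi\theta$ is again a strict implication, so $\pers$ yields $\opr(\phi\theta)$; $\opr$-normality (available already in $\iam$) combines $\opr(\phi\theta \to \theta)$ with $\opr(\phi\theta)$ to give $\opr\theta$; and then $(\ast)$ delivers $\theta \iff \phi\theta$, so from $\phi\theta$ we obtain $\theta$. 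Writing $\alpha := \phi\theta \to \theta$, this is precisely $\vdash \opr\alpha \to \alpha$; necessitating and applying $\loebb$ gives $\vdash \opr\alpha$, and a final modus ponens with $\vdash \opr\alpha \to \alpha$ yields $\vdash \alpha$, i.e.\ $\phi\theta \to \theta$. Combining the two implications gives $\theta \iff \phi\theta$, which is $\js$.

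The one obstacle worth flagging is conceptual rather than computational: one must resist attacking $\phi\theta \to \theta$ by trying to prove $\opr\theta$ outright (it is not a theorem), and instead apply L\"ob to the implication $\phi\theta \to \theta$ itself --- that is, use the derived rule ``from $\vdash \opr\alpha \to \alpha$ infer $\vdash \alpha$'' --- with $\pers$ supplying the passage from $\phi\theta$ to $\opr(\phi\theta)$ that feeds normality. The remaining checks are routine: guardedness of $r$ in $\phi$ (so that \textbf{Sub2} applies) and the availability of $\quatb$ and of $\opr$-normality in $\iglpm$.
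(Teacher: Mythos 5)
Your proof is correct and follows essentially the same route as the paper's: both directions rest on the substitution lemma of Lemma~\ref{sub1sub2} (available since $\iglpm \vdash \quatb$), with $\pers$ boxing the strict implications and L\"ob's rule discharging the $\opr$-hypothesis in the hard direction. Your explicit application of L\"ob to $\alpha := \phi\theta \to \theta$ is exactly the rigorous reading of the paper's terse ``by L\"ob's Rule, we may drop the assumption,'' where $\pers$ is what justifies keeping the side hypothesis $\phi\theta$ under the box.
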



\begin{proof}
We reason in $\iglpm$. 

Suppose $(\psi\tto \chi)\top$. Then, $\dotbox((\psi\tto \chi)\top \iff \top)$. Hence, 
$(\psi\tto \chi)(\psi \tto \chi)\top$.

Conversely, suppose $(\psi\tto \chi)(\psi \tto \chi)\top$ and $\opr(\psi\tto\chi)\top$.
From the latter we can derive 
\[
\opr((\psi\tto \chi)\top \iff \top).
\] 
Since in $(\psi\tto \chi)(\psi \tto \chi)\top$ the formula $(\psi \tto \chi)\top$ occurs only
modalized, we find: $(\psi\tto \chi)\top$. By L\"ob's Rule, we may drop the assumption
 $\opr(\psi\tto\chi)\top$.
\end{proof}

We end this section by showing some incomparability results.

\begin{lemma}\label{slimmesmurf}
$\iglp \nvdash \quata$. Consequently, \iglp\ is not a subtheory of \iglam.
\end{lemma}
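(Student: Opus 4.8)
The plan is to establish the non-derivability $\iglp \nvdash \quata$ by exhibiting a single finite Kripke countermodel, relying only on \emph{soundness} of $\iglp$ with respect to Lewisian frames. Soundness is all a non-derivability argument needs, and it is available for every extension of $\iam$ irrespective of whether the logic is Kripke complete. Concretely, I would search for a frame $\ma F \deq \la W, \preceq, \sqsubset \ra$ validating the two nontrivial axioms of $\iglp$, namely $\loebb$ and $\pers$, while refuting the gathering axiom $\quata = \phi \tto \opr\phi$. Reading off Figure~\ref{tab:compl}, $\loebb$ forces $\sqsubset$ to be Noetherian and semi-transitive, and $\quata$ forces gathering ($k \sqsubset \ell \sqsubset m \To \ell \preceq m$); so I want a transitive, irreflexive $\sqsubset$ on which gathering visibly fails.

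I would take $W = \{k, \ell, m\}$ with $\sqsubset \; = \{(k,\ell),(\ell,m),(k,m)\}$ and $\preceq$ the identity relation. Then \strictp holds trivially, and $\ma F$ is a genuine Lewisian frame, so it validates $\iam$ (indeed $\ia$). The relation $\sqsubset$ is transitive and irreflexive on a finite set, hence Noetherian, and the unique ascending pair $k \sqsubset \ell \sqsubset m$ is witnessed by $x = m$ with $k \sqsubset m \preceq m$, so $\ma F$ is semi-transitive; thus $\loebb$ is valid. Transitivity of $\sqsubset$ also secures $\pers$ directly: if $w \Vdash \phi \tto \psi$ and $w \sqsubset u \sqsubset v$, then $w \sqsubset v$, so $v \Vdash \phi$ implies $v \Vdash \psi$; hence every $\sqsubset$-successor $u$ of $w$ again forces $\phi \tto \psi$, i.e. $w \Vdash \opr(\phi \tto \psi)$.

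To refute $\quata$, I put $V(p) = \{\ell\}$, which is a $\preceq$-upset since $\preceq$ is the identity. Then $\ell \not\Vdash \opr p$ because $\ell \sqsubset m$ and $m \notin V(p)$, while $\ell \Vdash p$; so at $k$ the successor $\ell$ witnesses $k \not\Vdash p \tto \opr p$. Thus $\ma F$ validates $\iglp$ but not $\quata$, and by soundness $\iglp \nvdash \quata$. For the consequence, recall $\iglam = \iam \thad \loebb \thad \quata$, so $\iglam \vdash \quata$; since $\iglp \nvdash \quata$, the instance $p \tto \opr p$ is a theorem of $\iglam$ that $\iglp$ does not prove, whence $\iglam \not\subseteq \iglp$. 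Combined with $\iam \thad \js \subseteq \iglp$ (Theorem~\ref{smurfin}) this further gives $\iglam \not\subseteq \iam \thad \js$, which is one half of the intended incomparability.

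The one point needing care is the verification of $\pers$: it is not listed with a correspondence condition in Figure~\ref{tab:compl}, so rather than invoke a named frame property I would check it by hand on $\ma F$ as above (the short computation showing that transitivity of $\sqsubset$ suffices). The only other thing to flag explicitly is that the argument uses Kripke \emph{soundness} alone, not completeness, so the possible failure of $\di$ in $\iglp$ is irrelevant here.
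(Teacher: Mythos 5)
Your proof is correct and is essentially the paper's own: the paper exhibits the very same three-point countermodel (its $a,b,c$ with ${\sqsubset}$ transitive and irreflexive, ${\preceq}$ the identity, and $p$ true exactly at the middle point) and observes it satisfies $\iglp$ while refuting $p\tto\opr p$ at the root; the only difference is that the paper prefaces this with the remark that reading $\phi\tto\psi$ as $\opr(\phi\to\psi)$ over $\iglb$ validates an extension of $\iglp$ under which $\quata$ becomes the \lna{GL}-invalid $\opr(\phi\to\opr\phi)$, whereas you instead spell out explicitly the frame checks (Noetherian plus semi-transitive for $\loebb$, transitivity of $\sqsubset$ for $\pers$) that the paper leaves implicit. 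One small point in your favour: what actually follows from $\iglp\nvdash\quata$ (and what you prove, and what the paper later uses in Theorem~\ref{medicijnsmurf}) is $\iglam\not\subseteq\iglp$, so your reading of the ``consequently'' clause is the right one, the lemma's literal phrasing having the inclusion reversed.
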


\begin{proof}
If over
{\iglb} we define $\phi\tto \psi$ as $\opr(\phi\to\psi)$ this validates an extension of \iglp. However,
{\quata} would translate to $\opr(\phi\to \opr\phi)$, which is obviously not in \iglp, or even in the classical version theoreof; 
%
  let us recall a trivial proof. We can consider the 
Kripke model  on three different points $a,b,c$. We take ${\sqsubset} := \verz{\tupel{a,b},\tupel{a,c},\tupel{b,c}}$
and ${\preceq} = \verz{\tupel{a,a},\tupel{b,b},\tupel{c,c}}$ (that is, a classical model). Let $b \Vdash p$ and $c \nVdash p$. Then, the model
satisfies {\iglp} but $a\nVdash p\tto \opr p$.
\end{proof}

In fact, the non-containment goes both ways, i.e., extends to incomparability. 

\begin{theorem}\label{querusmurf}
 {\iglpm} is  not contained in {\iglw} \bro and a fortiori not in \igla\brc.
\end{theorem}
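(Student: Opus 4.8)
The plan is to separate $\iglpm$ from $\iglw$ by a single instance of the persistence scheme $\pers$, namely $(p \tto \bot) \to \opr(p \tto \bot)$. This formula lies in $\iglpm$, being the substitution instance of $\pers$ with $\phi := p$, $\psi := \bot$. So it suffices to produce one finite Kripke frame that validates $\iglw$ yet refutes this formula: since supergathering frames validate $\weak$ (the soundness direction of Theorem~\ref{th:minco}, and all Kripke frames validate $\ia$), any such frame validates $\iglw$, whence $\iglw \nvdash (p\tto\bot)\to\opr(p\tto\bot)$. I stress that only \emph{soundness} is needed here, so the open completeness status of $\iglw$ is no obstacle.

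For the \emph{a fortiori} clause I would record that $\igla \subseteq \iglw$: the specialization $\psi := \phi$ from the proof of Theorem~\ref{babysmurf} shows $\weak \vdash \loeba$ over $\iam$, and $\loeba$ yields $\loebb$ and $\quata$ (recall $\iglam$ coincides with $\iam\thad\loebb\thad\quata$); hence every axiom of $\igla = \ia\thad\loebb\thad\quata$ is an $\iglw$-theorem. Thus any formula unprovable in $\iglw$ is already unprovable in $\igla$, and the same witness settles both non-containments. In fact the frame I construct will be Noetherian gathering, hence validate $\igla$ directly (Theorem~\ref{th:minco}), giving an independent confirmation.

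The heart of the matter is the frame. The naive three-point gathering model $k \sqsubset m \sqsubset n$ with $m \prec n$ and $n \kmodels p$ already refutes the formula and validates $\igla$, but it is \emph{not} supergathering (the chain $k\sqsubset m\sqsubset n$ needs a witness $x \sqsupset k$ with $m \prec x \preceq n$, and only $n$ is available, forcing the fatal $k\sqsubset n$). The remedy is to add a fourth point $x$ that repairs supergathering \emph{without} giving $k$ a $\sqsubset$-successor that forces $p$. I would take $W = \verz{k,m,x,n}$ with the partial order generated by $m \prec x \prec n$ ($k$ otherwise $\preceq$-isolated), with $\sqsubset \; := \verz{\tupel{k,m},\tupel{m,n},\tupel{k,x}}$, and valuation $n \kmodels p$, $k,m,x \nVdash p$. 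One checks that $\strictp$ holds and forces no further $\sqsubset$-edges (crucially $k \not\sqsubset n$, since nothing $\preceq$-below the $\sqsubset$-source $m$ other than $m$ itself), that the frame is Noetherian, semi-transitive and gathering, and that the unique two-step chain $k \sqsubset m \sqsubset n$ is witnessed for supergathering by $x$, as $k \sqsubset x$, $m \prec x$ and $x \preceq n$. Hence the frame validates both $\igla$ and $\iglw$.

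Finally I would verify the refutation at $k$. The $\sqsubset$-successors of $k$ are exactly $m$ and $x$, neither forcing $p$, so $k \kmodels p \tto \bot$ vacuously; but $m \sqsubset n$ with $n \kmodels p$ gives $m \nVdash p \tto \bot$, whence $k \nVdash \opr(p\tto\bot)$. Thus $(p\tto\bot)\to\opr(p\tto\bot)$ fails at $k$, completing the separation. The step I expect to be delicate is precisely the frame engineering of the third paragraph: the supergathering witness $x$ must be inserted as a $\preceq$-predecessor of $n$ that does \emph{not} inherit $p$ and that $k$ nonetheless $\sqsubset$-sees, while one checks that $\strictp$ does not silently introduce $k \sqsubset n$, which would collapse $k \kmodels p \tto \bot$ and ruin the argument.
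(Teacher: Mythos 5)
Your proof is correct and takes essentially the same route as the paper: the paper also separates $\iglpm$ from $\iglw$ using the instance $(p \tto \bot) \to \opr(p \tto \bot)$ of $\pers$, refuted at the root of a four-point finite supergathering frame that is isomorphic (under $k\mapsto a$, $m\mapsto b$, $x\mapsto c$, $n\mapsto d$) to the frame you construct, relying only on the soundness direction of the correspondence in Theorem~\ref{th:minco}. Your extra details (why a three-point frame cannot work, the explicit check that $\strictp$ does not force $k \sqsubset n$, and the justification of the a fortiori clause via $\igla \subseteq \iglw$) are all sound.
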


\begin{proof}
One possible argument is via interpretability logic, however this only works for non-containment in \iglwm.\footnote{Suppose we extend {\iglwm} and {\iglpm} with classical logic, and we add {\quata}, and  we switch to the contraposed reading $\rhd$.
We thus obtain, respectively, the interpretability logics {\sf ILW} and {\sf ILP}. It is well known that  {\sf ILP} strictly extends {\sf ILW}.
So it follows that  {\iglpm} is not a subtheory of {\iglwm}. Hence, a fortiori, it is not a subtheory of \iglam.}
Thus, consider the following Kripke frame: 
\[
a \sqsubset b \prec c \prec d, \quad 
a  \sqsubset c, \qquad 
b \sqsubset d,  \quad
b \prec d. 
\]
We set that $d\Vdash p$ and $p$ does not hold anywhere else.
This model is easily seen to be supergathering but $a$ does not validate 
 $p \tto \bot \to \opr(p \tto \bot)$.
\end{proof}

%
%

We note that ${\iglwm} \subseteq {\iglpm}\thad \quata  = \iglam \thad \weak$.

Since $\iam\thad\js$ is contained in the incomparable theories {\iglwm} and {\iglpm}, it is strictly contained
is both.

\begin{theorem}\label{grolsmurf}
The theory $\iam\thad\js$ is strictly contained in {\iglwm} and {\iglpm}.
\end{theorem}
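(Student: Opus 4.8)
The plan is to derive both strict inclusions from the incomparability of the two ambient theories \iglwm\ and \iglpm, feeding in the two non-strict inclusions that are immediate from Theorems~\ref{knutselsmurf} and \ref{smurfin}. So first I would record the inclusions: since $\iglwm \vdash \js$ by Theorem~\ref{knutselsmurf} and every axiom of \iam\ is among those of \iglwm, closure under Modus Ponens and necessitation gives $\iam\thad\js \subseteq \iglwm$; symmetrically $\iglpm \vdash \js$ by Theorem~\ref{smurfin} yields $\iam\thad\js \subseteq \iglpm$. These are exactly the two containments whose strictness has to be established.

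Next I would assemble both directions of the incomparability of \iglwm\ and \iglpm. The direction $\iglpm \not\subseteq \iglwm$ is already supplied by Theorem~\ref{querusmurf}: the finite supergathering frame exhibited in its proof validates \iglwm\ (supergathering being the frame condition for \weak, Theorem~\ref{th:minco}) while refuting the \pers-instance $p\tto\bot \to \opr(p\tto\bot)$, which is a theorem of \iglpm. For the converse, $\iglwm \not\subseteq \iglpm$, I would use \quata\ as the separating principle. On one side $\iglwm \vdash \quata$: by Theorem~\ref{babysmurf} we have $\iglam \subseteq \iglwm$, and \iglam\ proves \quata\ because $\iglam = \iam\thad\loebb\thad\quata$. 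On the other side $\iglpm \nvdash \quata$: Lemma~\ref{slimmesmurf} gives $\iglp \nvdash \quata$, and since $\iglpm \subseteq \iglp$ this descends to \iglpm. Hence \quata\ separates \iglwm\ from \iglpm.

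With incomparability in hand the two strict inclusions follow by a one-line contradiction each. If $\iam\thad\js = \iglwm$, then combining with $\iam\thad\js \subseteq \iglpm$ would force $\iglwm \subseteq \iglpm$, contradicting $\iglwm \not\subseteq \iglpm$; hence $\iam\thad\js \subsetneq \iglwm$. Dually, if $\iam\thad\js = \iglpm$, then $\iam\thad\js \subseteq \iglwm$ would force $\iglpm \subseteq \iglwm$, contradicting $\iglpm \not\subseteq \iglwm$; hence $\iam\thad\js \subsetneq \iglpm$.

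The bookkeeping at the end is routine, so the real work sits in securing both halves of the incomparability. The delicate point is the direction $\iglwm \not\subseteq \iglpm$: one must notice that \quata\ is the right witness and, crucially, that \iglwm\ already proves it --- not through any completeness argument (\iglwm\ need not be Kripke complete, lacking \di), but simply because it extends \iglam, which carries \quata\ as an axiom. Once that routing through \iglam\ is spotted, the remaining steps are immediate invocations of the cited results.
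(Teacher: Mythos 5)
Your proof is correct and takes essentially the same route as the paper: the paper likewise obtains both strict inclusions from the containments of Theorems~\ref{knutselsmurf} and~\ref{smurfin} together with the incomparability of \iglwm\ and \iglpm, whose two halves rest exactly on Theorem~\ref{querusmurf} (for $\iglpm \not\subseteq \iglwm$) and on the \quata-separation via Lemma~\ref{slimmesmurf} and $\iglam \subseteq \iglwm$ (for $\iglwm \not\subseteq \iglpm$). The only difference is that the paper compresses all of this into a one-line remark preceding the theorem, whereas you spell out the bookkeeping explicitly.
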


\begin{theorem}\label{medicijnsmurf}
{\iglam} is incomparable to $\iam\thad\js$.
\end{theorem}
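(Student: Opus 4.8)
The plan is to prove the two non-containments separately, in each case exhibiting a single separating formula together with results already established. For $\iglam \not\subseteq \iam\thad\js$ I would use the witness $\quata$; for $\iam\thad\js \not\subseteq \iglam$ I would use a cleverly chosen instance of $\js$ itself.

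For the first non-containment, recall that $\iam\thad\loebb\thad\quata$ coincides with $\iglam$, so in particular $\iglam \vdash \quata$. On the other hand, Theorem~\ref{grolsmurf} gives $\iam\thad\js \subseteq \iglpm$, and since $\iglpm \subseteq \iglp$ while Lemma~\ref{slimmesmurf} yields $\iglp \nvdash \quata$, we obtain $\iam\thad\js \nvdash \quata$. Hence $\quata$ is a theorem of $\iglam$ that fails in $\iam\thad\js$. Equivalently, one can read this off directly from the three-point model in the proof of Lemma~\ref{slimmesmurf}: it validates $\iglp \supseteq \iam\thad\js$ (so, by Theorem~\ref{smurfin}, in particular $\js$) but refutes $\quata$.

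For the second non-containment the key is the right instantiation of $\js$. Taking $\psi := r$ and $\chi := \bot$, where $r$ is the designated substitution variable, the formula $(\psi \tto \chi)\top$ collapses to $\opr\bot$ and $(\psi \tto \chi)(\psi \tto \chi)\top$ collapses to $\opr\bot \tto \bot$; thus this instance of $\js$ is the biconditional $\opr\bot \iff (\opr\bot \tto \bot)$, which is a theorem of $\iam\thad\js$ because the latter is closed under substitution. Its right-to-left half is exactly $(\opr\bot\tto\bot)\to\opr\bot$, which by Lemma~\ref{handigesmurf} is \emph{not} a theorem of $\iglam$. Consequently $\iglam \nvdash \js$, and this instance witnesses $\iam\thad\js \not\subseteq \iglam$.

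Taken together, the two non-containments give incomparability. I do not expect a real obstacle: both halves reduce to assembling earlier results, and the only genuinely inventive move is noticing that the de Jongh--Sambin scheme $\js$, fed $\psi := r$ and $\chi := \bot$, degenerates into the very formula $(\opr\bot\tto\bot)\to\opr\bot$ that Lemma~\ref{handigesmurf} already separates from $\iglam$. The only points requiring care are purely bookkeeping: checking the two substitutions in the $\js$-instance, and tracking the containment chain $\iam\thad\js \subseteq \iglpm \subseteq \iglp$ used for the $\quata$ witness.
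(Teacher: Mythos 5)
Your proposal is correct and takes essentially the same route as the paper's own proof: one non-containment via $\quata$ (using $\iam\thad\js \subseteq \iglpm$ together with Lemma~\ref{slimmesmurf}), the other via the $\js$-fixed point of $p \tto \bot$, whose right-to-left half $(\opr\bot\tto\bot)\to\opr\bot$ is excluded from $\iglam$ by Lemma~\ref{handigesmurf}. The only cosmetic remark is that your appeal to closure under substitution is unnecessary: taking $\psi := r$ and $\chi := \bot$ already yields a direct instance of the scheme $\js$.
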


\begin{proof}
{\iglam} proves {\quata}, but, by Lemma~\ref{slimmesmurf}, {\iglpm} does not. So, a fortiori,
$\iam\thad\js$ does not prove \quata.

By considering the fixed point of $p\tto \bot$, we see that $\iam\thad\js$ proves $(\opr\bot\tto\bot) \to \opr\bot$.
However, by Lemma~\ref{handigesmurf}, {\iglam} does not.
\end{proof}

\begin{question}
Can we give an interesting characterization of $\iam\thad\js$? E.g. can we derive it
from finitely many schematic fixed point equations? Or can we refute that this is possible? In the rest of this paper we will make the question a bit more specific.
\end{question}

\section{The Join of {\jv} and {\js}} \label{sec:join}
What happens if we add an axiom that says that the \jv-fixed point of $\psi\tto\chi$ is equal to the \js-fixed point?

Consider the scheme:
\begin{description}
\item[{\sf X}]
$(\psi\opr\chi\top \tto \chi\top) \iff (\psi\tto\chi)\top$. 
\end{description}

\begin{theorem}
$\iam\thad {\sf X} \vdash \weak$.
\end{theorem}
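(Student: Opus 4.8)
The plan is to obtain $\weak$ from a single, carefully chosen substitution instance of the scheme {\sf X}, followed only by a harmless simplification inside $\tto$ that is already available in $\iam$. Write $\weak$ in the form $(A\tto B)\to((\opr B\to A)\tto B)$, where $A,B$ are the schematic letters; since the designated substitution variable $r$ is ours to choose, I may assume $r$ occurs in neither $A$ nor $B$. The idea is that {\sf X} equates the \jv-style antecedent $\psi\opr\chi\top$ with the \js-style antecedent $\psi\top$, and that for a suitable instance the former is exactly the implication $\opr B\to A$ demanded by $\weak$, while the latter degenerates to (essentially) $A$.

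Concretely, I would instantiate the metavariables of {\sf X} by taking its $\psi$ to be $r\to A$ and its $\chi$ to be $B$. Then I compute the four substitution terms appearing in {\sf X}. Because $B$ does not contain $r$, we have $\chi\top = B$ and $\opr\chi\top = \opr B$; because $A$ does not contain $r$, we get $\psi\opr\chi\top = (r\to A)[r:=\opr B] = \opr B\to A$ and $\psi\top = (r\to A)[r:=\top] = \top\to A$. Hence this instance of {\sf X} reads
\[
 ((\opr B\to A)\tto B) \;\iff\; ((\top\to A)\tto B).
\]

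It remains to discharge the trivial $\top\to A$. Since $\top\to A$ and $A$ are interderivable over {\ipc}, $\na$ gives $\iam\vdash(\top\to A)\tto A$ and $\iam\vdash A\tto(\top\to A)$; composing each of these with an assumed $\tto B$ via $\tr$ yields the $\iam$-equivalence $((\top\to A)\tto B)\iff(A\tto B)$. Chaining this with the displayed instance produces $\iam\thad{\sf X}\vdash((\opr B\to A)\tto B)\iff(A\tto B)$, and the right-to-left implication of this equivalence is precisely $\weak$.

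The only real obstacle is the substitution bookkeeping: one has to guess the instance $\psi:=r\to A$, $\chi:=B$ so that $\psi\opr\chi\top$ collapses to $\opr B\to A$ and $\psi\top$ collapses to $\top\to A$. Everything after that is routine, and it is worth noting what is \emph{not} needed: the replacement of $\top\to A$ by $A$ under $\tto$ uses only $\na$ and $\tr$, so beyond the one application of {\sf X} no principle above $\iam$ is invoked (in particular not $\quatb$). This derivation is, in effect, the converse of the computation in Theorem~\ref{knutselsmurf}, where $\weak$ was used to equate the \jv- and \js-fixed points.
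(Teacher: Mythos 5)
Your proof is correct and is essentially the paper's own argument: the paper likewise instantiates {\sf X} at the formula $(p\to\phi)\tto\psi$ (your $\psi:=r\to A$, $\chi:=B$), reading off the \jv-fixed point $(\opr\psi\to\phi)\tto\psi$ on one side and the \js-fixed point, \iam-equivalent to $\phi\tto\psi$, on the other. Your additional spelling-out of the $(\top\to A)\tto B \iff A\tto B$ step via $\na$ and $\tr$ is just the detail the paper leaves implicit in the phrase ``modulo \iam-provable equivalence.''
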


\begin{proof}
Consider the formula $(p\to \phi) \tto\psi$, where $p$ does not occur in $\phi$ and $\psi$.
The {\jv}-fixed point of $(p\to \phi) \tto\psi$ is $(\opr\psi \to \phi) \tto \psi$ and the
{\js}-fixed point is modulo \iam-provable equivalence $\phi\tto \psi$.
\end{proof}

\begin{corollary}
The following theories are equal: 
$\iam \thad \jv \thad \js$, $\iam\thad{\sf X}$ and {\iglwm}.
\end{corollary}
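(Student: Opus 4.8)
The plan is to prove the three-way equality by combining the theorem just proved ($\iam\thad{\sf X} \vdash \weak$) with the two earlier derivability results, Theorem~\ref{knutselsmurf} ($\iglwm \vdash \js$) and Corollary~\ref{samenopwegsmurf} ($\iglam$ coincides with $\iam\thad\jv$), together with the observation that $\weak$ itself yields $\loeba$. I would establish the three containments that collapse the theories into a single circle of inclusions, being careful throughout that $\thad$-closures need not be substitution-closed, so each step must respect which notion of ``logic'' is in play.

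First I would show $\iam\thad\jv\thad\js \subseteq \iglwm$. Since $\iglwm := \iam\thad\weak$ and we noted in Theorem~\ref{babysmurf} that specializing $\weak$ by $\psi := \phi$ gives $\loeba$, we have $\iglam \subseteq \iglwm$; by Corollary~\ref{samenopwegsmurf} this means $\iam\thad\jv \subseteq \iglwm$, so $\jv$ is derivable in $\iglwm$. Theorem~\ref{knutselsmurf} gives $\iglwm \vdash \js$. Hence $\iglwm$ proves both $\jv$ and $\js$, so it contains $\iam\thad\jv\thad\js$.

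Next I would show $\iglwm \subseteq \iam\thad{\sf X}$. The theorem immediately preceding this corollary gives $\iam\thad{\sf X}\vdash\weak$, so $\iglwm = \iam\thad\weak \subseteq \iam\thad{\sf X}$, using that the $\thad$-closure is monotone. Finally I would close the loop with $\iam\thad{\sf X} \subseteq \iam\thad\jv\thad\js$. Here the key point is that ${\sf X}$ asserts precisely the equivalence of the two fixed points, so over $\iam$ the schemes $\jv$ and $\js$ together entail ${\sf X}$: the $\jv$-fixed point of $\psi\tto\chi$ is $(\psi\opr\chi\top \tto \chi\top)$ and the $\js$-fixed point is $(\psi\tto\chi)\top$, and both are characterized (via their respective fixed-point equivalences) as fixed points of the same formula $\psi\tto\chi$. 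The main subtlety I expect is making this last step rigorous: strictly, ${\sf X}$ equates two specific formulas, and to derive it one wants a \emph{uniqueness} argument identifying both as the unique fixed point. Since $\iam\thad\jv = \iglam \supseteq \iam\thad\loebb$, uniqueness of fixed points (Lemma~\ref{brilsmurf}) is available once $\jv$ is present, and both $(\psi\opr\chi\top\tto\chi\top)$ and $(\psi\tto\chi)\top$ satisfy $\dotbox(r \iff (\psi\tto\chi)r)$ by Theorem~\ref{grotesmurf} and the $\js$ equivalence respectively; uniqueness then forces them equivalent, which is exactly ${\sf X}$. I would therefore route the argument through Lemma~\ref{brilsmurf} to obtain $\iam\thad\jv\thad\js \vdash {\sf X}$, completing the cycle

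\[
\iam\thad\jv\thad\js \;\subseteq\; \iglwm \;\subseteq\; \iam\thad{\sf X} \;\subseteq\; \iam\thad\jv\thad\js,
\]

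so all three coincide.
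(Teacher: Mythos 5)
Your proof is correct and follows essentially the same route as the paper's: the same cycle of three inclusions, with ${\sf X}$ obtained in $\iam\thad\jv\thad\js$ via uniqueness of fixed points (Lemma~\ref{brilsmurf}), $\iglwm \subseteq \iam\thad{\sf X}$ from the preceding theorem, and $\iglwm$ proving both $\jv$ and $\js$. Your elaboration of the uniqueness step (both formulas being $\dotbox$-fixed points of $\psi\tto\chi$) is exactly what the paper's terse phrase ``extends $\iglbm$, we have uniqueness of fixed points and hence {\sf X}'' abbreviates.
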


\begin{proof}
Since $\iam \thad \jv \thad \js$ extends $\iglbm$, we have uniqueness of fixed points and hence {\sf X}.
Moreover, $\iam \thad {\sf X} \vdash \weak$. Finally, we have already seen that $\iglwm$ proves
{\js} and {\jv}.
\end{proof}

\begin{corollary}
 {\iglpm} does not prove {\jv} and {\iglam} does not prove {\js}.
 \end{corollary}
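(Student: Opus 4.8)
The plan is to deduce both non-derivabilities directly from the characterizations already in hand, namely $\iglam = \iam\thad\jv$ (Corollary~\ref{samenopwegsmurf}) and the identification of the join $\iam\thad\jv\thad\js$ with $\iglwm$ in the preceding corollary. The only structural fact I will use is that $\iglam$, $\iglpm$ and $\iglwm$ are all of the shape $\iam\thad X$, hence closed under Modus Ponens and Necessitation. Consequently, whenever one of these theories contains $\iam$ together with every instance of a scheme $Y$, it must already contain the entire closure $\iam\thad Y$. Both halves of the corollary are then contradiction arguments in which assuming the missing derivability collapses one theory into a strictly larger one.

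For $\iglpm\nvdash\jv$, I would suppose $\iglpm\vdash\jv$. Since $\iglpm$ contains $\iam$ and is closed under the $\iam$-rules, it then contains $\iam\thad\jv$, which by Corollary~\ref{samenopwegsmurf} is exactly $\iglam$; in particular $\iglpm\vdash\quata$. But Lemma~\ref{slimmesmurf} gives $\iglp\nvdash\quata$, and since $\iglpm$ is contained in its substitution closure $\iglp$, a fortiori $\iglpm\nvdash\quata$ --- a contradiction. Concretely, the three-point classical model built in the proof of Lemma~\ref{slimmesmurf} validates $\iglpm$ while refuting the instance $p\tto\opr p$ of $\quata$, which already secures $\iglpm\nvdash\quata$ without invoking $\iglp$ explicitly.

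For $\iglam\nvdash\js$, I would again argue by contradiction and suppose $\iglam\vdash\js$. As $\iglam = \iam\thad\jv$ already proves $\jv$, the theory $\iglam$ would then prove both $\jv$ and $\js$, so it would contain $\iam\thad\jv\thad\js$, which by the preceding corollary equals $\iglwm$. Together with the reverse inclusion $\iglam\subseteq\iglwm$ --- available because the specialization $\psi:=\phi$ of $\weak$ yields $\loeba$ over $\iam$, as recorded in the proof of Theorem~\ref{babysmurf} --- this forces $\iglam = \iglwm$, contradicting the \emph{strict} extension established in Theorem~\ref{babysmurf}. Equivalently, one would obtain $\iglam\vdash(\opr\bot\tto\bot)\to\opr\bot$, contradicting Lemma~\ref{handigesmurf} head-on.

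I do not expect a genuine obstacle here: once the join has been pinned down as $\iglwm$ and $\iglam$ as $\iam\thad\jv$, the corollary is essentially bookkeeping. The one point that needs care --- and the closest thing to a pitfall --- is the distinction between a theory closed only under the $\iam$-rules and the associated substitution-closed logic: in the first half the separating witness $\quata$ is refuted already by the larger object $\iglp$, so it is certainly refuted by the smaller theory $\iglpm$, which is precisely what the argument requires.
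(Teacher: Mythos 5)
Your proof is correct and is essentially the paper's own (implicit) argument: the corollary is exactly the bookkeeping consequence of $\iam\thad\jv=\iglam$ (Corollary~\ref{samenopwegsmurf}), the join identification $\iam\thad\jv\thad\js=\iglwm$, Lemma~\ref{slimmesmurf} for the $\quata$-separation, and Theorem~\ref{babysmurf}/Lemma~\ref{handigesmurf} for the other direction. One terminological slip: $\iglp$ is not the substitution closure of $\iglpm$ but its extension by the axiom $\di$ (the superscript minus marks the $\di$-free systems), though the containment $\iglpm\subseteq\iglp$ that you actually need holds regardless, and your fallback via the three-point model validating $\iglpm$ while refuting $p\tto\opr p$ makes the point watertight.
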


\begin{question}
Can we find a logic below {\iglam} and {\iglpm}  that extends {\iglbm} 
with explicit fixed points? We note that these fixed points would collapse into the
\jv-fixed points on extension to {\iglam} and into the \js-fixed points upon extension to
{\iglpm} or {\iglwm}.
\end{question}


\section{Four Salient Fixpoints: Subtheories of $\iam\thad\js$}
\label{sec:jscloser}

There are four approaches to finding a nice axiomatization of $\iam\thad\js$:
(i) a brilliant flash of insight, (ii) trying out some salient fixed points and see whether they
already give us the desired scheme, (iii) analyzing what is shared between the \iglwm-proof of {\js} and
the \iglpm-proof and (iv) investigating what {\js} means in Kripke models.

In this section, we contribute to approach (iii), by computing four salient \js-fixed points.



\newcommand{\brlt}{\newline\quad} 

\begin{figure}
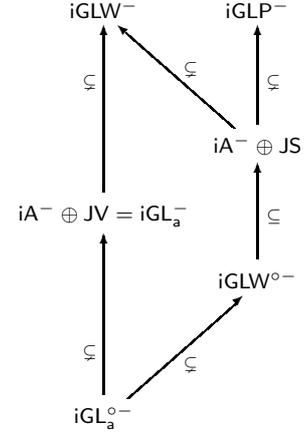

\hrule \sepo
\rowcolors{1}{}{}
\footnotesize
\begin{tabular}{lL{3.5cm}L{3.9cm}C{3.3cm}} 
$\jv$ & $(\psi\opr\chi\top \tto \chi\top)$ \newline $\iff 
(\psi \tto \chi)(\psi\opr\chi\top \tto \chi\top)$ & the de Jongh-Visser scheme  of explicit fixpoints  &
\multirow{5}{*}{
\begin{diagram}[nohug]
\iglwm  && \iglpm \\
\uTo^{\subsetneq} &  \luTo(2,2)^\subsetneq & \uTo_{\subsetneq} \\
  && \iam \thad \js \\
 \iam\thad \jv = \iglam && \uTo_\subseteq \\ 
 && \iglwmo \\
  \uTo^\subsetneq  & \ruTo_\subsetneq & \\
  \iglamo && 
\end{diagram}
} 
\\ 
\rowcolor{light-gray}\sepo
$\loeba$ & $(\opr\phi\to\phi)\tto \phi$ & its direct axiomatization 
\\ \cline{1-3}
$\js$ & $(\psi\top \tto \chi\top)$ \brlt$\iff (\psi \tto \chi)(\psi\top \tto \chi\top)$ & the de Jongh-Sambin scheme  of explicit fixpoints \\
\rowcolor{light-gray}
$\pers$ & $(\phi\tto\psi) \to \opr(\phi\tto\psi)$ & one of principles ensuring it holds over $\loebb$ 
\\
$\weako$ &  $((\phi \wedge (\phi \tto \psi))  \tto \psi)$ \brlt$\to (\phi\tto\psi)$ & the most powerful consequence of \js\ we know \\ 
\rowcolor{light-gray}\sepo
$\weakst$ & $(\phi \tto \psi)$ \brlt$\to (((\phi \tto \psi) \to \phi) \tto \psi)$ & an alternative axiomatization for $\weako$ over $\loebb$ \\
\cline{1-3}
{\sf X} & $(\psi\opr\chi\top \tto \chi\top)$ \brlt$\iff (\psi\tto\chi)\top$ & the scheme \newline collapsing \jv\ and \js\  \\ \rowcolor{light-gray}\sepo
\weak & $(\phi \tto \psi)  $ \brlt$\to ((\opr \psi \to \phi)\tto \psi)$ & its direct axiomatization \\ \cline{1-3}
$\loebao$ & $(\phi \tto ((\phi \tto \psi ) \to \psi))$ \brlt $\tto (\phi\tto \psi)$ & a principle holding both under \js\ and \jv\ \\ 
\rowcolor{light-gray}
\sepo
$\quatao$ & 
 $(\phi\tto \psi) \to (\phi \tto (\phi \tto \psi ))$ &  a principle deriving $\loebao$ over $\loebb$ \\[\tbskip]  
\end{tabular}

\hspace{0.4cm}

\caption{Relationships between  logics axiomatized by either fixpoint principles or axioms obtained as fixpoints of simple formulas \label{fig:logics}} \vspace{0.2cm}
\hrule
\end{figure}

\subsection{The Principle \weakst}

The \js-fixed point of $(p\to \phi) \tto \psi$ is modulo $\iam$-provable equivalence $\phi\tto \psi$. The fixed point equation is 
$(\phi \tto \psi) \iff (((\phi \tto \psi) \to \phi) \tto \psi)$.
Modulo {\iam}-provable equivalence this equation simplifies to the principle {\weakst}:
\begin{description}
\item[\weakst]
$(\phi \tto \psi) \to (((\phi \tto \psi) \to \phi) \tto \psi)$
\end{description}

Here is a first insight about \weakst.

 \begin{theorem}
 Over $\ia\thad\cpc$, the principle $\weakst$ axiomatizes the same logic as  $\quatb$.
 \end{theorem}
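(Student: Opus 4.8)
The plan is to prove the two logics coincide by establishing mutual derivability, and the key is to first collapse the whole calculus to a classical normal modal logic in the single modality $\opr$. Concretely, I would begin with the \emph{definability lemma}
\[
\ia\thad\cpc \vdash (\phi\tto\psi) \iff \opr(\phi\to\psi).
\]
The direction $\opr(\phi\to\psi)\to(\phi\tto\psi)$ already holds over $\iam$: from $\phi\tto\top$ (by $\na$) and $\top\tto(\phi\to\psi)$ one gets $\phi\tto(\phi\to\psi)$ by $\tr$, combining with $\phi\tto\phi$ via $\ka$ yields $\phi\tto(\phi\wedge(\phi\to\psi))$, and since $\phi\wedge(\phi\to\psi)\to\psi$, a final $\na$ plus $\tr$ gives $\phi\tto\psi$. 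The converse direction is exactly the brilliancy axiom $\lb$, and this is where both $\di$ and classical logic enter: from $\phi\tto\psi$ one obtains $\phi\tto(\neg\phi\vee\psi)$ (via $\na$ and $\tr$) and $\neg\phi\tto(\neg\phi\vee\psi)$ (via $\na$), so $\di$ gives $(\phi\vee\neg\phi)\tto(\neg\phi\vee\psi)$; excluded middle collapses the premise to $\top$ and classical logic rewrites $\neg\phi\vee\psi$ as $\phi\to\psi$, yielding $\opr(\phi\to\psi)$.

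Once definability is in hand, every occurrence of $\tto$ can be rewritten as $\opr$ of an implication (congruence is available since provable equivalence refines $\dotbox$-equivalence, cf.\ {\sf Sub1}), so I may argue entirely inside the $\opr$-fragment, which obeys the laws of a classical normal modal logic (necessitation, the $\ka$-distribution $\opr(A\to B)\to(\opr A\to\opr B)$, and $\opr(A\wedge B)\iff\opr A\wedge\opr B$, all derivable from $\iam$ over $\cpc$). Under this translation $\quatb$ is literally the axiom $\opr\chi\to\opr\opr\chi$, while $\weakst$ becomes
\[
(\star)\qquad \opr(\phi\to\psi)\to\opr\bigl((\opr(\phi\to\psi)\to\phi)\to\psi\bigr).
\]
For $\quatb\vdash(\star)$ I would start from the propositional tautology $(\phi\to\psi)\wedge\opr(\phi\to\psi)\to\bigl((\opr(\phi\to\psi)\to\phi)\to\psi\bigr)$, necessitate it, and push $\opr$ inside via $\ka$-distribution to obtain $\opr\bigl((\phi\to\psi)\wedge\opr(\phi\to\psi)\bigr)\to\opr\bigl((\opr(\phi\to\psi)\to\phi)\to\psi\bigr)$; finally $\quatb$ supplies $\opr(\phi\to\psi)\to\opr\bigl((\phi\to\psi)\wedge\opr(\phi\to\psi)\bigr)$, and chaining gives $(\star)$.

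For the converse $(\star)\vdash\quatb$ I would simply specialise the scheme: put $\psi:=\bot$ and $\phi:=\neg\chi$. Classically $\phi\to\psi\equiv\neg\neg\chi\equiv\chi$, so the antecedent of $(\star)$ becomes $\opr\chi$, while the boxed consequent $(\opr\chi\to\neg\chi)\to\bot$ is classically equivalent to $\opr\chi\wedge\chi$; hence $(\star)$ reduces to $\opr\chi\to\opr(\opr\chi\wedge\chi)$, which via $\ka$-distribution yields $\opr\chi\to\opr\opr\chi$, i.e.\ $\quatb$. (Equivalently one may note that $\quatb$ over $\cpc$ makes $\pers$ available, which is the semantic reason $\weakst$ holds classically.) The main obstacle is the definability lemma, and more precisely its $\lb$-half: this is the \emph{classical collapse of brilliancy}, which genuinely consumes both $\di$ and excluded middle, and it is what reduces the system to ordinary $\mathsf{K4}$; everything after it is routine normal-modal bookkeeping.
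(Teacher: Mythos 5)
Your proof is correct and follows essentially the same route as the paper's: collapse $\tto$ to $\opr(\cdot\to\cdot)$ over $\ia\thad\cpc$, rewrite $\weakst$ as the scheme $\opr(\phi\to\psi)\to\opr((\opr(\phi\to\psi)\to\phi)\to\psi)$, and recover $\quatb$ via the substitution $\phi:=\neg\chi$, $\psi:=\bot$, with the converse being routine normal-modal reasoning. The only difference is that you prove the collapse lemma from first principles (including the $\di$-plus-excluded-middle derivation of brilliancy), whereas the paper simply cites it as Lemma 4.6 of \cite{LitakV18:im}.
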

 
 \begin{proof}
 Over $\ia\thad \cpc$, $\phi\tto\psi$ collapses to $\opr(\phi\to \psi)$.
 See \cite[Lemma 4.6]{LitakV18:im}. This tells us that over $\ia\thad \cpc$ the principle
 \weakst\ reduces to:  \[(\dag) \;\;\; \opr (\phi \to \psi) \to \opr ((\opr(\phi \to \psi) \to \phi) \to \psi).\]
Putting $\phi := \neg\,\chi$ and $\psi :=\bot$, (\dag) gives us classically \quatb, i.e., 
$\opr\chi\to \opr\opr\chi$. Conversely, we easily get (\dag) from \quatb.
 \end{proof}
 
 \begin{corollary}
$\iam\thad\weakst$ does not imply \loebb.
\end{corollary}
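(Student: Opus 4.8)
The plan is to read off the corollary from the theorem just proved, reducing everything to the textbook fact that the classical modal logic $\mathsf{K4}$ does not prove L\"ob's principle. Since $\iam \subseteq \ia\thad\cpc$, we have $\iam\thad\weakst \subseteq \ia\thad\cpc\thad\weakst$, so it suffices to establish the stronger non-derivability $\ia\thad\cpc\thad\weakst \nvdash \loebb$; this then transfers downward to $\iam\thad\weakst$. By the theorem, $\weakst$ and $\quatb$ axiomatize the same logic over $\ia\thad\cpc$, so the task becomes showing $\ia\thad\cpc\thad\quatb \nvdash \loebb$.

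To handle the latter I would invoke the collapse already used in the theorem's proof: over $\ia\thad\cpc$ each $\phi\tto\psi$ is provably equivalent to $\opr(\phi\to\psi)$. Under this collapse the intuitionistic-$\mathsf{K}$ consequences of $\iam$ become classical normal $\mathsf{K}$, while $\quatb$ is exactly the transitivity axiom $\opr\phi\to\opr\opr\phi$; hence the $\opr$-fragment of $\ia\thad\cpc\thad\quatb$ is precisely classical $\mathsf{K4}$. I would then exhibit the standard separating model, a single reflexive point. This frame is transitive, so it validates $\quatb$ and, reading $\tto$ as $\opr(\cdot\to\cdot)$, all of \tr, \ka, \na, \di and classical logic; yet it refutes $\loebb$, since for a formula false at the point $\opr\phi\to\phi$ holds vacuously, making $\opr(\opr\phi\to\phi)$ true while $\opr\phi$ is false. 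By soundness $\loebb$ is therefore not a theorem of $\ia\thad\cpc\thad\quatb=\ia\thad\cpc\thad\weakst$, and a fortiori not of $\iam\thad\weakst$.

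I do not expect a genuine obstacle here: the corollary is an immediate payoff of the theorem combined with the classical separation of $\mathsf{K4}$ from $\mathsf{GL}$. The only point one must not gloss over is the identification of the $\opr$-fragment of $\ia\thad\cpc\thad\quatb$ with exactly $\mathsf{K4}$ (rather than something accidentally stronger that might already force L\"ob); once the collapse is recorded precisely, the reflexive-point counterexample finishes the argument, and the passage from $\ia\thad\cpc\thad\weakst$ down to $\iam\thad\weakst$ is the trivial monotonicity of derivability under base weakening.
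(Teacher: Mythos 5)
Your proposal is correct and is essentially the paper's intended argument: the corollary is stated as an immediate payoff of the preceding theorem, namely that over $\ia\thad\cpc$ the principle $\weakst$ collapses (via $\phi\tto\psi \equiv \opr(\phi\to\psi)$) to $\quatb$, i.e.\ classical $\lna{K4}$, which famously does not prove L\"ob's principle. Your write-up just makes explicit the two steps the paper leaves implicit (monotonicity from $\iam\thad\weakst$ up to $\ia\thad\cpc\thad\weakst$, and the single-reflexive-point countermodel); note that for the latter, soundness of the collapsed reading on that one model already suffices, so the exact identification of the $\opr$-fragment with $\lna{K4}$ that you flag as delicate is not actually needed.
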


\subsection{The Principle {\weako}}

The \js-fixed point of $(p \wedge \phi) \tto \psi$ is modulo \iam-provable equivalence $\phi\tto\psi$. The fixed point equation is
$ (\phi \tto \psi) \iff  ((\phi \wedge (\phi \tto \psi))  \tto \psi)$.   
Modulo {\iam}-provable equivalence this equation simplifies to the principle {\weako}:
\begin{description}
\item[\weako]
$((\phi \wedge (\phi \tto \psi))  \tto \psi) \to (\phi\tto\psi)$
\end{description}

We note that, if we put $\phi := \top$ in \weako, we get $(\opr \psi \tto \psi) \to \opr\psi$.  This last, by \iam-reasoning, implies
{\loebb}. 
  
\begin{theorem}
 Over $\ia\thad\cpc$, the principle $\weako$ axiomatizes the same logic as  $\loebb$.
 \end{theorem}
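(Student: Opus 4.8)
The plan is to mirror the proof of the preceding theorem relating $\weakst$ and $\quatb$: reduce $\weako$ to a purely modal ($\opr$-only) principle via the collapse of $\tto$ over $\ia\thad\cpc$, and then compare it with $\loebb$ by elementary classical modal reasoning. As usual I would prove the two inclusions $\ia\thad\cpc\thad\weako \vdash \loebb$ and $\ia\thad\cpc\thad\loebb \vdash \weako$ separately; in fact the first already holds over bare $\iam$.

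For the inclusion $\weako\vdash\loebb$, I would simply appeal to the observation recorded just before the statement: instantiating $\weako$ with $\phi:=\top$ and using the $\iam$-provable equivalences $\top\wedge(\top\tto\psi)\iff\opr\psi$ and $\top\tto\psi=\opr\psi$ (the congruence of $\tto$ under provable equivalence of arguments following from $\na$ and $\tr$), one obtains $(\opr\psi\tto\psi)\to\opr\psi$, and this yields $\loebb$ by $\iam$-reasoning. Concretely, from $\opr(\opr\psi\to\psi)$ one derives $\opr\psi\tto\psi$ using $\na$ (to get $\opr\psi\tto\top$), $\tr$ (to get $\opr\psi\tto(\opr\psi\to\psi)$), then $\ka$ together with $\opr\psi\tto\opr\psi$, and finally $\na$ and $\tr$ through the tautology $(\opr\psi\wedge(\opr\psi\to\psi))\to\psi$; composing gives $\opr(\opr\psi\to\psi)\to\opr\psi$.

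For the converse $\loebb\vdash\weako$, I would first invoke \cite[Lemma 4.6]{LitakV18:im} to collapse $\phi\tto\psi$ to $\opr(\phi\to\psi)$, so that $\weako$ becomes, over $\ia\thad\cpc$,
\[
\opr\big((\phi\wedge\opr(\phi\to\psi))\to\psi\big)\to\opr(\phi\to\psi).
\]
Writing $D:=\phi\to\psi$ and using the classical equivalence $(\phi\wedge C)\to\psi\;\equiv\;(\phi\to\psi)\vee\neg C$ with $C:=\opr D$, the antecedent box rewrites to $\opr(D\vee\neg\opr D)$, so it suffices to prove $\opr(D\vee\neg\opr D)\to\opr D$ in the $\opr$-fragment. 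Since over $\ia\thad\cpc$ that fragment together with $\loebb$ is exactly classical $\lna{GL}$ (L\"ob's axiom yielding transitivity over classical $\lna{K}$), I would finish via L\"ob: the formula $(D\vee\neg\opr D)\to(\opr D\to D)$ is a classical tautology, so by necessitation and $\lna{K}$-distribution one gets $\opr(D\vee\neg\opr D)\to\opr(\opr D\to D)$, and composing with $\loebb$ in the form $\opr(\opr D\to D)\to\opr D$ delivers $\opr(D\vee\neg\opr D)\to\opr D$.

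The genuinely delicate points are bookkeeping rather than conceptual: checking that the $\tto$-to-$\opr$ collapse is applied to all occurrences of $\tto$ in $\weako$ (including the nested one inside $\phi\wedge(\phi\tto\psi)$), and confirming that the classical rewriting of the antecedent is legitimate under the box. I expect the main obstacle to be simply spotting the right classical tautology $(D\vee\neg\opr D)\to(\opr D\to D)$ that turns the collapsed principle into a one-line consequence of L\"ob; once that is in hand, everything else is routine, exactly as in the $\weakst$/$\quatb$ case.
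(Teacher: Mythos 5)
Your proposal is correct and follows essentially the paper's own route: collapse $\tto$ to $\opr(\cdot\to\cdot)$ over $\ia\thad\cpc$ via \cite[Lemma 4.6]{LitakV18:im}, get $\loebb$ by instantiating $\phi:=\top$ (indeed already over $\iam$, as the paper notes just before the theorem), and recover the collapsed $\weako$ from $\loebb$. The only cosmetic difference is in the converse direction: the paper simply curries $(\phi\wedge\opr(\phi\to\psi))\to\psi$ into $\opr(\phi\to\psi)\to(\phi\to\psi)$, so that the collapsed principle is outright ($\ia$-provably equivalent to) an instance of $\loebb$, whereas your detour through the classical rewriting $(\phi\to\psi)\vee\neg\opr(\phi\to\psi)$ and the tautology feeding into L\"ob reaches the same conclusion with a few extra steps.
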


\begin{proof}
By  \cite[Lemma 4.6]{LitakV18:im}, \weako\ reduces to:
\[ (\dag)\;\;\; \opr((\phi \wedge \opr(\phi \to \psi))  \to \psi) \to \opr (\phi\to\psi) \] 
By putting $\phi := \top$ we find \loebb\ modulo \ia-provable equivalence.
Conversely, it is easily seen that (\dag) is \ia-provably equivalent to an instance of \loebb.
\end{proof}


\begin{theorem} \label{wcircderiv}
$\iam \thad \loebb \thad \weakst\vdash \weako$. 
\end{theorem}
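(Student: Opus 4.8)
The plan is a L\"ob-style argument that internalizes a single auxiliary hypothesis. Abbreviate $\alpha := \phi\tto\psi$ and $H := (\phi\wedge\alpha)\tto\psi$, so that the goal $\weako$ is exactly the implication $H\to\alpha$. First I would record the standard fact that $\iam\thad\loebb\vdash\quatb$: setting $\theta_0 := \phi\wedge\opr\phi$, one checks the propositional fact $\iam\vdash\phi\to(\opr\theta_0\to\theta_0)$ (using $\opr\theta_0 = \opr\phi\wedge\opr\opr\phi$), applies necessitation and the normal modal behaviour of $\opr$ available in $\iam$ to get $\opr\phi\to\opr(\opr\theta_0\to\theta_0)$, and then $\loebb$ yields $\opr\phi\to\opr\theta_0$, hence $\opr\phi\to\opr\opr\phi$; this derivation uses no classical steps. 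Having $\quatb$ is what licenses the use of {\sf Sub2} from Lemma~\ref{sub1sub2} below.

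The heart of the proof is to establish
\[ \iam\thad\quatb\thad\weakst\ \vdash\ H\wedge\opr(H\to\alpha)\ \to\ \alpha, \]
which is precisely $\vdash\opr\theta\to\theta$ for $\theta := H\to\alpha$; then $\loebb$ (necessitation followed by modus ponens, i.e.\ L\"ob's rule) gives $\vdash\theta$, which is $\weako$. Reasoning under $H$ together with the auxiliary hypothesis $\opr(H\to\alpha)$, I would proceed in three steps. (1) Instantiate $\weakst$ at $\chi := \phi\wedge\alpha$; its antecedent is exactly $H$, so from $H$ we obtain $H_1 := (H\to(\phi\wedge\alpha))\tto\psi$. (2) Simplify $H_1$ under the box hypothesis: since $\iam\vdash (H\to\alpha)\to((H\to(\phi\wedge\alpha))\iff(H\to\phi))$ is a propositional tautology, the normal $\opr$-reasoning of $\iam$ turns $\opr(H\to\alpha)$ into $\opr((H\to(\phi\wedge\alpha))\iff(H\to\phi))$, whence {\sf Sub2} applied to the context $r\tto\psi$ (the substituted position lies in the scope of $\tto$) gives $H_1\iff H_2$ with $H_2 := (H\to\phi)\tto\psi$; as we have $H_1$, we obtain $H_2$. (3) Finally $\iam\vdash H_2\to\alpha$: from the tautology $\phi\to(H\to\phi)$ and $\na$ we get $\phi\tto(H\to\phi)$, and $\tr$ composes this with $H_2 = (H\to\phi)\tto\psi$ to yield $\phi\tto\psi=\alpha$. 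Chaining (1)--(3) delivers $\alpha$, as required.

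The one genuinely load-bearing ingredient is $\weakst$ in step (1). The naive L\"ob argument -- assume $\opr\alpha$, use {\sf Sub2} to replace $\alpha$ by $\top$ inside the modalized occurrence in $H$ (giving $(\phi\wedge\top)\tto\psi\equiv\alpha$), and read off $\alpha$ -- fails to close, because $H$ is not forward persistent: propagating $H$ into the scope of the box would require transitivity of $\sqsubset$, which semi-transitive L\"ob frames need not satisfy, and correspondingly $\iam\thad\loebb\nvdash\weako$. The instance of $\weakst$ is exactly the device that substitutes for the missing persistence, converting $H$ into the $\tto$-statement $H_1$ whose antecedent can then be rewritten under the box hypothesis $\opr(H\to\alpha)$. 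I expect the only step demanding real care to be (2): verifying that the two antecedents are interchangeable under $\opr(H\to\alpha)$ and that {\sf Sub2} legitimately applies (which is why $\quatb$ was secured first). Steps (1) and (3) are routine applications of $\weakst$, $\na$ and $\tr$.
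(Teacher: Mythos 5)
Your proof is correct and is essentially the paper's own argument: both assume $\opr$ of the goal, apply $\weakst$ instantiated at $\phi\wedge(\phi\tto\psi)$, rewrite the resulting inner antecedent under the box hypothesis via {\sf Sub2}, conclude with $\na$ plus $\tr$, and close with L\"ob's Rule. The only cosmetic differences are that you rewrite the antecedent to $H\to\phi$ where the paper uses $(\phi\tto\psi)\to\phi$, and that you spell out the derivation of $\quatb$ (needed for {\sf Sub2}) which the paper treats as a familiar fact.
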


\begin{proof}
\ifconf\confbl\else
\ExecuteMetaData[apptwocalc.tex]{wcircderiv}
\fi
\end{proof}
 
 \begin{theorem} \label{weakstderiv}
$\iam \thad \weako\vdash \weakst$. 
\end{theorem}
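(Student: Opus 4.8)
The plan is to derive $\weakst$ from a single, well-chosen instance of $\weako$ together with pure $\iam$-reasoning; I expect to need neither $\loebb$ nor L\"ob's rule, which fits the statement (only $\weako$ is assumed). Abbreviate $\gamma := \phi\tto\psi$ and $a := \gamma\to\phi$, so the goal becomes $\iam\thad\weako\vdash\gamma\to(a\tto\psi)$, the consequent $a\tto\psi$ being exactly the right-hand side of $\weakst$.

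First I would record the one structural fact about $\tto$ that $\iam$ supplies and that I use twice: from $\vdash\alpha'\to\alpha$ one obtains $\vdash(\alpha\tto\beta)\to(\alpha'\tto\beta)$ (antitonicity in the antecedent), by applying \na to $\alpha'\to\alpha$ and then \tr. Since $\vdash\phi\to a$ (that is, $\phi\to(\gamma\to\phi)$), antitonicity gives $\vdash(a\tto\psi)\to(\phi\tto\psi)$, i.e. $\vdash(a\tto\psi)\to\gamma$. Since $\vdash(\phi\wedge\gamma)\to\phi$, antitonicity gives $\vdash\gamma\to((\phi\wedge\gamma)\tto\psi)$ (this last is a ``free'' half of the $\weako$-equation, needing no axiom beyond $\iam$).

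The heart of the argument is a purely propositional implication: $\vdash(a\wedge(a\tto\psi))\to(\phi\wedge\gamma)$. Indeed $a\wedge(a\tto\psi)$ entails $a\tto\psi$, hence (by the first antitonicity step) $\gamma$; and it entails $a=\gamma\to\phi$, so together with $\gamma$ it entails $\phi$ by modus ponens. Feeding this implication through antitonicity once more yields $\vdash((\phi\wedge\gamma)\tto\psi)\to((a\wedge(a\tto\psi))\tto\psi)$.

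Finally I would invoke the instance of $\weako$ whose $\tto$-antecedent is $a$ and whose consequent is $\psi$, namely $\vdash((a\wedge(a\tto\psi))\tto\psi)\to(a\tto\psi)$, and chain the three implications $\gamma\to((\phi\wedge\gamma)\tto\psi)\to((a\wedge(a\tto\psi))\tto\psi)\to(a\tto\psi)$ to conclude $\gamma\to(a\tto\psi)$, which is precisely $\weakst$. The one genuinely nonroutine step is spotting that $\weako$ should be applied to the antecedent $a=\gamma\to\phi$ rather than to $\phi$, and pairing this with the bracketing $a\wedge(a\tto\psi)\to\phi\wedge\gamma$ that lets antitonicity route the $\weako$-antecedent back to $\gamma$ via the free inequality $\gamma\to((\phi\wedge\gamma)\tto\psi)$; everything else is routine $\iam$-bookkeeping with \na and \tr.
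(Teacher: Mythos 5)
Your proof is correct, and it takes a genuinely different route from the paper's. The paper's proof is a L\"ob's-Rule argument: it notes that $\loebb$ is available in $\iam\thad\weako$, assumes $\opr$ of the goal, applies $\weako$ to $\alpha := (\phi\tto\psi)\to\phi$, uses the boxed assumption to establish a two-way strict equivalence between $\alpha\wedge(\alpha\tto\psi)$ and $\phi\wedge(\phi\tto\psi)$, and then discharges the assumption by L\"ob's Rule. You apply $\weako$ to the very same formula $a = (\phi\tto\psi)\to\phi$, but you observe that the only comparison actually needed, namely $\vdash (a\wedge(a\tto\psi))\to(\phi\wedge(\phi\tto\psi))$ fed through antitonicity to give $\vdash ((\phi\wedge(\phi\tto\psi))\tto\psi)\to((a\wedge(a\tto\psi))\tto\psi)$, is an outright $\iam$-theorem, so the boxed hypothesis and L\"ob's Rule are superfluous; chaining it with the free implication $\vdash(\phi\tto\psi)\to((\phi\wedge(\phi\tto\psi))\tto\psi)$ and the chosen instance of $\weako$ finishes the job. (In fact this sharpens the paper's own argument: in its passage from step (b) to step (c), only the direction of the displayed strict equivalence that holds without assumption (a) is used.) What your version buys is economy and information: $\weakst$ follows from a single instance of $\weako$ by pure $\iam$-reasoning, with no detour through $\loebb$. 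What the paper's version buys is uniformity: it runs the same L\"ob's-Rule template in both directions of the correspondence, and in the converse derivation $\iam\thad\loebb\thad\weakst\vdash\weako$ (Theorem~\ref{wcircderiv}) the boxed assumption does appear to be doing real work.
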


\begin{proof}
\ifconf\confbl\else
\ExecuteMetaData[apptwocalc.tex]{weakstderiv}
\fi
\end{proof}

 We define $\iglwmo := \iam\thad \weako$. 
 We have already shown:
 
 \begin{theorem}
 \begin{enumerate}[i.]
 \item
  {\iglwmo} strictly extends $\iam\thad\weakst$.
  \item
  {\iglwmo} is equal to $\iam\thad\weakst\thad \loebb$.
  \item
 {\iglwmo} is contained in $\iam\thad\js$.
 \end{enumerate}
 \end{theorem}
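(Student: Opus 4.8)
The plan is to assemble all three items from facts already established in the excerpt, since $\iglwmo = \iam\thad\weako$ by definition; no new calculation is needed beyond tracking which earlier theorem supplies each inclusion. I would repeatedly use that $\iglwmo$, being of the form $\iam\thad X$, is closed under the $\iam$-rules (Modus Ponens and Necessitation), so that once $\iglwmo$ proves a formula $\theta$, it contains the whole closure of $\iam\cup\{\theta\}$.

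For item (i), I would split extension and strictness. Containment $\iam\thad\weakst\subseteq\iglwmo$ is immediate from Theorem~\ref{weakstderiv}, which gives $\iam\thad\weako\vdash\weakst$; hence everything derivable from $\iam$ together with $\weakst$ is already derivable in $\iglwmo$. For strictness I would produce a separating principle, namely $\loebb$: it lies in $\iglwmo$ because specializing $\weako$ at $\phi:=\top$ yields $(\opr\psi\tto\psi)\to\opr\psi$, which implies $\loebb$ by the $\iam$-reasoning already noted in the $\weako$ subsection, whereas the Corollary following the $\weakst$-theorem states that $\iam\thad\weakst\nvdash\loebb$. Thus the extension is proper.

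For item (ii) I would prove two inclusions. The inclusion $\iglwmo\subseteq\iam\thad\weakst\thad\loebb$ is exactly Theorem~\ref{wcircderiv}, $\iam\thad\loebb\thad\weakst\vdash\weako$, together with closure under the $\iam$-rules. For the reverse inclusion I note that $\iglwmo$ proves both generators on the right: $\weakst$ by Theorem~\ref{weakstderiv}, and $\loebb$ by the same $\phi:=\top$ specialization used in (i). Since $\iglwmo$ is closed under the $\iam$-rules, it contains the closure of $\iam\cup\{\weakst,\loebb\}$, giving $\iam\thad\weakst\thad\loebb\subseteq\iglwmo$. The two inclusions yield the claimed equality.

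For item (iii) it suffices to show $\iam\thad\js\vdash\weako$, after which $\iglwmo\subseteq\iam\thad\js$ follows again by rule-closure. This is precisely the derivation opening the $\weako$ subsection: instantiating $\js$ at the formula $(p\wedge\phi)\tto\psi$, with $p$ the designated substitution variable not occurring in $\phi,\psi$, produces the fixed-point equation $(\phi\tto\psi)\iff((\phi\wedge(\phi\tto\psi))\tto\psi)$, and the right-to-left direction of this biconditional is exactly $\weako$. The only genuinely delicate point, and the step I expect to require care, is the substitution bookkeeping: one must check that setting $p:=\top$ collapses $(p\wedge\phi)\tto\psi$ to $\phi\tto\psi$ modulo $\iam$, so that the $\js$-fixed point is $\iam$-equivalent to $\phi\tto\psi$, and that substituting this fixed point back for $p$ indeed yields $(\phi\wedge(\phi\tto\psi))\tto\psi$. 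Everything else is routine citation of the named theorems and of the closure properties of $\thad$.
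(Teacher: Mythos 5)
Your proposal is correct and follows essentially the same route as the paper: the theorem there is stated with the proof ``we have already shown,'' and the facts it refers to are exactly the ones you cite (Theorem~\ref{weakstderiv} for the inclusion $\iam\thad\weakst\subseteq\iglwmo$, the $\phi:=\top$ specialization of $\weako$ plus the corollary that $\iam\thad\weakst\nvdash\loebb$ for strictness, Theorem~\ref{wcircderiv} for the other half of the equality in (ii), and the $\js$-fixed point of $(p\wedge\phi)\tto\psi$ computed at the start of the subsection for (iii)). Your extra care about the substitution bookkeeping in (iii) is exactly the content of the paper's phrase ``modulo $\iam$-provable equivalence,'' so nothing is missing.
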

 
 We also have:
 \begin{theorem}
  {\iglwmo} is incomparable to $\iglam$.
 \end{theorem}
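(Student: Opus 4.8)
The plan is to establish incomparability by exhibiting one principle in each logic that the other refutes, reusing exactly the two separating formulas already deployed in Theorem~\ref{medicijnsmurf} for the pair $\iglam$ versus $\iam\thad\js$. Since we have just shown that $\iglwmo$ is sandwiched below $\iam\thad\js$, both halves should follow with almost no new work.

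For the first direction, I would show $\iglwmo\nvdash\quata$ while $\iglam\vdash\quata$. The latter is immediate because $\iglam$ coincides with $\iam\thad\loebb\thad\quata$. For the former, I would chain the containments already available: $\iglwmo\subseteq\iam\thad\js$ (item iii of the theorem immediately preceding this one) together with $\iam\thad\js\subseteq\iglpm$ (since $\iglpm\vdash\js$ by Theorem~\ref{smurfin} and $\iglpm$ is closed under modus ponens and necessitation, so it contains the closure $\iam\thad\js$). As Lemma~\ref{slimmesmurf} gives $\iglpm\nvdash\quata$, we obtain $\iglwmo\nvdash\quata$ a fortiori.

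For the second direction, I would show $\iglwmo\vdash(\opr\bot\tto\bot)\to\opr\bot$ while $\iglam\nvdash(\opr\bot\tto\bot)\to\opr\bot$. The non-derivability is exactly Lemma~\ref{handigesmurf}. For the derivability, I would invoke the observation recorded just after the definition of $\weako$: specializing $\phi:=\top$ in $\weako$ yields $(\opr\psi\tto\psi)\to\opr\psi$, and then taking $\psi:=\bot$ lands precisely on the desired formula. Combining the two directions yields incomparability.

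The main obstacle, such as it is, is purely bookkeeping: one must check that the chain $\iglwmo\subseteq\iam\thad\js\subseteq\iglpm$ is legitimately assembled from results proved earlier, and that the single specialization $\phi:=\top,\ \psi:=\bot$ in $\weako$ really produces $(\opr\bot\tto\bot)\to\opr\bot$. No genuinely new semantic argument is needed here, since both underlying Kripke countermodels are already furnished by Lemmas~\ref{handigesmurf} and~\ref{slimmesmurf}.
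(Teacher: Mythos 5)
Your proposal is correct and follows essentially the same route as the paper: the paper's first half ("since $\iglwmo$ is contained in $\iglpm$, it cannot contain $\iglam$") is exactly your chain $\iglwmo\subseteq\iam\thad\js\subseteq\iglpm$ combined with Lemma~\ref{slimmesmurf} and the fact that $\iglam\vdash\quata$, which you merely spell out in more detail; your second half (specializing $\weako$ at $\phi:=\top$, $\psi:=\bot$ and appealing to Lemma~\ref{handigesmurf}) is verbatim the paper's argument.
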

 
 \begin{proof}
 Since {\iglwmo} is contained in {\iglpm}, it cannot contain \iglam.
 Conversely, putting $\phi := \top$ and $\psi:=\bot$, we see that
 ${\iglwmo} \vdash (\opr\bot\tto\bot) \to \opr\bot$. But by Lemma~\ref{handigesmurf}, the logic {\iglam} does not prove this.
 \end{proof}
  
  \subsection{The Principle {\loebao}}
  
 The \js-fixed point of $\phi\tto(p\to \psi)$, where $p$ does not occur in $\phi$ and $\psi$ is, modulo \iam-provable equivalence,
 $\phi\tto \psi$.
The fixed point equation is:
$ (\phi\tto \psi) \iff (\phi \tto ((\phi \tto \psi ) \to \psi))$.
Modulo \iam-provable equivalence this reduces to:
\begin{description}
\item[\loebao]
$  (\phi \tto ((\phi \tto \psi ) \to \psi)) \tto (\phi\tto \psi)$.
\end{description}
 
 We note that putting $\phi := \top$ in {\loebao} gives us precisely \loebb.
 
 We define $\iglamo := \iam\thad\loebao$.
 
 \begin{theorem}\label{smurferella}
 $\iglwmo \vdash \loebao$. In other words, {\iglwmo} extends \iglamo.
 \end{theorem}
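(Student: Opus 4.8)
The plan is to deduce the strict implication $\loebao$ from a \emph{material} one by a single application of the necessitation rule $\na$. Writing $A$ for the antecedent $\phi \tto ((\phi \tto \psi) \to \psi)$ and $B$ for the consequent $\phi\tto\psi$, I would first establish $\iglwmo \vdash A \to B$ and then invoke $\na$: since $A\to B$ is a theorem, $\na$ yields the theorem $A \tto B$, which is precisely $\loebao$. Thus the whole computational content sits in the material implication $A\to B$; the passage to $\tto$ is free.

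To prove $A\to B$ I would route through the antecedent of $\weako$. The key step is that, already over $\iam$, the formula $A$ entails $(\phi\wedge(\phi\tto\psi))\tto\psi$. Writing $\alpha := \phi\tto\psi$, this is the $\alpha$-instance of the general fact
\[
\iam \vdash (\phi \tto (\alpha \to \psi)) \to ((\phi\wedge\alpha)\tto\psi),
\]
which I would derive by standard normal-modal bookkeeping: from the \ipc-theorems $\phi\wedge\alpha \to \phi$ and $\phi\wedge\alpha\to\alpha$ obtain $(\phi\wedge\alpha)\tto\phi$ and $(\phi\wedge\alpha)\tto\alpha$ by $\na$; chain the former with the hypothesis $\phi\tto(\alpha\to\psi)$ through $\tr$ to get $(\phi\wedge\alpha)\tto(\alpha\to\psi)$; combine with $(\phi\wedge\alpha)\tto\alpha$ via $\ka$ to get $(\phi\wedge\alpha)\tto((\alpha\to\psi)\wedge\alpha)$; and finally chain with the $\na$-consequence $((\alpha\to\psi)\wedge\alpha)\tto\psi$ through $\tr$. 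Instantiating $\alpha := \phi\tto\psi$ gives $\iam\vdash A \to ((\phi\wedge(\phi\tto\psi))\tto\psi)$, and composing with $\weako$, which supplies $((\phi\wedge(\phi\tto\psi))\tto\psi)\to(\phi\tto\psi)$, delivers $\iglwmo\vdash A \to B$ as required.

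The point I would flag as the conceptual crux—though it is light computationally—is the \emph{direction} of the $\iam$-entailment between the two antecedents: we need $A$ to entail the antecedent of $\weako$, and not the reverse. The converse implication $((\phi\wedge\alpha)\tto\psi)\to(\phi\tto(\alpha\to\psi))$ would require pushing a $\sqsubset$-successor upward along $\preceq$, i.e.\ the brilliancy condition \lb, which is not available in $\iglwmo$; the direction we actually use follows from $\tr$ and $\ka$ alone. The second point to state explicitly is that the final promotion is legitimate: $\na$ is a rule of every $\iam$-logic, so it turns the theorem $A\to B$ into the theorem $A\tto B$ even though the formula $A\tto B$ is in general strictly stronger than $A\to B$. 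Notably, $\loebb$—which is present in $\iglwmo$—is not needed for this argument. Finally, the second half of the statement is immediate: since $\iglwmo$ contains $\iam$ and is a logic (hence closed under substitution, modus ponens, and necessitation), $\iglwmo\vdash\loebao$ forces $\iglamo = \iam\thad\loebao \subseteq \iglwmo$.
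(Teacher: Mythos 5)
Your proof is correct and follows essentially the same route as the paper's: the paper's own derivation in $\iglwmo$ is exactly the two-step chain $(\phi \tto ((\phi \tto \psi) \to \psi)) \to ((\phi \wedge (\phi \tto \psi)) \tto \psi) \to (\phi \tto \psi)$, i.e.\ passing through the antecedent of $\weako$ and then applying $\weako$ itself. The only difference is presentational: the paper leaves implicit both the $\iam$-bookkeeping behind the first arrow (your $\tr$/$\ka$/$\na$ computation) and the final promotion of the material implication to $\tto$ via $\na$, which you rightly make explicit.
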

 
 \begin{proof}
 In {\iglwmo} we have: \qedright
 \begin{eqnarray*}
 (\phi \tto ((\phi \tto \psi ) \to \psi)) 
 & \to & ((\phi \wedge (\phi \tto \psi)) \tto \psi) \\
 & \to & \phi \tto \psi
 \end{eqnarray*}
 \end{proof}
 
 \subsection{The Principle {\quatao}}

The \js-fixed point of $\phi\tto(p\wedge \psi)$, where $p$ does not occur in $\phi$ and $\psi$ is, modulo \iam-provable equivalence, $\phi\tto\psi$.
 The fixed point equation is \[ (\phi\tto \psi) \iff (\phi \tto ((\phi \tto \psi ) \wedge \psi)).\]
  Over $\iam$ this simplifies to the following principle:
 \begin{description}
 \item[\quatao]
 $ (\phi\tto \psi) \to (\phi \tto (\phi \tto \psi ))$
 \end{description}
Putting $\phi := \top$ in {\quatao} gives us precisely \quatb.

\begin{theorem}
$\iglamo \vdash \quatao$.
\end{theorem}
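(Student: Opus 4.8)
The plan is to read the statement as the $\phi$-relativised form of the classical fact that L\"ob's axiom yields the transitivity axiom $4$ over $\mathsf K$. Writing the principal connective of $\loebao$ as a material implication (consistent with the remarks that $\loebao|_{\phi:=\top}$ is $\loebb$ and $\quatao|_{\phi:=\top}$ is $\quatb$, and with the derivation of $\loebao$ inside $\iglwmo$ in Theorem~\ref{smurferella}, whose conclusion is a chain of material implications), both $\loebao$ and $\quatao$ arise from $\opr(\opr\psi\to\psi)\to\opr\psi$ and $\opr\psi\to\opr\opr\psi$ under the uniform replacement $\opr(-)\mapsto(\phi\tto -)$. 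The key observation is that $\phi\tto(-)$ behaves as a normal box over $\iam$: it is monotone (from $\na$ and $\tr$) and admits relativised necessitation, $\vdash A\Rightarrow\vdash\phi\tto A$ (again from $\na$). Granting this, the textbook derivation of $4$ from L\"ob transfers, its single appeal to L\"ob becoming an instance of $\loebao$ --- legitimate because $\iglamo$ is a logic closed under substitution.

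Concretely I would set $\alpha:=\phi\tto\psi$ and introduce the relativised $\dotbox\psi$, namely $\theta:=\psi\wedge\alpha$, and then chain the following, each an $\iam$-fact except for one use of $\loebao$. From $\theta\to\psi$ and $\theta\to\alpha$ together with monotonicity,
\[
\iam\vdash(\phi\tto\theta)\to\alpha\quad\text{and}\quad\iam\vdash(\phi\tto\theta)\to(\phi\tto\alpha).
\]
Using the first of these, $\iam\vdash\psi\to((\phi\tto\theta)\to\theta)$ --- assuming $\psi$ and $\phi\tto\theta$, the latter delivers $\alpha$, hence $\theta=\psi\wedge\alpha$ --- and applying monotonicity of $\phi\tto(-)$ to this theorem gives
\[
\iam\vdash\alpha\to\bigl(\phi\tto((\phi\tto\theta)\to\theta)\bigr).
\]
The instance of $\loebao$ with $\psi$ replaced by $\theta$ is $(\phi\tto((\phi\tto\theta)\to\theta))\to(\phi\tto\theta)$; composing it with the previous line yields $\alpha\to(\phi\tto\theta)$, and composing once more with $(\phi\tto\theta)\to(\phi\tto\alpha)$ gives $\alpha\to(\phi\tto\alpha)$, i.e.\ $\quatao$.

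The only genuine decision is the auxiliary formula $\theta=\psi\wedge(\phi\tto\psi)$; exactly as in the unrelativised argument, $\theta:=\psi$ would merely re-assert $\loebao$, and it is the conjunct $\psi$ that lets the boxed hypothesis $\phi\tto\theta$ return the missing $\alpha$. I expect this choice, rather than any calculation, to be the crux: once $\theta$ is fixed everything is routine $\tto$-bookkeeping through $\na$ and $\tr$, and in particular the argument uses neither $\ka$ nor $\di$ --- the latter being consistent with $\iglamo$ living over $\iam$. The one point I would verify carefully is that the principal connective of $\loebao$ is indeed material implication and not $\tto$, since only then is the modus-ponens step through the $\loebao$-instance available; Theorem~\ref{smurferella} settles this.
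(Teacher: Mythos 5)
Your proposal is correct and is essentially the paper's own proof: the paper derives $\quatao$ by exactly the same three-step chain $(\phi\tto\psi)\to\bigl(\phi \tto ((\phi \tto \theta) \to \theta)\bigr)\to(\phi\tto\theta)\to\bigl(\phi\tto(\phi\tto\psi)\bigr)$ with the same auxiliary formula $\theta=\psi\wedge(\phi\tto\psi)$, the middle step being the $\loebao$-instance with $\psi:=\theta$ and the outer steps the $\iam$-monotonicity facts you spell out. Your careful reading of $\loebao$ with material implication as principal connective (despite the $\tto$ displayed in its official statement) agrees with how the paper itself applies the axiom both here and in Theorem~\ref{smurferella}, and your ``relativised box'' gloss is precisely the paper's own remark that the unimodal L\"ob system embeds into $\iglamo$ by reading $\phi\tto(\cdot)$ as the box.
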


\begin{proof}
We reason in \iglamo. \qedright
\begin{eqnarray*}
\phi\tto \psi & \to & \phi \tto ((\phi \tto (\psi \wedge (\phi \tto \psi))) \to (\psi \wedge (\phi \tto \psi))) \\
& \to & \phi \tto (\psi \wedge (\phi \tto \psi)) \\
& \tto & \phi \tto (\phi \tto \psi)
\end{eqnarray*}
\end{proof}

\begin{theorem}
$\iam\thad\loebb \thad\quatao \vdash \loebao$ and hence $\iam\thad\loebb \thad\quatao$ coincides with 
 \iglamo.
\end{theorem}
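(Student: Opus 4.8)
The plan is to establish the two inclusions witnessing the coincidence separately, the substantial one being $\iam\thad\loebb\thad\quatao\vdash\loebao$. For the easy inclusion $\iam\thad\loebb\thad\quatao\subseteq\iglamo$, I would simply observe that $\iglamo=\iam\thad\loebao$ already proves both generators: it proves $\quatao$ by the immediately preceding theorem, and it proves $\loebb$ by specialising $\loebao$ to $\phi:=\top$, as noted when $\loebao$ was introduced. Since $\iglamo$ is closed under the $\iam$-rules, it contains everything derivable from $\iam$ together with $\loebb$ and $\quatao$.

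For the main direction I would abbreviate $\alpha:=\phi\tto\psi$ and $H:=\phi\tto(\alpha\to\psi)$, so that $\loebao$ is exactly $H\tto\alpha$. The first key move is to notice that it suffices to derive the plain implication $H\to\alpha$: once this is a theorem, necessitation yields $\opr(H\to\alpha)$, and since $\iam\vdash\opr(\chi\to\theta)\to(\chi\tto\theta)$ (a routine consequence of antitonicity in the first argument of $\tto$, of \ka, and of monotonicity), modus ponens delivers $H\tto\alpha$; equivalently one applies the rule $\na$ directly to $H\to\alpha$. Note that $H\to\alpha$ genuinely requires a Löb principle, as it already fails on a reflexive singleton with $\phi:=\top$, $\psi:=\bot$; so no argument using $\quatao$ alone can suffice.

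To obtain $H\to\alpha$ I would run a relativised Löb argument. First, using \emph{only} $\iam$ and $\quatao$, I establish the ``pre-Löb'' implication $\opr(H\to\alpha)\to(H\to\alpha)$. Reasoning under the hypotheses $\opr(H\to\alpha)$ and $H$: the instance of $\quatao$ with second argument $\alpha\to\psi$ gives $H\to(\phi\tto H)$, hence $\phi\tto H$; antitonicity (from $\phi\to\top$) turns $\opr(H\to\alpha)=\top\tto(H\to\alpha)$ into $\phi\tto(H\to\alpha)$; the normality of $\phi\tto(-)$, i.e.\ \ka\ followed by monotonicity along $H\wedge(H\to\alpha)\to\alpha$, combines these into $\phi\tto\alpha$; and finally \ka\ applied to $H=\phi\tto(\alpha\to\psi)$ and $\phi\tto\alpha$, followed by monotonicity along $\alpha\wedge(\alpha\to\psi)\to\psi$, yields $\phi\tto\psi=\alpha$. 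With the pre-Löb implication in hand I apply Löb's rule --- derivable from $\loebb$ by necessitation and modus ponens --- to the instance $\theta:=H\to\alpha$, concluding $\vdash H\to\alpha$, and then pass to $H\tto\alpha=\loebao$ as in the previous paragraph. Combining this with the easy inclusion gives the asserted coincidence with $\iglamo$.

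The main obstacle is conceptual rather than computational: one must recognise that the outer connective of $\loebao$ is the plain arrow, so that the whole statement factors through the plain implication $H\to\alpha$ plus a single use of $\na$, and that the semantic content --- Noetherian induction along a \emph{transitive} $\sqsubset$ --- splits cleanly into (i) a transitivity-only derivation of the pre-Löb implication and (ii) one appeal to $\loebb$. The only point demanding care is confirming that both Löb's rule and $\na$ are genuinely available in $\iam\thad\loebb\thad\quatao$, which they are, since the closure is taken under necessitation and $\iam$ already proves $\opr(\chi\to\theta)\to(\chi\tto\theta)$.
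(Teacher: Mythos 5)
Your proof is correct and takes essentially the same route as the paper: the paper's derivation, with $\gamma := ((\phi\tto\psi)\to\psi)$, is exactly your pre-L\"ob implication $\opr(H\to\alpha)\to(H\to\alpha)$, obtained from the very same instance of $\quatao$ plus \ka/monotonicity reasoning, followed by L\"ob's Rule and a final (in the paper implicit) application of \na. Your explicit handling of the easy inclusion via the preceding theorem and the $\phi:=\top$ specialisation likewise matches how the surrounding text of the paper justifies the coincidence with \iglamo.
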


\begin{proof}
Let $\gamma := ((\phi\tto\psi)\to \psi)$.
We have over $\iam \thad \loebb \thad\quatao$:

{\footnotesize
$(\phi \tto \gamma) \wedge \opr ((\phi\tto \gamma) \to (\phi\tto\psi)) \quad \to$  
\begin{eqnarray*}
& \to & (\phi \tto \gamma) \wedge \opr ((\phi\tto \gamma) \to (\phi\tto\psi)) \wedge (\phi \tto(\phi \tto \gamma)) \\
& \to & (\phi\tto \gamma) \wedge (\phi \tto (\phi\tto \psi)) \\
& \to &(\phi\tto \psi) 
\end{eqnarray*}
}
We find the desired result by applying L\"ob's Rule.
\end{proof}

We note that we can interpret the unimodal L\"ob system  in \iglamo\ by reading $\phi\tto(\cdot)$ for the box.

\begin{theorem}\label{smurfette} 
$\iglam \vdash \loebao$, i.e., {\iglam} extends \iglamo.
\end{theorem}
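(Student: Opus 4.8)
The plan is to reduce the claim to the two generators of \iglamo\ isolated by the theorem immediately preceding this one, namely the coincidence $\iglamo = \iam\thad\loebb\thad\quatao$. Granting that, it suffices to verify that $\iglam$ proves both \loebb\ and \quatao; the conclusion $\iglam\vdash\loebao$ then follows, since every axiom and rule used to derive \loebao\ from \iam, \loebb\ and \quatao\ is available in $\iglam$. The first obligation is immediate: by \cite[Lemma 4.11]{LitakV18:im} we have $\iglam = \iam\thad\loebb\thad\quata$, so in particular $\iglam\vdash\loebb$. I would also record, for use below, that $\iglam\vdash\quatb$: indeed \tr\ gives $(\opr\phi\wedge(\phi\tto\opr\phi))\to\opr\opr\phi$, and \quata\ supplies the conjunct $\phi\tto\opr\phi$, whence $\opr\phi\to\opr\opr\phi$.

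The substance is therefore $\iglam\vdash\quatao$, and here I would exploit that $\iglam$ is exactly the theory in which the de Jongh--Visser computation works. Apply Theorem \ref{grotesmurf} to the formula $\phi\tto(p\wedge\psi)$ with $p$ fresh: its \jv-fixpoint is $\theta := \phi\tto(\top\wedge\psi)$, which is \iam-equivalent to $\phi\tto\psi$, and the theorem yields $\iglam\vdash\theta\iff(\phi\tto(\theta\wedge\psi))$, i.e.\ after simplification $\iglam\vdash(\phi\tto\psi)\iff(\phi\tto((\phi\tto\psi)\wedge\psi))$. Reading the left-to-right direction and composing it with the monotonicity instance $(\phi\tto((\phi\tto\psi)\wedge\psi))\to(\phi\tto(\phi\tto\psi))$ — which holds over \iam, since $(\phi\tto\psi)\wedge\psi\to\phi\tto\psi$ and consequents of $\tto$ are monotone via \na\ and \tr\ — delivers exactly \quatao.

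A slicker variant avoids \quatao\ altogether and yields \loebao\ on the nose: apply Theorem \ref{grotesmurf} instead to $\phi\tto(p\to\psi)$, whose \jv-fixpoint is again \iam-equivalent to $\phi\tto\psi$, obtaining $\iglam\vdash(\phi\tto\psi)\iff(\phi\tto((\phi\tto\psi)\to\psi))$; now \na\ converts the provable material implication $(\phi\tto((\phi\tto\psi)\to\psi))\to(\phi\tto\psi)$ straight into its strict form, which is literally \loebao. This is the conceptual heart of the matter: \loebao\ is no more than the \na-strictification of a fixpoint equation that $\iglam$ already proves — which is equally the reason it holds under \js\ (cf.\ Figure \ref{fig:logics}).

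The single step demanding care — the main obstacle — is the bookkeeping hidden in ``after simplification''. The fixpoint returned by Theorem \ref{grotesmurf} literally contains the subformulas $\top\wedge\psi$ (resp.\ $\top\to\psi$) in the scope of $\tto$, and to reach the displayed equivalences I must rewrite these to $\psi$ \emph{underneath} $\tto$. This is precisely what the substitution principle {\sf Sub1} of Lemma \ref{sub1sub2} licenses, and that lemma is available because $\iglam\vdash\quatb$ as shown above. With that in hand, everything else is routine \iam-manipulation using \na, \tr\ and \ka.
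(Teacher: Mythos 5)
Your proof is correct, but it takes a genuinely different route from the paper's. The paper's own proof is a direct two-step derivation inside \iglam: since $\opr\psi \to (\phi\tto\psi)$ is an \iam-theorem, the consequent $(\phi\tto\psi)\to\psi$ can be weakened under the arrow to $\opr\psi\to\psi$, giving $(\phi \tto ((\phi\tto \psi) \to \psi)) \to (\phi \tto (\opr\psi \to \psi))$; then the \loeba-instance $(\opr\psi\to\psi)\tto\psi$ together with \tr\ yields $\phi\tto\psi$, and \na\ lifts the resulting material implication to the strict form. You instead route everything through the fixpoint machinery: Theorem \ref{grotesmurf} applied to $\phi\tto(p\wedge\psi)$ gives \quatao, and the coincidence $\iglamo = \iam\thad\loebb\thad\quatao$ from the theorem immediately preceding the statement finishes; your ``slicker variant'' applies Theorem \ref{grotesmurf} to $\phi\tto(p\to\psi)$ and obtains \loebao\ in one stroke via \na. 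Both of your routes check out: the instances of Theorem \ref{grotesmurf} are legitimate (the substitution variable is absent from the antecedent, so $\psi\opr\chi\top$ collapses to $\phi$); the closure properties you need ($\iglam$ contains \iam, all instances of \loebb\ and \quatao, and is closed under modus ponens and necessitation, hence contains $\iam\thad\loebb\thad\quatao$) are in order; and the rewriting of $\top\wedge\psi$ (resp.\ $\top\to\psi$) to $\psi$ underneath $\tto$ is licensed, although invoking {\sf Sub1} of Lemma \ref{sub1sub2} is heavier than needed --- replacement of \iam-\emph{provably} equivalent formulas under $\tto$ already follows from \na, \tr\ and \ka\ alone, with no appeal to \quatb. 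As for what each approach buys: the paper's derivation is shorter and self-contained, using nothing beyond \iam-reasoning and a single \loeba-instance; yours is conceptually more informative, since it makes explicit \emph{why} \loebao\ sits below both \jv\ and \js\ (cf.\ Figure \ref{fig:logics}): for formulas $\phi\tto(p\to\psi)$ with $p$ not occurring in the antecedent, the de Jongh--Visser and de Jongh--Sambin fixpoints coincide modulo \iam, so the corresponding fixpoint equation --- and hence its \na-strictification \loebao\ --- is provable in both \iglam\ and $\iam\thad\js$.
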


\begin{proof}
\qedright
We reason in \iglam.
\begin{eqnarray*}
(\phi \tto ((\phi\tto \psi) \to \psi)) & \to & (\phi \tto (\opr \psi \to \psi)) \\
& \to & (\phi \tto\psi)
\end{eqnarray*}
\end{proof}

\begin{corollary}
{\iglamo} is strictly contained in {\iglwmo} and in \iglam.
\end{corollary}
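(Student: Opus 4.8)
The containments themselves are already in hand: Theorem~\ref{smurferella} gives $\iglamo \subseteq \iglwmo$ and Theorem~\ref{smurfette} gives $\iglamo \subseteq \iglam$. So the only task is to check that each inclusion is \emph{proper}. My plan is to exhibit a separating formula in each case, relying entirely on the non-derivability lemmas already established rather than on any new semantic construction.

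For the inclusion $\iglamo \subseteq \iglwmo$, I would take $(\opr\bot\tto\bot)\to\opr\bot$ as the witness. It is derivable in $\iglwmo$: instantiating $\weako$ with $\phi := \top$ and $\psi := \bot$ yields precisely $(\opr\bot\tto\bot)\to\opr\bot$, exactly as observed in the proof that $\iglwmo$ is incomparable to $\iglam$. On the other hand, since $\iglamo \subseteq \iglam$ by Theorem~\ref{smurfette}, every theorem of $\iglamo$ is a theorem of $\iglam$; but Lemma~\ref{handigesmurf} records that $\iglam \nvdash (\opr\bot\tto\bot)\to\opr\bot$. Hence $\iglamo$ does not prove this formula either, and the inclusion into $\iglwmo$ is strict.

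For the inclusion $\iglamo \subseteq \iglam$, I would use $\quata$ as the witness. Since $\iglam = \iam\thad\loebb\thad\quata$, we certainly have $\iglam \vdash \quata$. To see that $\iglamo$ does not prove $\quata$, I would assemble the containment chain $\iglamo \subseteq \iglwmo \subseteq \iglpm$: the first inclusion is Theorem~\ref{smurferella}, and the second was already noted earlier (indeed $\iglwmo$ is contained in $\iam\thad\js$, and hence in $\iglpm$). Every theorem of $\iglamo$ is therefore a theorem of $\iglpm$, and since Lemma~\ref{slimmesmurf} yields $\iglpm \nvdash \quata$, we conclude $\iglamo \nvdash \quata$, so the inclusion into $\iglam$ is strict as well.

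There is no genuinely hard step in this argument: the whole proof is a matter of reusing derivations and refutations already recorded. The only point needing a little care is the second strictness claim, where one must route the non-derivability of $\quata$ through the chain $\iglamo \subseteq \iglwmo \subseteq \iglpm$ in order to invoke Lemma~\ref{slimmesmurf}; once the two witnesses $(\opr\bot\tto\bot)\to\opr\bot$ and $\quata$ are identified, they do all the real work.
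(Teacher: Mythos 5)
Your proof is correct, but it is packaged differently from the paper's. The paper disposes of strictness in a single abstract line: since $\iglamo$ is contained in both $\iglwmo$ and $\iglam$, and those two theories were already shown to be incomparable, neither containment can be an equality (equality with either would force a containment between the two incomparable theories). Your proof instead unfolds that incomparability argument into its two explicit witnesses: $(\opr\bot\tto\bot)\to\opr\bot$, which $\iglwmo$ proves (via $\weako$ with $\phi:=\top$, $\psi:=\bot$) but $\iglam$ --- and hence $\iglamo \subseteq \iglam$ --- does not, by Lemma~\ref{handigesmurf}; and $\quata$, which $\iglam$ proves while $\iglpm$ does not by Lemma~\ref{slimmesmurf}, so that the chain $\iglamo \subseteq \iglwmo \subseteq \iam\thad\js \subseteq \iglpm$ blocks $\iglamo \vdash \quata$. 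These are exactly the same lemmas and witnesses the paper used one theorem earlier to establish the incomparability of $\iglwmo$ and $\iglam$, so the mathematical content coincides; the difference is that the paper cites that incomparability as a black box and invokes the general principle that a theory contained in two incomparable theories is strictly contained in both, whereas you re-derive the two non-derivabilities concretely. The paper's version is shorter and makes the reusable set-theoretic principle visible; yours has the modest advantage of exhibiting a concrete separating formula for each inclusion, and both of your routings of the non-derivability facts are legitimate.
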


\begin{proof}
The strictness follows from the fact that {\iglamo} is contained in the incomparable theories {\iglwmo} and \iglam.
\end{proof}


\begin{question}
Is $\iglwmo$ equal to $\iam\thad\js$ and do systems such as {\iglwmo} or perhaps even $\iglamo$ have explicit fixed points at all?  
\end{question}


Figure \ref{fig:logics} summarizes the relationships between the theories studied.
Theories that are incomparable in the picture are really incomparable. As we see the main
question left open in the picture is the yes-or-no identity of $\iam\thad\js$ and \iglwmo.

\section{Correspondence and Non-conservativity} \label{sec:corresp}


We can now extend the list of known correspondences given in Theorem \ref{th:minco} and  Figure \ref{tab:compl} to several additional principles, whose importance has been highlighted by our study.
%

\begin{theorem}
In Kripke semantics, {\pers} corresponds to the transitivity of $\sqsubset$.
\end{theorem}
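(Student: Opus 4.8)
The statement is a standard modal correspondence result, so the plan is to prove the two implications separately, working directly with the forcing clause \refeq{forctto}. First I would record that $k \Vdash \opr\chi$ holds exactly when $\ell \Vdash \chi$ for every $\ell \sqsupset k$ (instantiating \refeq{forctto} with $\phi = \top$), and that, since the principal connective of $\pers$ is intuitionistic $\to$, it suffices to show at each world $k$ that $k \Vdash \phi \tto \psi$ implies $k \Vdash \opr(\phi \tto \psi)$; validity at all $\preceq$-predecessors then follows automatically, so that validity of $\pers$ on the frame reduces to this pointwise condition.

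For the soundness direction (transitivity $\To$ validity), I would assume $\sqsubset$ is transitive and fix a world $k$ with $k \Vdash \phi \tto \psi$. To see $k \Vdash \opr(\phi \tto \psi)$, take any $\ell \sqsupset k$ and any $m \sqsupset \ell$ with $m \Vdash \phi$; transitivity gives $k \sqsubset m$, so the assumption $k \Vdash \phi \tto \psi$ applied to $m$ yields $m \Vdash \psi$. Hence $\ell \Vdash \phi \tto \psi$ for every such $\ell$, which is exactly $k \Vdash \opr(\phi \tto \psi)$. This step is a routine chase through the clauses and uses nothing beyond transitivity.

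The substantive half is the converse, which I would prove contrapositively. Suppose $\sqsubset$ is not transitive, so there are worlds with $k \sqsubset \ell \sqsubset m$ but $k \not\sqsubset m$. I would refute an instance of $\pers$ at $k$ using two atoms $p,q$ with the admissible valuation
\[
 V(p) := \{x \mid m \preceq x\}, \qquad V(q) := \{z \mid \exists x\,(k \sqsubset x \ \wedge\ m \preceq x \ \wedge\ x \preceq z)\},
\]
both of which are visibly $\preceq$-upsets. Here $k \Vdash p \tto q$ because any $\sqsubset$-successor $x$ of $k$ forcing $p$ satisfies $m \preceq x$, and then $x$ itself witnesses $x \in V(q)$. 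On the other hand $m \Vdash p$ while $m \notin V(q)$, since $m \in V(q)$ would require some $x$ with $k \sqsubset x$ and $m \preceq x \preceq m$, forcing $x = m$ by antisymmetry of $\preceq$ and hence the excluded $k \sqsubset m$. Consequently $\ell \not\Vdash p \tto q$ (witnessed by $m \sqsupset \ell$), so $k \not\Vdash \opr(p \tto q)$, and since $k \Vdash p \tto q$ the instance $(p \tto q) \to \opr(p \tto q)$ fails at $k$.

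The only delicate point — and the one I would check most carefully — is the design of $V(q)$: it must be large enough to make $p \tto q$ true at $k$ yet exclude $m$, and the exclusion rests precisely on combining antisymmetry of $\preceq$ with $k \not\sqsubset m$. Everything else is a direct unwinding of the forcing clauses.
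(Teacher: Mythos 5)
Your proof is correct and follows essentially the same route as the paper's: the easy direction is the same routine unwinding of the forcing clauses (the paper simply declares it clear), and the converse is the same contrapositive countermodel built on a configuration $k \sqsubset \ell \sqsubset m$ with $k \not\sqsubset m$, using two atoms with the identical valuation for $p$. The only difference is cosmetic: the paper takes $V(q) = \{x \mid x \not\preceq m\}$ rather than your generated upset, but both ways of excluding $m$ from $V(q)$ rest on exactly the same combination of antisymmetry of $\preceq$ with the assumption $k \not\sqsubset m$.
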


\begin{proof}
Clearly transitivity implies \pers. Conversely, consider a frame $\mathcal F$ with
nodes $a\sqsubset b \sqsubset c$ and $a \not \sqsubset c$. We define
$x \Vdash p$ iff $c\preceq x$, and $x\Vdash q$ iff $x \not\preceq c$.
Clearly, $c \Vdash p$ and $c\nVdash q$, so $a \nVdash \opr(p\tto q)$.
On the other hand, consider any $z$ with $a\sqsubset z\Vdash p$.
It follows that $c\preceq z$. By our assumption $z \neq c$, so $z \not \preceq c$.
Ergo $z \Vdash q$. Thus, $a \Vdash p\tto q$.
\end{proof}

\begin{theorem}
In Kripke semantics, {\quatao} corresponds to gather-transitivity: if $x\sqsubset y \sqsubset z$, then $x\sqsubset z$ or $y \preceq z$.
\end{theorem}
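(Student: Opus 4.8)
The principle $\quatao$ reads $(\phi \tto \psi) \to (\phi \tto (\phi \tto \psi))$, so I would first unfold it through the forcing clause \refeq{forctto}. Unwinding the nested arrow, $k \Vdash \phi \tto (\phi \tto \psi)$ holds exactly when, for all $\ell, m$ with $k \sqsubset \ell \sqsubset m$, $\ell \Vdash \phi$ and $m \Vdash \phi$, one has $m \Vdash \psi$; whereas $k \Vdash \phi \tto \psi$ says that every $\sqsubset$-successor of $k$ forcing $\phi$ also forces $\psi$. The proof then splits into the usual two halves of a correspondence argument.

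For soundness, I would assume gather-transitivity and fix a world $k$ with $k \Vdash \phi \tto \psi$. Given any $\ell, m$ with $k \sqsubset \ell \sqsubset m$, $\ell \Vdash \phi$ and $m \Vdash \phi$, gather-transitivity yields $k \sqsubset m$ or $\ell \preceq m$. In the first case $k \Vdash \phi \tto \psi$ applied to $m$ gives $m \Vdash \psi$; in the second case the same hypothesis applied to $\ell$ gives $\ell \Vdash \psi$, and persistence of $\psi$ along $\ell \preceq m$ lifts this to $m \Vdash \psi$. Either way $k \Vdash \phi \tto (\phi \tto \psi)$, so $\quatao$ is valid on every gather-transitive frame.

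For the converse I would argue contrapositively, refuting $\quatao$ on any frame where gather-transitivity fails. Such a failure provides $x \sqsubset y \sqsubset z$ with $x \not\sqsubset z$ and $y \not\preceq z$. I would set $w \Vdash p$ iff $y \preceq w$ or $z \preceq w$, and $w \Vdash q$ iff $w \not\preceq z$; both are $\preceq$-upsets. Then $y \Vdash p$, $z \Vdash p$ and $z \nVdash q$, so $y \nVdash p \tto q$ (witnessed by $z$) and hence $x \nVdash p \tto (p \tto q)$ (witnessed by $y$). Provided $x \Vdash p \tto q$, evaluating the implication $\quatao$ at $x$ itself shows $x \nVdash (p \tto q) \to (p \tto (p \tto q))$.

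The crux, and the step I expect to cost the most care, is verifying $x \Vdash p \tto q$: this must hold for \emph{every} $\sqsubset$-successor $w$ of $x$ forcing $p$, not just for $y$ and $z$. Any such $w$ satisfies $y \preceq w$ or $z \preceq w$. If $y \preceq w$ and $w \preceq z$, transitivity would give $y \preceq z$, contradicting $y \not\preceq z$; if $z \preceq w$ and $w \preceq z$, antisymmetry forces $w = z$, which is impossible since $x \sqsubset w$ but $x \not\sqsubset z$. In both cases $w \not\preceq z$, i.e.\ $w \Vdash q$, so $x \Vdash p \tto q$ and the refutation is complete.
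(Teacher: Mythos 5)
Your proof is correct and takes essentially the same route as the paper's: the same soundness argument (case split via gather-transitivity, using persistence of $\psi$ in the $\ell \preceq m$ case), and for the converse the identical countermodel valuation ($p$ forced exactly above $y$ or $z$, $q$ forced exactly off the downset of $z$) with the same transitivity/antisymmetry case analysis. The only differences are cosmetic: you spell out the admissibility of the valuations and the step from $y \nVdash p \tto q$ to the refutation of $\quatao$ at $x$, which the paper leaves implicit.
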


\begin{proof}
Suppose $\mathcal F$ is gather-transitive. Consider any model on $\mathcal F$. Suppose $a \Vdash \phi \tto \psi$.
Suppose we have $a \sqsubset b \sqsubset c$ and $b \Vdash \phi$ and $c\Vdash \phi$. In case $a\sqsubset c$, we have
$c\Vdash\psi$ and in case $b\preceq c$ we also have $c\Vdash \psi$. Done.

Conversely, consider a frame $\mathcal F$ with $a \sqsubset b\sqsubset c$ and (i) $a\not\sqsubset c$ and (ii) $b\not\preceq c$.
We define
$x \Vdash p$ iff $b \preceq x$ or $c\preceq x$, and $x\Vdash q$ iff $x \not\preceq c$.
Clearly, $b\vdash p$ and $c \Vdash p$ and $c\nVdash q$, so $a \nVdash \opr(p\tto q)$.
On the other hand, consider any $z$ with $a\sqsubset z\Vdash p$.
It follows that $b \preceq z$ or $c\preceq z$. In case we would have $z\preceq c$, it follows that either $b \preceq c$ or $z=c$.
The first possibility is excluded by (ii). The second possibility is excluded since it would follow that $a\sqsubset c$ contradicting (i).
 Ergo, $z\not \preceq c$ and, so, $z \Vdash q$. Thus, $a \Vdash p\tto q$.
\end{proof}

In the presence of \di, we also have 

\begin{description}
\item[\qquatao] $(\phi\tto(\psi\tto\chi)) \to (\phi\tto(\psi\tto(\phi\tto (\psi\tto \chi))))$. 
\end{description}
\takeout{
Semantically. we easily have:
\begin{theorem}
In all gather-transitive Kripke models we have \qquatao.
\end{theorem}
\begin{proof}
Consider any gather-transitive Kripke model and suppose $a\Vdash \phi\tto(\psi\tto\chi)$.
Consider $a\sqsubset b \sqsubset c \sqsubset d \sqsubset e$ such that $b\Vdash \phi$,
$c\Vdash \psi$, $d\Vdash \phi$ and $e\Vdash \psi$. We want to show $e\Vdash\chi$.
We note that $b\Vdash \psi\tto \chi$ and $c\Vdash \chi$.

By gather-transitivity, we have either (a) $d\preceq e$ or (b) $c\sqsubset e$.

We first treat case (a). 
Suppose  $d\preceq e$. By gather-transitivity, we have (aa) $c\preceq d$ or (ab) $b\sqsubset d$.
In case (aa), we find $c\preceq e$, and, hence $e\Vdash \chi$. In case (ab), we have, by gather-transitivity,
(aba) $b\preceq d$ or (abb) $a\sqsubset d$. In both cases, it follows that $d\Vdash \psi\tto\chi$, and, hence
$e\Vdash \chi$.
 
We treat case (b). Suppose $c\sqsubset e$. By gather-transitivity, we either have (ba) $c\preceq e$
 or (bb) $b\sqsubset e$. In the first case, we have $e\Vdash \chi$, since $c\Vdash \chi$. In the second case, we have
 $e\Vdash \chi$, since $b\Vdash \psi\tto \chi$.   
\end{proof}}

\takeout{
In the presence of \di\ the principle \qquatao\ is derivable.
We can show this in two different ways. First, here is a general reason:

\begin{theorem} \label{th:compquatao}
In the presence of \di, the principle {\quatao} axiomatizes a canonical logic. 
\end{theorem}

\begin{proof}
One can use the Wolter-Zakharyaschev-style translation technology proposed by Tadeusz Litak in the ``Lewis arrow 
fell off the wall'' development; \tlnt{Our original plan for this paper was to merge a presentation of this translation technology with bits and pieces about decomposing provability arrows that did not not fit into appendices of the Brouwer paper. Of course, we can arrange things differently; we need to discuss how our publications plans look like} the translation of the axiom of $\quatao$ can be shown using the algorithm 
SQEMA to be a conjunction of monadic inductive formulas.
\end{proof}

Second, here is a direct derivation:
}
\begin{theorem} \label{ququatao}
 $\ia\thad\quatao \vdash \qquatao$. 
\end{theorem}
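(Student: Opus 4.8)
The plan is to read $\qquatao$ as a transitivity (``$4$'') principle for the \emph{composite} operator $\Box X := \phi\tto(\psi\tto X)$. Writing $\theta := \psi\tto\chi$ and $F := \phi\tto\theta$, the antecedent of $\qquatao$ is exactly $F=\Box\chi$ and its consequent is $\phi\tto(\psi\tto F)=\Box\Box\chi$, so the goal is $F \to \phi\tto(\psi\tto F)$. The single-antecedent ``$4$''-principle $\quatao$ for $\phi$ and for $\psi$ is at my disposal, but these do not by themselves give ``$4$'' for the composite; the missing ingredient is the interaction between the two antecedents, and this is precisely what forces $\di$ into the argument.

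The key lemma I would isolate first is a distribution principle: over $\ia\thad\quatao$,
\[
((\phi\vee\psi)\tto X) \;\to\; (\psi\tto(\phi\tto X))
\]
for every $X$. I would prove it by applying $\quatao$ to the antecedent $\phi\vee\psi$, giving $((\phi\vee\psi)\tto X)\to((\phi\vee\psi)\tto((\phi\vee\psi)\tto X))$, and then using the additivity equivalence $(\alpha\vee\beta)\tto\gamma \leftrightarrow (\alpha\tto\gamma)\wedge(\beta\tto\gamma)$, whose right-to-left direction is literally $\di$ and whose left-to-right direction is $\na$ followed by $\tr$. Projecting the outer disjunctive box onto its $\psi$-component and the inner one onto its $\phi$-component (the latter pushed inside by $\na$ and $\tr$) leaves $\psi\tto(\phi\tto X)$.

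With this lemma the derivation is short. From $F=\phi\tto\theta$, the instance of $\quatao$ with $\phi:=\psi,\ \psi:=\chi$ gives $\theta\to(\psi\tto\theta)$; combining the resulting $\psi\tto\theta$ with the available $\phi\tto\theta$ via $\di$ yields $(\phi\vee\psi)\tto\theta$, and the distribution lemma then produces the ``inner'' fact $\theta\wedge(\phi\tto\theta)\to(\psi\tto(\phi\tto\theta))$. Separately, the instance of $\quatao$ with antecedent $\phi$ and consequent $\theta$ gives $F\to(\phi\tto F)$; together with $F=\phi\tto\theta$ and $\ka$ this yields $\phi\tto(\theta\wedge(\phi\tto\theta))$. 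Applying $\na$ to the inner fact and then $\tr$ delivers $\phi\tto(\psi\tto(\phi\tto\theta))=\phi\tto(\psi\tto F)$, which is $\qquatao$.

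The main obstacle, I expect, is spotting the correct reduction; the calculations are routine once it is set up. The tempting route of proving $F\to((\phi\vee\psi)\tto F)$ and then invoking the distribution lemma does \emph{not} work: that formula is already refuted on a small gather-transitive frame (a world can validate $F$ while having a pure $\psi$-successor at which $F$ fails). One must instead feed the \emph{smaller} formula $\theta$, not $F$, through the disjunction, which is legitimate only because one application of the outer $\phi\tto(\cdot)$ makes $\theta$ available alongside $F$ at every $\phi$-successor. This is also exactly where $\di$ is indispensable, which is why the theorem is stated over $\ia$ rather than $\iam$.
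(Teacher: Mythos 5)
Your proof is correct and is essentially the paper's own argument: both rely on the same three instances of $\quatao$ (with antecedents $\psi$, $\phi$, and $\phi\vee\psi$), a single application of $\di$ to merge $\phi\tto\theta$ and $\psi\tto\theta$ into $(\phi\vee\psi)\tto\theta$, and the two monotonicity projections of the disjunctive antecedent onto $\psi$ (outer) and $\phi$ (inner). The only difference is organizational: you isolate the projection step as a standalone distribution lemma and establish the ``inner fact'' as a theorem before pushing it under the outer $\phi\tto(\cdot)$, whereas the paper performs the same manipulations directly inside that box via $\na$ and $\tr$ at each step.
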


\begin{proof}
\ifconf\confbl\else
\ExecuteMetaData[apptwocalc.tex]{ququatao}
\fi
\end{proof}

Nevertheless, the above derivation crucially uses \di. It is a cautionary tale that the Kripke correspondence conditions lose some of their importance for extensions of $\iam$ which do not extend $\ia$.

\begin{theorem} \label{iamququatao}
 $\iam\thad\quatao \nvdash \qquatao$. 
\end{theorem}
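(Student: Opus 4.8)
The plan is to refute \qquatao\ semantically via Algebraic Completeness (Theorem~\ref{algsound}): it suffices to produce a single $\tto$-algebra $\gA \in \thaes$ and a valuation with $\gA \Vdash \quatao$ but $\gA \not\Vdash \qquatao$. The first observation is that $\gA$ \emph{must} fail \di, i.e.\ it cannot be normalized. Indeed, any normalized algebra validating \quatao\ also validates \ia, hence validates every theorem of $\ia\thad\quatao$, and so validates \qquatao\ by Theorem~\ref{ququatao}. In particular no Lewisian Kripke frame can serve as a counterexample, since the dual of such a frame is always normalized (cf.\ \rfsc{sec:kripke}). This is exactly the ``cautionary tale'': the relational picture behind the correspondence condition \emph{gather-transitivity} forces \qquatao, and our model must live outside the normalized world.

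Concretely, I would present $\gA$ as a finite neighbourhood-style model: take a $\preceq$-poset $W$, let the Heyting reduct be $\ups{W}$, and interpret the arrow by a function $f$ assigning to each $w$ and each upset $X$ an upset $f(w,X)$, setting $\semb{X \tto Y} := \{w : f(w,X)\subseteq Y\}$ (subject to the persistence condition making $\semb{\cdot}$ upward closed). Under this reading \lna{CK} is automatic, \lna{CI} is the inclusion $f(w,X)\subseteq X$, \lna{CT} is a transitivity-type condition on $f$, while \di/\lna{CD} is precisely the additivity $f(w,X\cup X') \subseteq f(w,X)\cup f(w,X')$ that I will deliberately violate. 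Spelling out \quatao\ yields the \emph{single-antecedent} local condition ``$v\in f(w,X)\Rightarrow f(v,X)\subseteq f(w,X)$'', which I would arrange to hold for all $w$ and all upsets $X$. Spelling out \qquatao\ produces a nested three-layer sphere condition mixing the two distinct antecedents $X=\semb{\phi}$ and $Y=\semb{\psi}$; the goal is to choose $W$, $f$ and a valuation so that $\phi\tto(\psi\tto\chi)$ is forced at some $w$ (i.e.\ $f(v,Y)\subseteq Z$ for every $v\in f(w,X)$) while the consequent fails along a chain $w,\ v\in f(w,X),\ v'\in f(v,Y),\ t\in f(v',X)$ with $f(t,Y)\not\subseteq \semb{\chi}$.

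The hard part will be the simultaneous bookkeeping. Since \quatao\ is universally quantified over all antecedent-upsets, I must check that it does not accidentally repair the very chain used to break \qquatao; and \lna{CT}, being a genuinely two-antecedent condition, is the most delicate one to keep alive once $f$ is made non-additive. The design idea is that in any relational model the chain $w \sqsubset v \sqsubset v' \sqsubset t$ together with gather-transitivity would force enough $\sqsubset$-edges to make the consequent hold, and this forcing is exactly the disjunctive case-split underwritten by \di\ in the proof of Theorem~\ref{ququatao}. By letting $f(v',X)$ reach a point $t$ that no single relation could yield from the combined data, the disjunctive step has no analogue, the consequent genuinely fails, yet \lna{CK}, \lna{CI}, \lna{CT} and \quatao\ survive because each is a ``conjunctive'', single- or fixed-antecedent requirement insensitive to non-additivity. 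Once such an $f$ is exhibited on a small $W$, verifying the finitely many inclusions is routine, and by Theorem~\ref{algsound} this establishes $\iam\thad\quatao \nvdash \qquatao$.
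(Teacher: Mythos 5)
Your overall strategy matches the paper's: both arguments go through algebraic semantics (Theorem~\ref{algsound}), and your opening observation --- that any countermodel must fail \di, since by Theorem~\ref{ququatao} every normalized algebra validating \quatao\ also validates \qquatao, so no Lewisian Kripke frame can work --- is exactly the point the paper makes. Your neighbourhood-style format is also adequate in principle: on a finite poset $W$, every operation on $\ups{W}$ satisfying \lna{CK} is representable by a function $f(w,X)$ in the way you describe, so nothing is lost by insisting on that presentation.

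The gap is that you never produce the countermodel. Everything after ``Concretely, I would present $\gA$ as\dots'' describes what a countermodel would have to look like, together with heuristic reasons to expect one exists; your closing sentence, ``Once such an $f$ is exhibited on a small $W$, verifying the finitely many inclusions is routine,'' concedes that the exhibition itself is missing. But the existence of a $\tto$-algebra satisfying \lna{CK}, \lna{CT}, \lna{CI} and \quatao\ while refuting \qquatao\ \emph{is} the content of the theorem (via Theorem~\ref{algsound}); asserting that the ``simultaneous bookkeeping'' can be made to work is precisely what requires proof, and your own text flags it as ``the hard part.'' The delicacy is real: \lna{CT} and \quatao\ are quantified over all antecedent upsets, so a non-additive $f$ engineered to break \qquatao\ can easily break them as well, and no argument is given that it does not. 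The paper resolves this by exhibiting a concrete $6$-element algebra, found by Mace4 and given by explicit operation tables in the appendix, in which \quatao\ holds, \qquatao\ fails, and \di\ fails via the values $a_4 \tto a_3$ and $a_2 \tto a_3$. Your write-up becomes a proof only once you supply and verify an analogous concrete $W$, $f$ and valuation (or, equivalently, a finite algebra).
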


\begin{proof}
We use algebraic semantics for this purpose. A smallest possible, 6-element countermodel found by the program Mace4 \cite{McCune03} is provided in the appendix. Its Heyting reduct looks as follows:
\[   
\vcenter{
    \xymatrix@-1pc{
    & & a_1 = \top & &\\
    \phi = & a_2 \ar@{-}[ur] & & a_4  \ar@{-}[ul] & = \psi \\
    & & a_5 \ar@{-}[ul] \ar@{-}[ur] & a_3 \ar@{-}[u]  & = \chi \\
    & & a_0 = \bot \ar@{-}[u] \ar@{-}[ur] &
      } }
 \]     
 Some key fact about the definition of $\tto$ are:
 \begin{itemize}
 \item For every $a \neq \top$, $\top \tto a = \bot$; 
 \item $a_3 \tto a_2 = \bot$;
 \item $a_5 \tto \bot = a_5 \tto a_3  = a_4 \tto a_3  = a_4$;
 \item $a_2 \tto \bot = a_2 \tto a_3 = a_5$;
 \item $a_2 \tto a_4 = a_2 \tto a_5 = a_2$.
 \end{itemize} 
 Note that via $a_4 \tto a_3$ and $a_2 \tto a_3$ one can disprove $\di$.
\end{proof}







 
 
 \section{Explicit Fixpoints and the Beth Property} \label{sec:beth}
 
 It has been shown first by Maximova \cite{Maksimova1989} that explicit fixpoints yield the Beth definability property and subsequent literature provides further investigation of this phenomenon \cite{ArecesHJ98,hoog:defi01}. For constructive $\tto$-logics, this argument has already been made by Iemhoff et al. \cite[\S\ 4.3]{iemh:prop05}. Hence, we simply state as an observation that any logic which extends either $\iam\thad\js$ or $\loglg = \iam\thad\jv$ enjoys this property.
 
 \section{Conclusions and Future Work} \label{sec:conclusions}
 
 This paper has studied the surprisingly complex first stage of construction of guarded fixpoints: for formulas whose principal connective is $\tto$ itself. Even its restricted scope has left some questions unanswered.   
 In follow-up work, 
we will investigate the question of how to extend
the definability result from a restricted class of formulas to a wider one, including the computational cost of such extensions; syntactic setup necessary for that purpose (inspired by and generalizing that of  \cite{dejo:expl91}; the issue of commuting resulting fixpoints operators with substitutions has intriguing Beck-Chevalley aspects) would distract too much from the main goal of the present paper. 
In particular, we have mentioned that classically, it is possible to use  elimination of guarded fixpoints for elimination of positive  fixpoints of ordinary $\mu$-calculus  \cite{bent:moda06,viss:lobs05}. Furthermore,  is possible to define a portmanteau notion of \emph{semipositive} formulas (where every occurrence of $p$ is either guarded or positive), which in general allow explicit \emph{locally minimal} fixpoints \cite{viss:lobs05}. It is possible to generalize such results to the intuitionistic setting, but the techniques and results involved are more demanding. 




\bibliographystyle{alpha}
\bibliography{provintconf,loebmuintlewis,intmodlewis}

\appendix
\ifconf

\section*{Proof of Lemma \ref{sub1sub2}}

\begin{proof}
We prove {\sf Sub}1 by induction on $\chi$. The only interesting case is where $\chi = \nu_0 \tto \nu_1$. By the Induction Hypothesis,  for $i=0,1$, we have:
\[\iam\thad \quatb \vdash \dotbox(\phi \iff \psi) \to (\nu_i\phi \iff \nu_i\psi).\]
It follows that: \[\iam\thad \quatb \vdash \dotbox(\phi \iff \psi) \to \opr(\nu_i\phi \iff \nu_i\psi)\]
 and, hence that:
\[\iam\thad \quatb \vdash \dotbox(\phi \iff \psi) \to (\nu_i\phi \ifff \nu_i\psi).\]
The desired result is now immediate. \neskip

We prove {\sf Sub2}. Suppose $\chi = \nu_0 \tto \nu_1$. We have, by {\sf Sub1}, 
\[ \iam\thad \quatb \vdash \dotbox(\phi \iff \psi) \to (\nu_i\phi \iff \nu_i\psi).\]
Hence, 
\[ \iam\thad \quatb \vdash \opr (\dotbox(\phi \iff \psi) \to (\nu_i\phi \iff \nu_i\psi)).\]
So, 
\[ \iam\thad \quatb \vdash \opr\dotbox(\phi \iff \psi) \to \opr(\nu_i\phi \iff \nu_i\psi).\]
Ergo, since $\iam\thad \quatb \vdash \opr\phi \to \opr\dotbox \phi$, we find:
\[ \iam\thad \quatb \vdash \opr(\phi \iff \psi) \to (\nu_i\phi \ifff \nu_i\psi).\]
Thus the desired result follows for the case that $\chi = \nu_0 \tto \nu_1$.
The remaining cases are by induction on outer non-modal propositional connectives.
\end{proof}


\section*{Proof of Lemma \ref{brilsmurf}}
\begin{proof}
We use the familiar fact that we have $\quatb$ in $\iam\thad \loebb$. It is easy to see that we have the strengthened L\"ob's Rule in $\iam\thad \loebb$, to wit:
\begin{center}
if $ \iam\thad \loebb \vdash (\bigwedge_{i<n}\dotbox \phi_i \wedge \opr \psi) \to \psi$, then $ \iam\thad \loebb \vdash \bigwedge_{i<n}\dotbox \phi_i \to \psi$. 
\end{center}
We use this rule in our proof. We only prove (a), as  items (b) and (c) follow then immediately. We reason in $ \iam\thad \loebb$. Suppose $\dotbox (r\iff \chi r)$ and $\dotbox (q\iff \chi q)$ and $\opr(r\iff q)$.  Since we have $\quatb$, we also have {\sf Sub}2 and, hence, $\chi r \iff \chi q$. Thus we find $r\iff q$. By the strengthened L\"ob's Rule, we may conclude $r\iff q$ without the assumption of  $\opr(r\iff q)$.
\end{proof}


\section*{Proof of Theorem \ref{algsound}}

\begin{proof}
For the first part, i.e., soundness: closure under substitution is trivial, Modus Ponens and \ipc\ axioms are standard using the fact that our algebras have Heyting reducts. We get \tr\ via the validity of \lna{CT} and \ka\ via the validity of \lna{CK}. For $\na$, assume $\gA, v \Vdash \phi \to \psi$. By standard facts regarding Heyting algebras, $\hat{v}(\phi) \leq \hat{v}(\psi)$, hence $\hat{v}(\phi)  = \hat{v}(\phi \wedge \psi)$. By \lna{CI}, we get $\hat{v}(\phi \tto (\phi \wedge \psi)) = \top$. Now use \lna{CK} to derive that $\hat{v}(\phi \tto \psi) = \top$. \neskip

For the second part, i.e., completeness: $\Th{\Mod{\cdot}}$ is clearly a closure operator, we only need to show that whenever $\Lambda \not\vdash \phi$,  there is $\gA \in \Mod{\Lambda}$ and a valuation $v$ s.t. $\hat{v}(\phi) \neq \top$. Simply pick $\gA$  to be the Lindenbaum-Tarski algebra of formulas quotiented by $\Lambda$-provable equivalence. We only need to show that this yields a $\tto$-algebra. The Heyting part is standard. For one half of \lna{CK}, we use \ka.  For the other, we use $\tr$ and $\na$. $\lna{CT}$ directly follows by $\tr$ and $\lna{CI}$ directly follows by \na.
\end{proof}


\section*{Proof of Lemma \ref{stabilogic}}
\begin{proof}
We choose $p$ distinct from ${\sf var}(\phi) \cup {\sf var}( \psi) \cup {\sf var}(\chi)$.
\neskip 

 We note that: 
{\footnotesize
\begin{multline*}
 \stex{p}{(((\phi \tto \psi) \wedge (\psi\tto \chi)) \to (\phi \tto \chi))} = \\
  ((((p \to \stex{p}{\phi}) \tto (p\to\stex{p}{\psi}))\; \wedge \\
  ((p\to\stex{p}{\psi})\tto (p\to \stex{p}{\chi}))) \; \to \\ 
  ((p\to\stex{p}{\phi}) \tto (p\to\stex{p}{\chi}))).
\end{multline*}
} 
  So the translation of an instance of {\sf Tr} is itself an instance of {\sf Tr}. It follows that
  \[
  \Lambda \vdash p \to  \stex{p}{(((\phi \tto \psi) \wedge (\psi\tto \chi)) \to (\phi \tto \chi))}.
  \]
 
  We note that:
 {\footnotesize 
\begin{multline*}
 \stex{p}{(((\phi \tto \psi) \wedge (\phi\tto \chi)) \to (\phi \tto (\psi \wedge \chi)))} = \\
  ((((p \to \stex{p}{\phi}) \tto (p\to\stex{p}{\psi}))\; \wedge \\
  ((p\to\stex{p}{\phi})\tto (p\to \stex{p}{\chi}))) \; \to \\ 
  ((p\to\stex{p}{\phi}) \tto (p\to\stex{p}{(\psi \wedge\chi)}))).
\end{multline*} \neskip 
}
  Over \iam, $(p\to\stex{p}{(\psi \wedge\chi)})$ is equivalent to
  \[
  (p\to \stex{p}{\psi}) \wedge (p\to \stex{p}{\chi}).
  \]
    So the translation of an instance of ${\sf K}_{\tt a}$  is, modulo \iam-provability, itself an instance of ${\sf K}_{\tt a}$. It follows that \[\Lambda \vdash p \to  \stex{p}{(((\phi \tto \psi) \wedge (\phi\tto \chi)) \to (\phi \tto (\psi\wedge\chi)))}.\]
\neskip

Suppose $\phi$ and $\phi \to \psi$ are in $\mathfrak S(\Lambda)$. Then, $\Lambda \vdash p \to \stex{p}{\phi}$ and $\Lambda \vdash p \to \stex{p}{(\phi\to \psi)}$. Since $\stex{p}{(\cdot)}$ commutes with the propositional connectives, we find $\Lambda \vdash p \to \stex{p}{\psi}$. \neskip

Suppose $(\phi \to \psi) \in \mathfrak S(\Lambda)$. Then we have $\Lambda \vdash p \to \stex{p}{(\phi\to \psi)}$. It follows that 
\[
\Lambda \vdash (p \to \stex{p}{\phi}) \to (p \to \stex{p}{\psi}). 
\]
Hence, $\Lambda \vdash (p \to \stex{p}{\phi}) \tto (p \to \stex{p}{\psi})$. Ergo, $\Lambda \vdash \stex{p}{(\phi\tto \psi)}$, and, \emph{a fortiori}, $\Lambda \vdash p \to \stex{p}{(\phi\tto \psi)}$. Hence, $(\phi \tto \psi)\in \mathfrak S(\Lambda)$.\neskip

Suppose $\phi\in \mathfrak S(\Lambda)$. We want to show that $\phi[q:=\psi]$ in $\Lambda$. We can arrange that $p$ is distinct from $q$. We now prove by induction on $p$-free formulas $\nu$ that 
\[
\stex{p}{(\nu[q:=\psi])} = (\stex{p}{\nu})[q := \stex{p}{\psi}].
\]
 Since  $\phi\in \mathfrak S(\Lambda)$, we have $\Lambda \vdash p \to \stex{p}{\phi}$. Hence,  
 \[
 \Lambda \vdash (p \to \stex{p}{\phi})[q := \stex{p}{\psi}],
 \]
  and so $\Lambda \vdash p \to (\stex{p}{\phi}[q := \stex{p}{\psi}])$. It follows that 
  \[
  \Lambda \vdash p \to \stex{p}{(\phi[q := \psi])}.
  \]
   Thus, $(\phi[q := \psi]) \in \mathfrak S(\Lambda)$.
\end{proof}


\section*{Proof of Lemma \ref{stabilinte}}

\begin{proof}
(i) is trivial. \neskip

We treat (ii). Suppose $p$ does not occur in $\phi$. Suppose $\phi \in \mathfrak S(\Lambda)$. Then, $\Lambda \vdash p \to\stex{p}{\phi}$. So, $\Lambda \vdash (p \to\stex{p}{\phi})[p:= \top]$. It is easy to see that $(p \to\stex{p}{\phi})[p:= \top]$ is equivalent to $\phi$ over \iam. \neskip

We treat (iii). By (ii) it is sufficient to show that $\mathfrak S\mathfrak S(\Lambda) \supseteq \mathfrak S(\Lambda)$. Let $p$ and $p'$ be distinct variables not in $\phi$. Suppose $\phi\in \mathfrak S(\Lambda)$. Then, $\Lambda \vdash p \to \stex{p}{\phi}$. It follows that  $\Lambda \vdash (p \to \stex{p}{\phi})[p := (p\wedge p')]$. We easily see that,  over \iam, $(p \to \stex{p}{\phi})[p := (p\wedge p')]$ is equivalent to $p'\to \stex{p'}{(p \to \stex{p}{\phi})}$.
\end{proof}


\section*{Proof of Theorem \ref{grotesmurf}}
\begin{proof}
We have $\opr\chi\top \vdash_{\iglam} \theta$. Hence, $\opr\chi\top \vdash_{\iglam} \dotbox (\top \iff \theta)$. So, it follows that $\opr\chi\top \vdash_{\iglam} \chi\top \iff \chi\theta$. Hence, by {\sloebra}, (a) $\vdash_{\iglam} \chi\top \ifff \chi\theta$. \neskip

We have $\opr \psi\opr\chi\top  \vdash_{\iglam} \opr\chi\top \iff \theta$. (We note that this step already works in \iam.)  So,  $\opr\psi\opr\chi\top \vdash_{\iglam} \psi\opr\chi\top \iff \psi\theta$. Thus, by {\sloebra}, we find (b): $ \vdash_{\iglam} \psi\opr\chi\top \ifff \psi\theta$. \neskip

Combining (a) and (b), we now have:
\qedright 
\begin{eqnarray*}
\iglam \vdash \theta & \iff & (\psi\opr\chi\top \tto \chi\top)\\
& \iff & (\psi\theta \tto \chi\theta) \\
& \iff & \phi\theta
\end{eqnarray*}
\end{proof}


\section*{Proof of Theorem \ref{wcircderiv}}

\begin{proof}
We work in $\iam \thad \loebb \thad \weakst$. \neskip

 Let $\beta := (\phi \wedge (\phi \tto \psi))$. We want to show that we have $(\beta \tto \psi) \to (\phi \tto \psi)$. Assume \neskip
 
 (a) $\opr((\beta \tto \psi) \to (\phi \tto \psi))$ \neskip
 
We apply {\weakst} with $\beta$ in the role of $\phi$ and $\psi$ in the role of $\psi$ obtaining: 
 \[ (\text{b})\;\; (\beta \tto \psi) \to (((\beta \tto \psi) \to \beta) \tto \psi).\]  
 By (a) we have:
 \begin{eqnarray*}
((\beta \tto \psi) \to \beta) & \ifff & ((\phi \tto \psi) \to (\phi \wedge (\phi\tto\psi))) \\
 & \ifff & ( (\phi \tto \psi) \to \phi) 
 \end{eqnarray*}
So (b) gives us: 
 \[ (\text{c})\;\;  (\beta \tto \psi) \to (((\phi \tto \psi) \to \phi) \tto \psi).\] 
 From (c) it is immediate that $(\beta \tto \psi) \to (\phi \tto \psi)$. \neskip 
 
 Finally, we apply L\"ob's Rule and we are done.
 \end{proof}


\section*{Proof of Theorem \ref{weakstderiv}}
\begin{proof}
We work in $\iam \thad\weako$. We will use that we have {\loebb} in this theory. Let $\alpha := (\phi \tto \psi) \to \phi$. We want to show that $(\phi \tto \psi) \to (\alpha \tto \psi)$. Assume \neskip 
 
 (a) $\opr((\phi \tto \psi) \to (\alpha \tto \psi))$ \neskip
 
 We apply \weako\ with $\alpha$ in the role of $\phi$ and $\psi$ in the role of $\psi$ obtaining: 
 \[ (\text{b})\;\; ((\alpha \wedge (\alpha \tto \psi))  \tto \psi) \to (\alpha \tto \psi).\]  
 By (a) we have:
 \begin{eqnarray*}
 (\alpha \wedge (\alpha \tto \psi)) & \ifff & (((\phi \tto \psi) \to \phi) \wedge (\phi \tto \psi)) \\
 & \ifff & (\phi \wedge (\phi \tto \psi)).
 \end{eqnarray*}
So (b) gives us: 
 \[ (\text{c})\;\; ((\phi \wedge (\phi \tto \psi))  \tto \psi) \to (\alpha \tto \psi).\] 
 From (c) it is immediate that $(\phi\tto \psi) \to (\alpha \tto \psi)$. \neskip
 
 Finally, we apply L\"ob's Rule and we are done.
\end{proof}


\section*{Proof of Theorem \ref{ququatao}}

\begin{proof}
We reason in $\ia\thad\quatao$. First, \quatao\ gives us that
\begin{eqnarray*}
\phi\tto (\psi \tto \chi) & \to & \phi \tto (\phi \tto (\psi \tto \chi)). \\
\end{eqnarray*}
On the other hand, it also yields that
\begin{eqnarray*}
\phi\tto (\psi \tto \chi) & \to & \phi \tto (\psi \tto (\psi \tto \chi)). \\
\end{eqnarray*}
Now use \di\ to derive
\begin{eqnarray*}
\phi\tto (\psi \tto \chi) & \to & \phi \tto ((\phi \vee \psi)  \tto (\psi \tto \chi)). \\
\end{eqnarray*}
\dots\ and now we  use $\quatao$ again:
\qedright
\begin{eqnarray*}
\phi\tto (\psi \tto \chi) & \to & \phi \tto ((\phi \vee \psi)  \tto (\psi \tto \chi)) \\
& \to & \phi \tto ((\phi \vee \psi)  \tto ((\phi \vee \psi) \tto (\psi \tto \chi))) \\
& \to & \phi \tto (\psi  \tto (\phi \tto (\psi \tto \chi))).
\end{eqnarray*}
\end{proof}

\else\fi
\section*{Countermodel used in the proof of Theorem \ref{iamququatao}}

The description follows the input format of Mace4 \cite{McCune03}. We ignore $\vee$. The connective $\tto$ is represented as $+$, whereas $\to$ is represented as $*$. The numbers correspond to indices used in the main text.

\begin{macecode}
formulas(assumptions).
x ^ (y ^ z) = (x ^ y) ^ z.
x ^ x = x.
x ^ y = y ^ x.
x ^ 1 = x.
x ^ 0 = 0.
x * x = 1.
x ^ (x * y) = x ^ y.
y ^ (x * y) = y.
x * (y ^ z) = (x * y) ^ (x * z).
(x + y) ^ (x + z) = x + (y ^ z).
((x + y) ^ (y + z)) ^ (x + z) = (x + y) ^ (y + z).
x + x = 1.
(x + y) ^ (x + (x + y)) = x + y.
end_of_list.

formulas(goals).
(x + (y + z)) ^ (x + (y + (x + (y + z)))) = x + (y + z).
end_of_list.

formulas(mace4_clauses).
x ^ (y ^ z) = (x ^ y) ^ z.
x ^ x = x.
x ^ y = y ^ x.
x ^ 1 = x.
x ^ 0 = 0.
x * x = 1.
x ^ (x * y) = x ^ y.
x ^ (y * x) = x.
x * (y ^ z) = (x * y) ^ (x * z).
(x + y) ^ (x + z) = x + (y ^ z).
((x + y) ^ (y + z)) ^ (x + z) = (x + y) ^ (y + z).
x + x = 1.
(x + y) ^ (x + (x + y)) = x + y.
(c1 + (c2 + c3)) ^ (c1 + (c2 + (c1 + (c2 + c3)))) != c1 + (c2 + c3).
end_of_list.

interpretation( 6, [number=1, seconds=8], [

        function(c1, [ 2 ]),

        function(c2, [ 4 ]),

        function(c3, [ 3 ]),

        function(*(_,_), [
			   1, 1, 1, 1, 1, 1,
			   0, 1, 2, 3, 4, 5,
			   3, 1, 1, 3, 4, 4,
			   2, 1, 2, 1, 1, 2,
			   0, 1, 2, 3, 1, 2,
			   3, 1, 1, 3, 1, 1 ]),

        function(+(_,_), [
			   1, 1, 1, 1, 1, 1,
			   0, 1, 0, 0, 0, 0,
			   5, 1, 1, 5, 2, 2,
			   0, 1, 0, 1, 1, 0,
			   0, 1, 0, 4, 1, 0,
			   4, 1, 1, 4, 1, 1 ]),

        function(^(_,_), [
			   0, 0, 0, 0, 0, 0,
			   0, 1, 2, 3, 4, 5,
			   0, 2, 2, 0, 5, 5,
			   0, 3, 0, 3, 3, 0,
			   0, 4, 5, 3, 4, 5,
			   0, 5, 5, 0, 5, 5 ])
]).

\end{macecode}

\end{document}